\begin{document}

\title{Perturbative construction of mean-field equations \\ in extensive-rank matrix factorization and denoising}
\date{\today}
\author{Antoine Maillard$^{\star,\dagger,\diamond}$, Florent Krzakala$^{\star,\oplus}$, 
Marc M\'ezard$^{\star,\ddagger}$, Lenka Zdeborov\'a$^{\otimes}$}
\maketitle
{\let\thefootnote\relax\footnote{
$\star$ Laboratoire de Physique de l'\'Ecole Normale Sup\'erieure, PSL University, CNRS, Sorbonne Universit\'e, Universit\'e de Paris, France.\\
$\dagger$ Institute for Mathematical Research (FIM), ETH Zurich, Switzerland. \\
$\oplus$ IdePHICS laboratory, EPFL, Switzerland. \\
$\ddagger$ Department of Computing Sciences, Bocconi University, Italy.\\
$\otimes$ SPOC laboratory, EPFL, Switzerland. \\
$\diamond$ To whom correspondence shall be sent: \href{mailto:antoine.maillard@math.ethz.ch}{antoine.maillard@math.ethz.ch}.
}}

\begin{abstract}
    \noindent
    Factorization of matrices where the rank of the two factors diverges linearly with their sizes has many applications in diverse areas such as unsupervised representation learning, dictionary learning or sparse coding.
   We consider a setting where the two factors are generated from known component-wise independent prior distributions, and the statistician observes a (possibly noisy) component-wise function of their matrix product.
   In the limit where the dimensions of the matrices tend to infinity, but their ratios remain fixed, we expect to be able to derive closed form expressions for the optimal mean squared error on the estimation of the two factors.
	However, this remains a very involved mathematical and algorithmic problem.
    A related, but simpler, problem is extensive-rank matrix denoising, where one aims to reconstruct a matrix with extensive but usually small rank from noisy measurements.
    In this paper, we approach both these problems using high-temperature expansions at fixed order parameters. 
    This allows to clarify how previous attempts at solving these problems failed at finding an asymptotically exact solution.
    We provide a systematic way to derive the corrections to these existing approximations, taking into account the structure of correlations particular to the problem.
    Finally, we illustrate our approach in detail on the case of extensive-rank matrix denoising.
    We compare our results with known optimal rotationally-invariant estimators, and show how exact asymptotic calculations of the minimal error can be performed using extensive-rank matrix integrals.
\end{abstract}

\tableofcontents

\section{Introduction}\label{sec:introduction}

\subsection{Motivation}\label{subsec:motivation}

The matrix factorization problem
is a very generic problem that appears in a range of  contexts. The problem is to reconstruct two matrices $\bF^\star$ and $\bX^\star $, knowing only a noisy component-wise function of their product $\bF^\star\bX^\star$. In the Bayesian setting that we adopt here, one assumes some prior knowledge of the probability distributions from which $\bF^\star$ and $\bX^\star $ have been generated. Depending on the context, this prior will enforce  some specific
requirements like sparsity, or non-negativity. Applications that can be formulated as matrix factorization include dictionary learning or sparse coding
\cite{olshausen1996emergence,olshausen1997sparse,kreutz2003dictionary,mairal2009online},
sparse principal component analysis \cite{zou2006sparse},
blind source separation \cite{belouchrani1997blind}, low rank matrix
completion \cite{candes2009exact,candes2010power}, or robust
principal component analysis \cite{candes2011robust}. 

\medskip\noindent
Here, following exactly the problem setting of \cite{kabashima2016phase}, we shall focus on the case of synthetic data, and inquire about the statistical limit of recovering matrices $\bF^\star$ and $\bX^\star$ from the observation of a function of their noisy product.
While this problem has been solved in detail in the case in which $\bF^\star, \bX^\star$ are low-rank (see e.g.\ \cite{baik2005phase,el2018estimation,lesieur2017constrained,donoho2018optimal}),
the case of extensive rank, where all dimensions of the matrices $\bF^\star$ and $\bX^\star $ go to infinity (with fixed ratios), is a much more challenging problem.
In this limit, we are interested in two types of questions:
\begin{itemize}
    \item[$1.$] Finding the theoretical limits of the recovery of $\bF^\star$ and $\bX^\star $, in particular understand when it is possible.
    \item[$2.$] Designing algorithms for this inference, and understanding their own limits.
\end{itemize}
These challenges were addressed in \cite{kabashima2016phase}, and related message-passing algorithms were developed in \cite{parker2014bilinear,parker2014bilinear2,zou2021multi,lucibello2021deep}.
We shall argue, however, that these previous works actually neglected some relevant correlations and do not provide a solution that is exact in the limit mentioned above.
In this work we propose a systematic way to approach this problem with a high-temperature type of expansion at fixed order parameter.
While the second order of this expansion gives back the previous result of \cite{kabashima2016phase}, we show that higher order corrections are relevant and cannot be neglected.

\medskip\noindent
We validate our approach on the special case corresponding to Gaussian priors on $\bF^\star$ and $\bX^\star$ and Gaussian additive noise on the observation of the product.
This setting can be solved exactly using high-dimensional ``HCIZ'' matrix integrals \cite{harish1957differential,itzykson1980planar,matytsin1994large}, or by leveraging known results on the denoising of rotationally invariant matrices \cite{bun2016rotational}.

\medskip\noindent
The high-temperature expansions presented in this paper open the way to a systematic construction of the mean-field ``TAP-like'' equations for this problem, which could lead, with a proper iteration scheme, to an efficient message-passing algorithm.
In future studies, it will also be interesting to make the contact between our high-temperature approach and the mean-field replica approach or the cavity method, which studies the statistical properties of the solutions to the TAP equations.
All these approaches should eventually converge to a unified understanding of extensive-rank matrix factorization, and hopefully to efficient message-passing algorithms.

\subsection{Setting of the problem}\label{subsec:setting}

We consider the matrix factorization problem in the extensive-rank setting, which we define as the following inference problem:
\begin{custommodel}{$\bF \bX$}[Extensive-rank matrix factorization]\label{model:extensive_factorization_fx}
    \noindent
    Consider $n,m,p\geq 1$. Extensive-rank matrix factorization is the inference
    of the matrices $\bF^\star \in \bbR^{m \times n}$ and $\bX^\star \in \bbR^{n \times p}$ from the observation of $\bY \in \bbR^{m \times p}$, generated as:
    \begin{align*}
        Y_{\mu l} &\sim P_{\rm out}\Big(\cdot\Big|\frac{1}{\sqrt{n}} \sum_{i=1}^n F^\star_{\mu i} X^\star_{il} \Big) , \qquad 1 \leq \mu \leq m, \quad 1 \leq l \leq p.
    \end{align*}
    We assume that the matrix elements of $\bF^\star$ and $\bX^\star$
    are both generated as i.i.d.\ random variables, with respective prior distributions $P_F$ and $P_X$, both having zero mean and all moments finite.

    \medskip\noindent
    We consider this inference problem in the high-dimensional (or \emph{thermodynamic}) limit, i.e.\ we assume 
$n,m,p \to \infty$ with finite limit ratios $m/n \to \alpha > 0$ and $p/n \to \psi > 0$. 
\end{custommodel}
\noindent
In order to estimate $\bF^\star$ and $\bX^\star$ the statistician can use  the \emph{posterior distribution}, also referred to as \emph{Gibbs measure} in the statistical physics literature\footnote{We will generically use greek indices $\mu,\nu$ for indices between $1$ and $m$, while latin $i,j,k$ indices will run from $1$ to $n$, and the $l$ index between $1$ and $p$.}:
\begin{align}\label{eq:gibbs_extensive_factorization_fx}
    P_{\bY,n}(\rd \bF,\rd \bX) &\equiv \frac{1}{\mathcal{Z}_{\bY,n}} \prod_{\mu,i} P_F(\rd F_{\mu i})\prod_{i,l} P_X(\rd X_{il}) \prod_{\mu,l} P_{\rm out}\Big(Y_{\mu l}\Big| \frac{1}{\sqrt{n}} \sum_i F_{\mu i} X_{il}\Big).
\end{align}
We assume that she/he knows the distributions $P_\mathrm{out},P_F,P_X$; this is known as the \emph{Bayes-optimal setting}.
In this case, it is well known that the mean under the posterior distribution of eq.~\eqref{eq:gibbs_extensive_factorization_fx} is the information-theoretically optimal estimator of $\bF^\star$ and $\bX^\star$. 

\medskip\noindent
In the thermodynamic limit, we can define the \emph{single-instance free entropy} $\Phi_{\bY,n}$\footnote{We normalize $\Phi_{\bY,n}$ by the total number $n(m+p)$ of variables to infer, while the normalization of \cite{kabashima2016phase} is $n^2$.}
of the system as:
\begin{align}\label{eq:def_phi_fx}
    \Phi_{\bY,n} &\equiv \frac{1}{n(m+p)} \ln \int \prod_{\mu,i} P_F(\rd F_{\mu i})\prod_{i,l} P_X(\rd X_{il}) \prod_{\mu,l} P_{\rm out}\Big(Y_{\mu l}\Big| \frac{1}{\sqrt{n}} \sum_i F_{\mu i} X_{il}\Big).
\end{align}
The averaged limit free entropy is denoted $\Phi \equiv \lim_{n \to \infty} \EE_\bY \, \Phi_{\bY,n}$.
We finally define the asymptotic minimal mean squared error (or MMSE) of the matrices $\bF$ and $\bX$ as
:
\begin{align}\label{eq:def_mmses_factorization}
    \begin{dcases}
    \mathrm{MMSE}_\bF&\equiv \lim_{n\to\infty} 
    \frac{1}{nm} \sum_{\mu,i} \EE_{\bY} \int P_F(\rd\bF^\star) P_X(\rd\bX^\star) \Big( \langle 
    F_{\mu i}
    \rangle - F^\star_{\mu i}\Big)^2, \\
    \mathrm{MMSE}_\bX &\equiv \lim_{n\to\infty} 
    \frac{1}{np} \sum_{i,l} \EE_{\bY} \int P_F(\rd\bF^\star) P_X(\rd\bX^\star) \Big( \langle 
    X_{i l}
    \rangle - X^\star_{i l} \Big)^2,
    \end{dcases}
\end{align}
where the bracket denotes an average with respect to the measure  $P_{\bY,n}(\rd \bF,\rd \bX)$ of eq.~\eqref{eq:gibbs_extensive_factorization_fx}.

\medskip\noindent
\textbf{Permutation symmetry --}
Note that since $\bF^\star, \bX^\star$ are generated i.i.d., there is a natural symmetry of the problem, 
in the sense that one can only recover them up to a common permutation $\{i \to \sigma(i)\}$ of the columns of $\bF^\star$ and the rows of $\bX^\star$. 
Therefore a meaningful definition of the MMSEs of eq.~\eqref{eq:def_mmses_factorization} 
requires to add an infinitely small side information to the system, breaking this symmetry. 
For lightness of the presentation, and since we will not evaluate numerically the factorization MMSEs, we 
do not introduce this technicality here.
Note that this issue also arises in low-rank matrix factorization, and is treated in the same way \cite{lesieur2017statistical}.
Another way to deal with this symmetry is to compare quantities derived from $\bF^\star, \bX^\star$ that are invariant under said symmetry.

\medskip\noindent
We shall also consider a symmetric version of this problem, where the matrix has been generated as  an empirical covariance matrix.
\begin{custommodel}{$\bX \bX^\intercal$}[Symmetric extensive-rank matrix factorization]\label{model:extensive_factorization_xx}
    \noindent
    Consider $n,m \geq 1$. Symmetric extensive-rank matrix factorization is defined as the inference
    of the matrix $\bX^\star \in \bbR^{m \times n}$ from the observation of $\bY$ generated as:
    \begin{align*}
        Y_{\mu \nu} &\sim P_{\rm out}\Big(\cdot\Big|\frac{1}{\sqrt{n}} \sum_{i=1}^n X^\star_{\mu i} X^\star_{\nu i} \Big) , \qquad 1 \leq \mu < \nu \leq m.
   \end{align*}
    We also assume that the elements of $\bX^\star$ are generated i.i.d.\ from a prior distribution $P_X$.
    We consider this inference problem in the thermodynamic limit, where $n,m \to \infty$ with finite limit ratio $m/n \to \alpha > 0$.
\end{custommodel}
As in the non-symmetric case, we can define both the posterior distribution of $\bX$ and the single-graph free entropy as:
\begin{subnumcases}{}
\label{eq:def_gibbs_xx}
    P_{\bY,n}(\rd \bX) \equiv \frac{1}{\mathcal{Z}_{\bY,n}} \prod_{\mu,i} P_X(\rd X_{\mu i}) \prod_{\mu < \nu} P_{\rm out}\Big(Y_{\mu \nu}\Big| \frac{1}{\sqrt{n}} \sum_i X_{\mu i} X_{\nu i}\Big), &\\
    \label{eq:def_phi_xx}
    \Phi_{\bY,n} \equiv \frac{1}{nm} \ln \int \prod_{\mu,i} P_X(\rd X_{\mu i}) \prod_{\mu < \nu} P_{\rm out}\Big(Y_{\mu \nu}\Big| \frac{1}{\sqrt{n}} \sum_i X_{\mu i} X_{\nu i}\Big). &
\end{subnumcases}
And the asymptotic MMSE on the matrix $\bX$ is defined as in the non-symmetric model (again assuming that we add an infinitely small side-information 
breaking the permutation symmetry between the columns of $\bX^\star$):
\begin{align}
 \mathrm{MMSE}_\bX&\equiv \lim_{n\to\infty} 
 \frac{1}{nm} \sum_{\mu i} \EE_\bY \int P_X(\rd\bX^\star) \Big( \langle 
 X_{\mu i}
 \rangle - X^\star_{\mu i}\Big)^2,
 \end{align}
 where the bracket denotes an average with respect to the measure  $P_{\bY,n}(\rd \bX)$ of eq.~\eqref{eq:def_gibbs_xx}.
 
\medskip\noindent
\textbf{The Gaussian setting --}
In the remaining of this work, we will denote \emph{Gaussian setting} the specific models (both symmetric and non-symmetric) in which all the prior distributions and the channel distributions are zero-mean Gaussians.
We consider the priors to have unit variance and the channel to have variance $\Delta > 0$.

\subsection{A related problem: matrix denoising}\label{subsec:intro_denoising}

Matrix denoising is a fundamental problem as well, with deep connection to the estimation of large covariance matrices in statistics and data analysis \cite{bun2017cleaning}. 
In the problem of \emph{denoising}, one aims at reconstructing a matrix $\bY^\star$ given by
\begin{align}
   Y_{\mu l}^\star =  \frac{1}{\sqrt{n}} \sum_{i=1}^n F^\star_{\mu i} X^\star_{il}
\end{align}
from noisy measurements of its elements, $ Y_{\mu l} \sim P_{\rm out} ( Y_{\mu l}| Y_{\mu l}^\star )$.
The statistician knows that the matrix $\bY^\star$ was generated as a product of $\bF^\star$ and $\bX^\star $, and while she/he knows the prior distribution of $\bF^\star$ and $\bX^\star $, she/he is only interested in reconstructing their product $\bY^\star$.  
If we denote by $\bG$ the estimate of $\bY^\star$, the posterior distribution $P_{\bY,n}^{\rm den}$ for denoising is
\begin{align}\label{eq:gibbs_denoising_fx}
    P_{\bY,n}^{\rm den}(\rd \bG) &\equiv \Bigg\{\int \rd \bF \rd \bX 
     P_{\bY,n}(\rd \bF,\rd \bX) \; \prod_{\mu,l} \delta\Big(G_{\mu l}-\frac{1}{\sqrt{n}} \sum_i F_{\mu i} X_{il}\Big)\Bigg\} \rd \bG.
\end{align}
This posterior is closely related to  the posterior $P_{\bY,n}(\rd \bF,\rd \bX) $ of matrix factorization defined in 
eq.~\eqref{eq:gibbs_extensive_factorization_fx}. 
In particular, these two posterior distributions share the same partition function ${\mathcal{Z}_{\bY,n}}$ and free entropy $\Phi_{\bY,n}$.
In a sense they are governed by the same measure, the difference being that the denoising statistician considers only observables that are built from the product $\bF \bX$.

\medskip\noindent
Similarly, in the symmetric setting, the denoising problem aims at reconstructing the product $\bX^\star (\bX^\star)^\intercal$
so that the denoising posterior is:
\begin{align}\label{eq:gibbs_denoising_xx}
    P_{\bY,n}^{\rm den}(\rd \bG) &\equiv \Bigg\{\int \rd\bX 
     P_{\bY,n}(\rd \bX) \; \prod_{\mu < \nu} \delta\Big( G_{\mu \nu}-\frac{1}{\sqrt{n}} \sum_i X_{\mu i} X_{\nu i}\Big)\Bigg\} \rd \bG,
\end{align}
which again can be obtained from eq.~\eqref{eq:def_gibbs_xx}. In the random matrix literature $\bX^\star (\bX^\star)^\intercal$ would often be called the Wishart matrix and we will hence denote this problem as denoising of a Wishart matrix. This case is particularly important as it corresponds to the problem of "cleaning" empirical correlation matrices.

\subsection{Related works}\label{subsec:related_works}

\textbf{Previous approximations to extensive-rank matrix factorization --}  
Beyond matrix factorization, many inference problems can be formulated in the statistical physics language. 
The large range of tools developed in the framework of statistical physics of disordered systems can then be used to tackle these models in the high-dimensional (or thermodynamic) limit.
This has been attempted for the extensive rank matrix factorization problem with generic priors and output channels in a series of papers: \cite{sakata2013statistical,krzakala2013phase,kabashima2016phase} applying the replica method to the problem.
However, as we argue in detail in Section~\ref{subsec:previous_approaches}, these works only provide an approximation, that is actually not exact in the thermodynamic limit due to some correlations between the variables that have been  neglected in these works.

\medskip\noindent
A parallel line of work derived ``approximate message-passing types'' of algorithms for the extensive rank matrix factorization problem \cite{parker2014bilinear,parker2014bilinear2}.
By analogy with other graphical models where these algorithms are amenable to an exact analysis via an asymptotic description known as \emph{state evolution} \cite{bayati2011dynamics,bayati2015universality,javanmard2013state,zdeborova2016statistical,gerbelot2021graph}, the authors of \cite{kabashima2016phase} derived such a state evolution, but again making the same over-simplifying approximation.
This implies that the state evolution stated in \cite{kabashima2016phase} is actually not following the behavior of the algorithm in the thermodynamic limit.
Other recent works use related algorithms and their asymptotic analysis in a multi-layer setting \cite{zou2021multi,lucibello2021deep}, 
relying essentially on the same approximation, which indicates that their analysis does not yield exact predictions in the thermodynamic limit.

\medskip\noindent
\textbf{Towards the exact solution of extensive rank matrix factorization and denoising --} 
While the present authors realized the flaws in the solution presented in \cite{kabashima2016phase} back in 2017, works aiming at fixing this problem and finding the exact mean-field solution for the matrix factorization problems are scarce.

\medskip\noindent
An important theoretical progress was made in the Ph.D. thesis of C.~Schmidt \cite{schmidt2018statistical}, who realized that the approximation of \cite{kabashima2016phase} is not asymptotically exact when transposing it 
to the simpler case of matrix denoising with Gaussian additive noise and Gaussian prior on $\bX^\star$.
The optimal error of matrix denoising for that case follows from a series of works on extensive-rank Harish-Chandra-Itzykson-Zuber (HCIZ) integrals \cite{matytsin1994large,guionnet2002large} that are expressed via the solution of a hydrodynamical problem.
In Section~\ref{sec:denoising}, we show that in some cases this hydrodynamical problem can be solved exactly by the Dyson Brownian motion, and numerically evaluated, complementing the arguments of \cite{schmidt2018statistical}.
The author of \cite{schmidt2018statistical} also attempted to correct the replica calculation of \cite{kabashima2016phase} including the missing correlations, but his proposition was also not leading to an exact solution.

\medskip\noindent
An important algorithmic version of the optimal denoising has been worked out for rotationally-invariant estimators in \cite{bun2016rotational}.
In Section~\ref{sec:denoising} we confirm that indeed the algorithm of \cite{bun2016rotational} gives exactly the error predicted from the HCIZ matrix integrals.

\medskip\noindent
Finally, another very recent work, \cite{barbier2021statistical}, came out during the completion of the present paper. 
An attempt of correcting the replica solution of \cite{kabashima2016phase} has already been provided in \cite{schmidt2018statistical}. Authors of \cite{barbier2021statistical} take this attempt further to develop
a ``spectral replica'' calculation 
for extensive-rank matrix factorization with non-Gaussian prior and (only) Gaussian channel. 
Its main result is an expression of the asymptotic free entropy, which is expressed as the function of a hydrodynamical HCIZ-like problem, 
making critical use of the results of \cite{forrester2016hydrodynamical,guionnet2021large} on the high-dimensional limit of ``rectangular'' HCIZ integrals.
While the expression obtained in \cite{barbier2021statistical} is closed and (conjectured to be) exact, a numerical solution of this hydrodynamical problem for cases of interest is yet to be provided.

\medskip\noindent
\textbf{TAP equations and Plefka-Georges-Yedidia expansion --}
An important strategy of theoretical statistical physics, pioneered in the context of spin glasses by Thouless-Anderson-Palmer \cite{thouless1977solution}, is to write mean-field equations that are exact for weakly coupled systems with infinite-range interactions.
This was achieved  in the Sherrington-Kirpatrick model \cite{sherrington1975solvable} by finding a free energy at fixed order parameters: one constrains the average magnetization of each spin $i$ to take a given value $m_i$, and writes a TAP free-energy as function of all the $m_i$'s.
This can then be optimized in order to precisely describe the physics of the system.
The cavity method \cite{mezard1986sk} can be used both to derive the TAP free energy, and to perform a statistical analysis of its solutions (by going back to cavity fields).
In more general models where the variables are not binary, the generalization of the TAP approach \cite{mezard1989space} aims at computing the free entropy of the system (e.g.\ eqs.~\eqref{eq:def_phi_fx} and \eqref{eq:def_phi_xx}), while constraining the first and second moments of the fields (e.g.\ of $\{X_{\mu i}\}$ in eq.~\eqref{eq:def_gibbs_xx}).
These moments thus become parameters of the resulting ``TAP'' free entropy, and extremizing over these variables leads to the celebrated \emph{TAP equations}.
These equations can be turned into algorithms, in which one seeks a solution of the TAP equations by iterating them, using appropriate time-iteration schemes \cite{bolthausen2014iterative}.

\medskip\noindent
The TAP line of approach has led to prolific developments in increasingly complex models.
For instance, when considering correlated data structures, such improvements include adaptive TAP (adaTAP) \cite{opper2001adaptive,opper2001tractable}, Expectation-Consistency (EC) \cite{minka2001expectation,opper2005expectation}, as well as recent algorithmic approaches based on approximate message-passing algorithms \cite{rangan2011generalized,donoho2013information,rangan2017vector,schniter2016vector,opper2016theory}.

\medskip\noindent
In a recent work \cite{maillard2019high}, all these approaches have been shown to lead to equivalent approximation schemes for rotationally-invariant data structures in finite-rank models.
 Moreover, it was shown that they could be understood in terms of another technique in the theoretical physics toolbox, namely \emph{high-temperature expansions at fixed order parameters}. 
These expansions were introduced by Plefka \cite{plefka1982convergence} for the Sherrington-Kirkpatrick (SK) model, and then amply generalized by Georges and Yedidia \cite{georges1991expand}.
In the following of this paper we shall denote them as \emph{Plefka-Georges-Yedidia} (or PGY) expansions.
The core of this method is to systematically compute high-temperature expansions of the free entropy of the system, while constraining the first and second moments of the fields.
This provides a systematic and controllable way to derive the TAP free energy and the TAP equations.
An important aim of the present work is to apply these approaches to the extensive-rank matrix factorization problem.
Importantly, we will see that in Model~\ref{model:extensive_factorization_xx}, this PGY expansion turns out to be an expansion in powers of the parameter $\alpha = m/n$.

\subsection{Organization of the paper and main results}\label{subsec:organization_main_results}

Let us describe the structure of the paper.
\begin{itemize}[leftmargin=*]
    \item Section~\ref{sec:plefka} is dedicated to the detailed derivation of the PGY ``high-temperature'' expansion 
    at fixed order parameters for extensive-rank matrix factorization.
    We show that the second order of the expansion gives back the results of the previous approach of \cite{kabashima2016phase}, but relevant new terms appear already at order $3$ for the symmetric problem, 
    and order $4$ for the non-symmetric.
    We also comment in detail on the type of correlations that were neglected in \cite{kabashima2016phase}.
    In the majority of this section we focus on the symmetric model, and we end the section by detailing the PGY expansion for the non-symmetric matrix factorization problem.
    \item In Section~\ref{sec:denoising}, we consider the simpler problem of extensive-rank matrix denoising with Gaussian additive noise. 
    First, we show that exact calculations of the minimal error can be performed using extensive-rank spherical integrals \cite{harish1957differential,itzykson1980planar},
    and that they match the error achieved by the rotationally-invariant estimator of \cite{bun2016rotational}. As far as we know the explicit link between these approaches (one algorithmic, one asymptotic) has not been made previously. 
    Secondly, we compare these results to the predictions of the PGY expansion developed in Section~\ref{sec:plefka}, 
    adapted for the denoising problem. We show that the order-$3$ corrections found in the PGY approach and neglected in \cite{kabashima2016phase} 
    indeed match the optimal estimator, providing evidence for the correctness of our PGY-based approach. 
\end{itemize}
Finally, note that while we focus in most of the following on Model~\ref{model:extensive_factorization_xx}, 
in which many of our results can be stated in a lighter manner, 
all our techniques can be transposed to Model~\ref{model:extensive_factorization_fx}, as we show for instance 
in Section~\ref{subsec:non_symmetric}.

\medskip\noindent
The main contributions of this paper can be summarized as follows. 

\medskip\noindent
\textbf{Extensive-rank matrix factorization for generic prior and noise channel --}
We develop the Plefka-Georges-Yedidia expansion for Model~\ref{model:extensive_factorization_xx} and Model~\ref{model:extensive_factorization_fx}, 
for any priors $P_X,P_F$ and observation channel $P_\mathrm{out}$. For the symmetric model~\ref{model:extensive_factorization_xx},
the second order provides the approximation of \cite{kabashima2016phase}. However, we show that the third order of the PGY expansion provides additional non-negligible terms.
Going further, we conjecture that the PGY approach at all orders would provide exact solution to the extensive rank matrix factorization and denoising. 
The advantage of this approach with respect to the replica method stems from its relation to the TAP equations and message-passing algorithms: because of this connection, it is directly related to an algorithm that can be used to solve practical instances of the problem. 
On the other hand, the recent replica result of \cite{barbier2021statistical} might help understand the large-$n$ behavior of the solution to the TAP equations, what is called \emph{state evolution} 
in the literature \cite{bayati2011dynamics,bayati2015universality,javanmard2013state,zdeborova2016statistical,gerbelot2021graph}.
Another important aspect is that the TAP equations are conjectured for any type of output channel $P_\mathrm{out}$, while recent replica results \cite{barbier2021statistical} are restricted to Gaussian additive noise.

\medskip\noindent
\textbf{Extensive-rank matrix denoising --}
We probe our theoretical calculations on the simpler problem of matrix denoising described in Section~\ref{subsec:intro_denoising}.
Our contribution are as follows:
\begin{itemize}[leftmargin=*]
    \item We derive an analytical expression for the asymptotic free entropy of the problem by using 
    HCIZ integrals \cite{matytsin1994large}. We provide an explicit solution to the optimization problem resulting from these integrals, 
    and show both analytically and numerically that it agrees with the performance of the rotationally-invariant denoising algorithm of \cite{bun2016rotational}. 
    \item We evaluate the minimum mean squared error (MMSE) of matrix denoising,
    both from our asymptotic formula based on HCIZ integrals (red lines in Fig.~\ref{fig:denoising_intro}), and from the algorithmic performance of the rotationally-invariant estimator (RIE) of \cite{bun2016rotational} (green points in Fig.~\ref{fig:denoising_intro}). 
    We compare this MMSE to the solution of \cite{kabashima2016phase}, that agrees with the second order of the PGY expansion developed in the present paper (blue points in Fig.~\ref{fig:denoising_intro}), 
    and show that it falls short of the actual MMSE.
    \item We truncate PGY expansion at third order (orange points in Fig.~\ref{fig:denoising_intro}) and show that, while we do not reach the MMSE at this order, 
    at small $\alpha$ we obtain a significant improvement with respect to the second-order truncation.
\end{itemize}

\begin{figure}[H]
    \includegraphics[width=\textwidth]{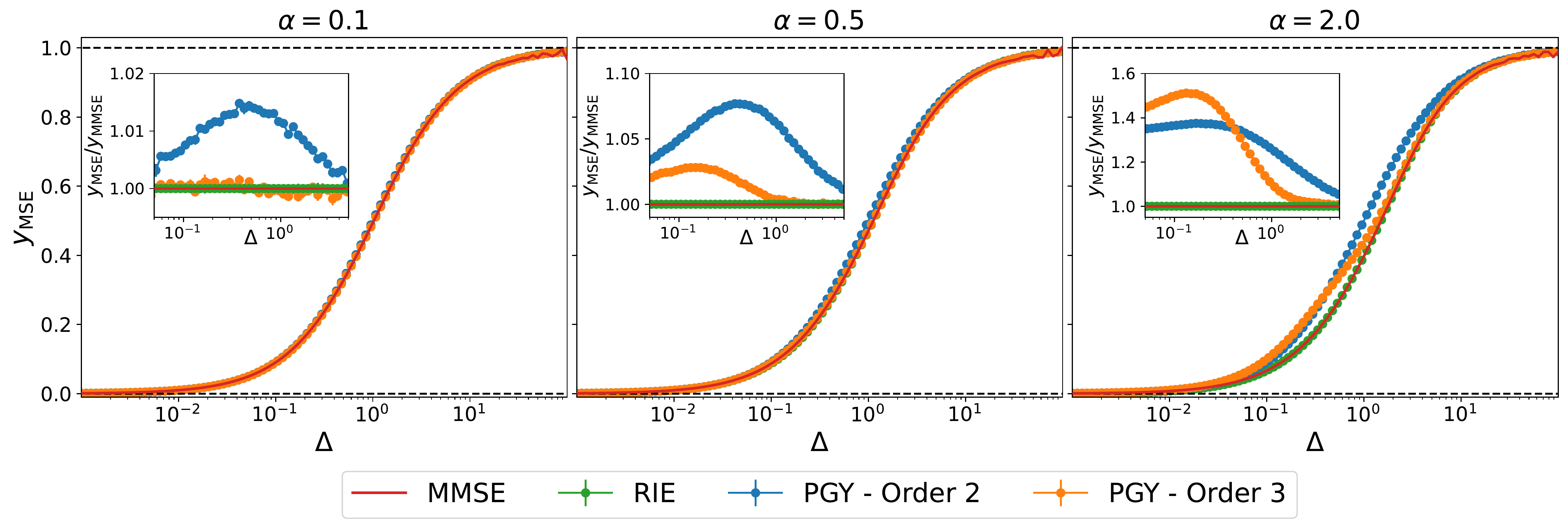}
	\caption{Denoising of a Wishart matrix as defined in eq.~\eqref{eq:gibbs_denoising_xx} observed with Gaussian noise of zero mean and variance $\Delta$. 
    The factors $\bX^\star$ are generated with i.i.d.\ ${\cal N}(0,1)$ components and $\alpha=m/n$ is their aspect ratio.
    We compare the MMSE obtained from the analytic prediction using HCIZ integrals, the RIE estimator, and the 
 estimator obtained from the PGY expansion truncated at orders $2$ and $3$. Note that 
 the PGY expansion is an expansion at small $\alpha$. In fact one sees that the third-order truncation works very well at small $\alpha$.
 The truncation at order $2$ corresponds to taking the limit $\alpha\to 0 $, and therefore to assuming
  that the underlying matrix is a Wigner matrix.
	\label{fig:denoising_intro}}
\end{figure}

\medskip\noindent
In Fig.~\ref{fig:denoising_intro} we illustrate some of the main results of this paper on the denoising problem.
The presented plots are for denoising of Wishart matrices as defined in eq.~\eqref{eq:gibbs_denoising_xx}, with Gaussian output channel of zero mean and variance $\Delta$. The factors $\bX^\star$ are generated with iid ${\cal N}(0,1)$ components. The three panels correspond to difference aspects ratios $\alpha = m/n$ of $\bX^\star$.
$\alpha \gg 1$ corresponds to denoising a small-rank matrix, a regime which we call \emph{undercomplete}.
On the other hand, the regime $\alpha \ll 1$ is called \emph{overcomplete}, and 
corresponds to denoising a matrix close to a Wigner matrix, i.e.\ with random independent components.
The four lines plotted in Fig.~\ref{fig:denoising_intro} are the following:
\begin{itemize}[leftmargin=*]
    \item \emph{MMSE}: The exact minimum mean squared error of the matrix denoising obtained analytically (as described in Section~\ref{subsec:mmse_denoising}, using the matrix integrals presented in Section \ref{subsec:fentropy_denoising}).
    For small $\alpha$ (in the overcomplete regime) the denoised matrix has almost independent elements, and thus the MMSE is very close to the MMSE of Wigner denoising.
    \item \emph{RIE}: The MSE obtained by the rotationally-invariant estimator of \cite{bun2016rotational}, presented in Section~\ref{subsec:Bouchaud_denoising}. 
    The algorithm was run on single instances of the problem with $m=3000$ and the data were averaged over 3 samples (error bars are invisible). The agreement with the MMSE is perfect. 
    \item \emph{PGY Order 2}: As explained in Section~\ref{subsec:previous_approaches} the second order of the PGY expansion for matrix denoising corresponds to ``Wigner denoising'', that is 
    it implicitly assumes that the components of the ground-truth matrix are independent.
    The approximation of \cite{kabashima2016phase} agrees exactly with the second order of the PGY expansion as we explain in Section~\ref{subsec:pgy_order2_extensive_fx}, it thus corresponds to Wigner (scalar) denoising.
    As in the overcomplete regime $\alpha \ll 1$ a Wishart matrix behaves like a Wigner matrix, this approximation is good in this domain, and deteriorates as $\alpha$ grows, as shown in Fig.~\ref{fig:denoising_intro}
    \item \emph{PGY Order 3}: This corresponds to truncating the PGY expansion at order $3$, as explained in Section~\ref{subsubsec:pgy_denoising_derivation}.
    It can be shown that increasing the order of the PGY expansion is equivalent to increasing the order considered in a small-$\alpha$ expansion of the denoiser, 
    which can be analytically computed, see eq.~\eqref{eq:gy_pgy_order_2}.
\end{itemize}
\section{Derivation of the Plefka-Georges-Yedidia expansion}\label{sec:plefka}

\subsection{PGY expansion and TAP equations in the symmetric setting}\label{subsec:plefka_symmetric}

For a given instance of the problem, a generic way to compute the single-graph free entropy of eq.~\eqref{eq:def_phi_fx} is to follow the formalism of \cite{georges1991expand} to perform the high-temperature  expansion at fixed order parameters pioneered in \cite{plefka1982convergence}. 
Here we show how this PGY method can be applied the the symmetric Model~\ref{model:extensive_factorization_xx}.

\subsubsection{Sketch of the computation}\label{subsubsec:plefka_sketch}

In order to formulate our problem in a formalism adapted to high-temperature expansions, we first introduce an auxiliary field $\bHh \equiv \bX \bX^\intercal / \sqrt{n}$ in eq.~\eqref{eq:def_phi_xx}, which defines the free entropy:
\begin{align}\label{eq:introduction_Hhat_xx}
    e^{nm\Phi_{\bY,n}} &= \int \prod_{\mu,i} P(\rd X_{\mu i}) \prod_{\mu < \nu} \Big[ \int \rd \hat{H}_{\mu \nu} P_{\rm out}(Y_{\mu \nu}| \hat{H}_{\mu \nu}) \delta\Big(\hat{H}_{\mu \nu} - \frac{1}{\sqrt{n}} \sum_i X_{\mu i} X_{\nu i}\Big)\Big].
\end{align}
Using the Fourier transform of the delta function $\delta(x) = (2\pi)^{-1} \int \rd h \, e^{i h x}$, we reach
an effective free entropy in terms of two fields $\bX,\bH$, with $\bH \equiv \{H_{\mu \nu}\}_{\mu < \nu}$ a symmetric $m \times m$ matrix (whose diagonal can be taken equal to $0$):
\begin{align}\label{eq:phi_extensive_rank_xx_hfield}
    e^{nm\Phi_{\bY,n}} =  \int \prod_{\mu < \nu} P^{\mu \nu}_{H,Y}(\rd H_{\mu \nu}) \prod_{\mu,i} P_X(\rd X_{\mu i}) e^{-H_{\rm eff}[\bX,\bH]},
\end{align}
in which we introduced an \emph{effective Hamiltonian} $H_{\rm eff}[\bF,\bX,\bH]$ and un-normalized probability distributions $\{P_{H,Y}^{\mu l}\}$ defined as:
\begin{align}
\label{eq:def_H_eff}
    H_{\rm eff}[\bX,\bH] &\equiv \frac{1}{\sqrt{n}}\sum_{\mu <\nu} \sum_{i} (iH)_{\mu \nu} \,X_{\mu i} \,X_{\nu i} \ , \\
    \label{eq:def_Pmul_extensive_xx}
    P^{\mu \nu}_{H,Y} [\rd H] &\equiv \int \frac{\rd \hat{H}}{2 \pi} e^{i H \hat{H}} P_{\rm out}(Y_{\mu \nu}|\hat{H})\ .  
\end{align}
The PGY formalism allows to compute the free entropy of the system, constraining the means and variances of each 
variable $\{X_{\mu i},H_{\mu \nu}\}$ by ``tilting'' the Gibbs measure of eq.~\eqref{eq:phi_extensive_rank_xx_hfield}.
We thus impose:
\begin{align}\label{eq:def_means_variances_extensive_xx}
    \begin{cases}
    \langle X_{\mu i} \rangle = m_{\mu i}, & \langle (iH)_{\mu \nu} \rangle = -g_{\mu \nu}, \\
    \langle X_{\mu i}^2 \rangle = v_{\mu i} + (m_{\mu i})^2, & \langle (iH)_{\mu \nu}^2 \rangle = -r_{\mu \nu} + g_{\mu \nu}^2.
    \end{cases} 
\end{align}
Here, $\langle\cdot\rangle$ denotes an average over the (now tilted) Gibbs measure.
Note that we will sometimes symmetrize the quantities involved, e.g.\ we write $g_{\mu \nu} \equiv g_{\nu \mu}$ for $\mu > \nu$, and moreover we adopt the convention $g_{\mu \mu} = r_{\mu \mu} = 0$.
The resulting free entropy is a function of the means
and variances $\{\bmm,\bv,\bg,\br\}$, on which we will then have to extremize.
The conditions of eq.~\eqref{eq:def_means_variances_extensive_xx} will be imposed via Lagrange multipliers, which we denote respectively by 
$\{\blambda,\bgamma,\bomega,\bb\}$.
The free entropy can now be expressed as a function of all the parameters $\{\bmm,\bv,\bg,\br,\blambda,\bgamma,\bomega,\bb\}$ (we still denote it $\Phi_{\bY,n}$, with a slight abuse of notations):
\begin{align}\label{eq:phi_extensive_xx_before_plefka} 
    nm\Phi_{\bY,n} &= \sum_{\mu,i}\Big[\lambda_{\mu i} m_{\mu i} + \frac{\gamma_{\mu i}}{2}\Big(v_{\mu i} + (m_{\mu i})^2\Big) \Big] + \sum_{\mu < \nu}\Big[-\omega_{\mu \nu} g_{\mu \nu} - \frac{b_{\mu \nu}}{2}\Big(-r_{\mu \nu} + g_{\mu \nu}^2\Big) \Big] \\
    & + \ln \int P_{H,Y}(\rd \bH) P_X(\rd \bX) \, e^{-S_{\rm eff}[\bX,\bH]}, \nonumber 
\end{align}
in which we introduced an \emph{effective action} $S_\mathrm{eff}$: 
\begin{align}\label{eq:def_Seff_extensive_xx}
    S_{\rm eff}[\bX,\bH] &\equiv \sum_{\mu, i} \Big[\lambda_{\mu i} X_{\mu i} + \frac{\gamma_{\mu i}}{2} X_{\mu i}^2\Big]
    + \sum_{\mu<\nu} \Big[\omega_{\mu \nu} (iH)_{\mu \nu} - \frac{b_{\mu \nu}}{2} (iH_{\mu \nu})^2\Big] + H_{\rm eff}[\bX,\bH].  
\end{align}
We note that this effective action is a sum of decoupled terms, apart from a single "cubic interaction" term which is $H_{\rm eff} $ defined in eq.~\eqref{eq:def_H_eff}. 

\medskip\noindent
We now apply the standard PGY expansion by introducing a parameter $\eta > 0$ in front of the interaction Hamiltonian, i.e.\ replacing
$H_{\rm eff} \to \eta H_{\rm eff}$ in eq.~\eqref{eq:def_Seff_extensive_xx}. We shall then expand in powers of $\eta$, and  take $\eta = 1$  in the end.
We denote the corresponding free entropy $\Phi_{\bY,n}(\eta)$.
Importantly, for $\eta = 0$ all the fields $\{X_{\mu i},H_{\mu \nu}\}$ are independent, which allows for an efficient perturbative computation.

\medskip\noindent
In order to control the PGY expansion,
we introduce two natural assumptions on the structure of the correlations of the variables:
\begin{enumerate}[label=\textbf{H.\arabic*},ref=H.\arabic*]
    \item \label{hyp:means_uncorrelated_extensive_xx} At their physical value, the variables $\{m_{\mu i}\}$ should be uncorrelated, in coherence with the fact that the elements of $\bX^\star$ are drawn i.i.d. 
    Importantly, this is not true for $\bH$: although the elements $H_{\mu \nu}$ are independent by eq.~\eqref{eq:def_Pmul_extensive_xx}, their distribution depends on $(\mu,\nu)$ and therefore their statistics (e.g.\ the variables \{$g_{\mu \nu}$\}) might be correlated through the correlation of the variables $\{Y_{\mu \nu}\}$.
    \item \label{hyp:structure_g_extensive_xx_new}
     Recall that $g_{\mu \nu}$ is the average of $(iH)_{\mu \nu}$, the conjugate variable to $\hat H_{\mu \nu} \equiv \sum_i X_{\mu i} X_{\nu i} / \sqrt{n}$.
     In particular, as one can see from a simple calculation (see eqs.~\eqref{eq:interpretation_g_r_general},\eqref{eq:interpretation_gaussian_channel} that follow), 
     $\{g_{\mu \nu}\}$ takes the form of an independent-component estimator of the error achieved by $\bomega$, which is the estimator of the Wishart matrix $\bX \bX^\intercal / \sqrt{n}$.
     In order to estimate this error, we use the observed value of $Y_{\mu \nu}$, built as an independent component-wise non-linearity applied to the Wishart matrix $\bX^\star (\bX^\star)^\intercal / \sqrt{n}$.
    Similarly, we assume 
    that \emph{the correlations of the $\bg = \{g_{\mu \nu}\}$ scale as the ones of a matrix built as the $f(\bM \bM^\intercal / \sqrt{n})$, for $\bM$ a matrix with i.i.d.\ zero-mean components, and $f$ a -- possibly random -- component-wise function.}
\end{enumerate}
Assuming \ref{hyp:means_uncorrelated_extensive_xx} and \ref{hyp:structure_g_extensive_xx_new}, we derive
in Appendix~\ref{sec_app:technicalities_plefka} the first three terms of the small $\eta$ expansion. 
Note that \ref{hyp:structure_g_extensive_xx_new} is actually not necessary for Result~\ref{result:pgy_order_3_extensive_xx}, as we stop our computation of the series at order $3$.
However we believe that understanding the structure of $\{g_{\mu \nu}\}$ might prove 
critical when tackling higher-order terms in the expansion, similarly as \cite{maillard2019high} for finite-rank problems.
We leave for future work such a derivation of a general expression at all orders, and possibly a resummation.
\begin{result}[First three orders of the PGY expansion for Model~\ref{model:extensive_factorization_xx}]\label{result:pgy_order_3_extensive_xx}
\noindent 
At leading order as $n,m \to \infty$:
\begin{align}
&nm [\Phi_{\bY,n}(\eta) - \Phi_{\bY,n}(0)] = \frac{\eta}{\sqrt{n}} \sum_{\substack{\mu < \nu \\ i}} g_{\mu \nu} m_{\mu i} m_{\nu i} -\frac{\eta^2}{2 n} \sum_{\substack{\mu < \nu \\ i}} r_{\mu \nu}[v_{\mu i} v_{\nu i} + v_{\mu i} m_{\nu i}^2  + m_{\mu i}^2 v_{\nu i}]\nonumber \\
&+ \frac{\eta^2}{4 n} \sum_{\mu,\nu,i} g_{\mu \nu}^2 v_{\mu i} v_{\nu i} + \frac{\eta^3}{6 n^{3/2}} \sum_i \sum_{\substack{\mu_1,\mu_2,\mu_3 \\ \text{pairwise distinct}}} g_{\mu_1 \mu_2} g_{\mu_2 \mu_3} g_{\mu_3 \mu_1} v_{\mu_1 i} v_{\mu_2 i} v_{\mu_3 i} 
    + \mathcal{O}(\eta^4).
\end{align}
The zero-order term $\Phi_{\bY,n}(0)$ is given by (with implicit extremization over the Lagrange multipliers)
\begin{align}\label{eq:Phi_eta_0_extensive_xx}
    nm\Phi_{\bY,n}(0) &= \sum_{\mu,i}\Big[\lambda_{\mu i} m_{\mu i} + \frac{\gamma_{\mu i}}{2}\Big(v_{\mu i} + (m_{\mu i})^2\Big) + \ln \int P_X(\rd x) \, e^{-\frac{\gamma_{\mu i}}{2} x^2 - \lambda_{\mu i} x} \Big] \\
    &+ \sum_{\mu < \nu}\Big[-\omega_{\mu \nu} g_{\mu \nu} - \frac{b_{\mu \nu}}{2}\Big(-r_{\mu \nu} + g_{\mu \nu}^2\Big)  + \ln \int \rd z \, \frac{ e^{-\frac{1}{2 b_{\mu \nu}}(z - \omega_{\mu \nu})^2}}{\sqrt{2 \pi b_{\mu \nu}}} \, P_\mathrm{out}(Y_{\mu \nu}|z) \Big]. \nonumber
\end{align}
\end{result}
We now consider the fixed point equations that result from extremizing the free entropy of Result~\ref{result:pgy_order_3_extensive_xx}.
Note that the Lagrange multipliers only appear in the $\eta=0$ part of the free entropy (this is a general consequence of the formalism of \cite{georges1991expand}),
so we can easily write the maximization over these parameters:
\begin{align}\label{eq:fixedpoint_lagrange_extensive_xx}
    \begin{cases}
        m_{\mu i} = \EE_{P(\lambda_{\mu i}, \gamma_{\mu i})} [X], & v_{\mu i} = \EE_{P(\lambda_{\mu i}, \gamma_{\mu i})} [(X-m_{\mu i})^2], \\
        g_{\mu \nu} = g_{\rm out}(Y_{\mu \nu}, \omega_{\mu \nu}, b_{\mu \nu}), & r_{\mu \nu} = - \partial_\omega g_{\rm out}(Y_{\mu \nu}, \omega_{\mu \nu}, b_{\mu \nu}).
    \end{cases}
\end{align}
In eq.~\eqref{eq:fixedpoint_lagrange_extensive_xx} we defined $P_{\lambda,\gamma}$ and $g_\mathrm{out}$ as:
\begin{align}\label{eq:def_P_gout}
\begin{dcases}
    P(\lambda,\gamma)(x) &\equiv \frac{P_X(x) e^{- \frac{1}{2} \gamma x^2-\lambda x }}{\int \rd x' P_X(x') e^{- \frac{1}{2} \gamma x'^2-\lambda x'}}, \\
g_{\rm out}(y,\omega,b) &\equiv \frac{1}{b} \frac{\int \rd z \, P_{\rm out}(y|z) \, (z-\omega)\, e^{-\frac{(z-\omega)^2}{2b}}}{\int \rd z \, P_{\rm out}(y|z) \, e^{-\frac{(z-\omega)^2}{2b}}}. 
\end{dcases}
\end{align}
We shall then focus on the maximization over the parameters $\{\bmm, \bv, \bg, \br\}$. It is important to first understand the physical interpretation of $g_{\mu\nu}$ and $r_{\mu\nu}$.

\medskip\noindent
\textbf{General channels --}
Recall that we introduced the field $\hat{\bH}$ in eq.~\eqref{eq:introduction_Hhat_xx}, 
and then its conjugate field $\bH$ in eq.~\eqref{eq:phi_extensive_rank_xx_hfield}.
Recall as well that we defined $\{\bg,\br\}$ as the first and second moments that we constraint 
on the Gibbs measure, cf eq.~\eqref{eq:def_means_variances_extensive_xx}.
Because the variables $\{H_{\mu \nu}\}$ have been introduced as auxiliary parameters, the physical interpretation of $g_{\mu\nu}$ and $r_{\mu\nu}$ is not totally obvious, but it can be found through simple integration by parts, 
starting from eq.~\eqref{eq:def_means_variances_extensive_xx}.
We get, with $\hat{\bH} \equiv \bX \bX^\intercal / \sqrt{n}$:
\begin{align}\label{eq:interpretation_g_r_general}
g_{\mu \nu}=\Bigg\langle \frac{\partial_{\hat{H}} P_{\mathrm{out}}(Y_{\mu\nu}|\hat H_{\mu\nu})}{P_{\mathrm{out}}(Y_{\mu\nu}|\hat H_{\mu\nu})}\Bigg\rangle_{\bf Y}
\hspace{0.5cm}
;
\hspace{0.5cm}
r_{\mu \nu}=\Bigg\langle \frac{\partial^2_{\hat{H}} P_{\mathrm{out}} (Y_{\mu\nu}|\hat H_{\mu\nu})}{P_{\mathrm{out}}(Y_{\mu\nu}|\hat H_{\mu\nu})}\Bigg\rangle_{\bf Y}-g_{\mu\nu}^2.
\end{align}
\noindent
\textbf{Simplification for Gaussian channels --}
For a Gaussian channel (as is the case in the denoising problem), we have
$P_\mathrm{out}(Y| \cdot) = \mathcal{N}(Y, \Delta)$, and thus one gets:
\begin{align}\label{eq:interpretation_gaussian_channel}
    \begin{dcases}
        g_{\mu \nu} &= \frac{1}{\Delta} \Big[Y_{\mu \nu} - \frac{1}{\sqrt{n}} \Big \langle \sum_{i = 1}^n X_{\mu i} X_{\nu i} \Big\rangle\Big], \\	
        r_{\mu \nu} &= \frac{1}{\Delta} - \frac{1}{\Delta^2}  \Big[\Big \langle \Big(\frac{1}{\sqrt{n}}  \sum_{i = 1}^n X_{\mu i} X_{\nu i} \Big)^2 \Big\rangle - \Big \langle \frac{1}{\sqrt{n}}  \sum_{i = 1}^n X_{\mu i} X_{\nu i} \Big\rangle^2\Big].
    \end{dcases}
\end{align}
Notice that, from these equations, the PGY estimator for the denoising of $\bY$ takes a simple form in the case of a Gaussian channel:
\begin{align}
    \label{eq:TAP_estimator}
    \hat Y_{\mu\nu} \equiv Y_{\mu\nu}-\Delta g_{\mu\nu}.
\end{align}

\subsubsection{The series at order \texorpdfstring{$2$}{2} and the approximation of \texorpdfstring{\protect\cite{kabashima2016phase}}{[KKM16]}}\label{subsec:pgy_order2_extensive_fx}

We first examine the second order of the perturbation series of Result~\ref{result:pgy_order_3_extensive_xx}.
The free entropy reads:
\begin{align}\label{eq:phi_order2_extensive_fx}
&nm [\Phi_{\bY,n}(\eta) - \Phi_{\bY,n}(0)] = \frac{\eta}{\sqrt{n}} \sum_{\substack{\mu < \nu \\ i}} g_{\mu \nu} m_{\mu i} m_{\nu i} -\frac{\eta^2}{2 n} \sum_{\substack{\mu < \nu \\ i}} r_{\mu \nu}[v_{\mu i} v_{\nu i} + v_{\mu i} m_{\nu i}^2  + m_{\mu i}^2 v_{\nu i}] \nonumber \\
&+ \frac{\eta^2}{4 n} \sum_{\mu,\nu,i} g_{\mu \nu}^2 v_{\mu i} v_{\nu i} + \mathcal{O}(\eta^3). 
\end{align}
Since they are taken at $\eta = 0$, the fixed point equations of eq.~\eqref{eq:fixedpoint_lagrange_extensive_xx} are unchanged. 
The maximization over the physical order parameters (the means and variances) can be done and yields (we indicate on the right of the equation the corresponding parameter over which we maximized):
\begin{align}
\label{eq:tap_order_2_extensive_xx}
\begin{cases}
    b_{\mu \nu} =\frac{\eta^2}{n} \sum_{i} [v_{\mu i} v_{\nu i} + v_{\mu i} (m_{\nu i})^2 + (m_{\mu i})^2 v_{\nu i}], \qquad \qquad & (r_{\mu \nu}) \\
    \omega_{\mu \nu} = \frac{\eta}{\sqrt{n}} \sum_i m_{\mu i} m_{\nu i} -  g_{\mu \nu} [b_{\mu \nu} - \frac{\eta^2}{n} \sum_i v_{\mu i} v_{\nu i} ],  \qquad \qquad & (g_{\mu \nu}) \\
    \gamma_{\mu i} = \frac{\eta^2}{n} \sum_\nu [r_{\mu \nu} v_{i\nu} + r_{\mu \nu} (m_{\nu i})^2 - g_{\mu \nu}^2 v_{\nu i}], \qquad \qquad & (v_{\mu i}) \\
    \lambda_{\mu i} = -\frac{\eta}{\sqrt{n}} \sum_\nu g_{\mu \nu} m_{\nu i} + m_{\mu i} [-\gamma_{\mu i} + \frac{\eta^2}{n} \sum_\nu r_{\mu \nu} v_{\nu i} ]. \qquad \qquad & (m_{\mu i}) \\
\end{cases}
\end{align}
Taking $\eta = 1$, the reader can check easily that the combined equations~\eqref{eq:fixedpoint_lagrange_extensive_xx} and \eqref{eq:tap_order_2_extensive_xx} 
are actually completely equivalent to the GAMP equations derived in \cite{kabashima2016phase},  replacing the notations of the variables as follows\footnote{\cite{kabashima2016phase} considers primarily the non-symmetric Model~\ref{model:extensive_factorization_fx}, 
but one can very easily transfer their equations to Model~\ref{model:extensive_factorization_xx}.}:
\begin{align*}
    b_{\mu \nu} &\to V_{\mu \nu} & \omega_{\mu \nu} &\to \omega_{\mu \nu} & g_{\mu \nu} &\to g_{\rm out}(Y_{\mu \nu}, \omega_{\mu \nu}, V_{\mu \nu})  & r_{\mu \nu} &\to -\partial_\omega g_{\rm out}(Y_{\mu \nu}, \omega_{\mu \nu}, V_{\mu \nu}) \\
    \gamma_{\mu i} &\to Z^{-1}_{\mu i} & \lambda_{\mu i} &\to -\frac{W_{\mu i}}{Z_{\mu i}} & m_{\mu i} &\to \hat{f}_{\mu i} & v_{\mu i} &\to s_{\mu i}.
\end{align*}
In conclusion, the PGY expansion truncated to order $\eta^2$ gives back exactly the stationary limit of the GAMP algorithm of \cite{kabashima2016phase}.
However, as we will see below, the higher-order corrections of order $\eta^3$ (and most probably beyond) cannot be neglected: this shows explicitly how the GAMP equations of \cite{kabashima2016phase} (and also the BiGAMP equations of e.g.\ \cite{parker2014bilinear,parker2014bilinear2,zou2021multi}, which are based on the same approximation) are missing relevant terms.

\subsubsection{The series at third order: a non-trivial correction}\label{subsubsec:higher_orders_extensive_xx}

While the order $\eta^2$ truncation of Result~\ref{result:pgy_order_3_extensive_xx} yields back the approximation of \cite{kabashima2016phase}, 
we have computed via the PGY expansion the order $\eta^3$, which we recall:
\begin{align}\label{eq:phi_order3_extensive_xx}
    \frac{1}{3! n m} \frac{\partial^3 \bPhi_{\bY,n}}{\partial \eta^3}(\eta = 0) &= \frac{1}{6 m n^{5/2}} 
    \sum_i \sum_{\substack{\mu_1,\mu_2,\mu_3 \\ \text{pairwise distinct}}} g_{\mu_1 \mu_2} g_{\mu_2 \mu_3} g_{\mu_3 \mu_1} v_{\mu_1 i} v_{\mu_2 i} v_{\mu_3 i} + \smallO_n(1).
\end{align}
A crucial observation is that in general the RHS of eq.~\eqref{eq:phi_order3_extensive_xx} is not negligible as $n \to \infty$.
This is a consequence of the diagrammatic analysis of \cite{maillard2019high}.
As an example, consider the case of a Gaussian additive noise for the observation channel.
Given that $\bY$ is the sum of a Wishart and a Wigner matrix, it is natural to assume that $\bg$ behaves like a rotationally-invariant matrix, see \ref{hyp:structure_g_extensive_xx_new}.
Assuming moreover for simplicity that the variances are all equal $v_{\mu i} = v$, Theorem~1 of \cite{maillard2019high} (under the rotation-invariance hypothesis we described) yields:
\begin{align*}
    \frac{1}{3! n m} \frac{\partial^3 \bPhi_{\bY,n}}{\partial \eta^3}(\eta = 0) &= \frac{v^3 \alpha^{3/2}}{6} c_3(\bg / \sqrt{m}) + \smallO_n(1),
\end{align*}
in which the coefficients $\{c_p(\bg / \sqrt{m})\}$ are the \emph{free cumulants} of random matrix theory, and are functions of the spectrum of $\bg / \sqrt{m}$, cf.\ Appendix~\ref{sec_app:rmt}. 
Therefore, assuming that the bulk of eigenvalues of $\bg/\sqrt{m}$ stays of order $1$ as $n \to \infty$ (which is a natural scaling given eq.~\eqref{eq:TAP_estimator}), 
this order-$3$ term gives a non-negligible contribution to the free entropy.
This indicates that the approximation of \cite{kabashima2016phase} breaks down, even in the simple case of Gaussian noise.

\medskip\noindent
The third-order correction changes the order-2 TAP equations~\eqref{eq:tap_order_2_extensive_xx} as follows :
\begin{align}
\label{eq:tap_order_3_extensive_xx}
\begin{cases}
    b_{\mu \nu} =\frac{\eta^2}{n} \sum_{i} [v_{\mu i} v_{\nu i} + v_{\mu i} (m_{\nu i})^2 + (m_{\mu i})^2 v_{\nu i}], \qquad \qquad & (r_{\mu \nu}) \\
    \omega_{\mu \nu} = \frac{\eta}{\sqrt{n}} \sum_i m_{\mu i} m_{\nu i} -  g_{\mu \nu} [b_{\mu \nu} - \frac{\eta^2}{n} \sum_i v_{\mu i} v_{\nu i} ]+\frac{\eta^3}{2n^{3/2}} \sum_{\rho} g_{\nu\rho}g_{\rho\mu} \sum_i v_{\mu i}v_{\nu i}v_{\rho i} ,  \qquad \qquad & (g_{\mu \nu}) \\
    \gamma_{\mu i} = \frac{\eta^2}{n} \sum_\nu [r_{\mu \nu} v_{i\nu} + r_{\mu \nu} (m_{\nu i})^2 - g_{\mu \nu}^2 v_{\nu i}]-\frac{\eta^3}{n^{3/2}} \sum_{\nu,\rho} g_{\mu \nu} g_{\nu\rho}g_{\rho\mu} v_{\nu i}v_{\rho i}, \qquad \qquad & (v_{\mu i}) \\
    \lambda_{\mu i} = -\frac{\eta}{\sqrt{n}} \sum_\nu g_{\mu \nu} m_{\nu i} + m_{\mu i} [-\gamma_{\mu i} + \frac{\eta^2}{n} \sum_\nu r_{\mu \nu} v_{\nu i} ]. \qquad \qquad & (m_{\mu i}) \\
\end{cases}
\end{align}
It is interesting to see the typical scaling of the correction terms of order $\eta^3$ in the thermodynamic limit: 
\begin{itemize}[leftmargin=*]
    \item As we saw in the argument above, we expect the correction to the free entropy in eq.~\eqref{eq:phi_order3_extensive_xx} to scale as $\eta^3 \alpha^{3/2}$ in the thermodynamic limit.
    \item The correction to $\omega_{\mu\nu}$ scales as $(\eta^3/n^{3/2}) \sum_{\rho} g_{\nu\rho}g_{\rho\mu} \sum_i v_{\mu i}v_{\nu i}v_{\rho i}$, 
    while the correction to $\gamma_{\mu i}$ scales as $ (\eta^3/n^{3/2}) \sum_{\nu,\rho}g_{\mu \nu} g_{\nu\rho}g_{\rho\mu} v_{\nu i}v_{\rho i} $. 
    Using similar arguments as above in the case of Gaussian additive noise, we expect that these corrections scale 
    respectively as $\eta^3 \alpha^{1/2}$ and $\eta^3 \alpha^{3/2}$ in the thermodynamic limit.
\end{itemize}
\noindent
Therefore the corrections at order $\eta^3$ are relevant and cannot be neglected, except in the small $\alpha$ limit where they become small.
We end this discussion by a few remarks on the results presented here:
\begin{itemize}[leftmargin=*]
    \item \textbf{Similarity with finite-rank problems --}
    One may notice that the first orders of Result~\ref{result:pgy_order_3_extensive_xx} are very similar
    the the TAP free entropy of simple symmetric models with rotationally-invariant couplings,
    that is derived in \cite{maillard2019high}, see eq.~(25) of the mentioned paper.
    A crucial difference is that here the role of the coupling matrix is played by $\bg$, which is itself a parameter of the TAP free entropy.
    While this limits to the first order presented in Result~\ref{result:pgy_order_3_extensive_xx}, this similarity is already striking.
    Importantly, the resummation of all orders of the PGY expansion in \cite{maillard2019high} then suggests that all orders $p \geq 3$ of the free entropy might perhaps be expressible solely in terms of the 
    \emph{eigenvalue distribution} of $\bg/\sqrt{m}$.
    \item \textbf{Nature of the order parameter --}
    The previous remark suggests that the order parameter in extensive-rank matrix factorization 
    would be a probability distribution of eigenvalues\footnote{This hint is strengthened by recent  results in \cite{barbier2021statistical} using the spectral replica method: 
    in the special case of Gaussian channels $P_\mathrm{out}$, it analytically computes the asymptotic mean free energy with the replica method, 
    and its results strongly suggest that the proper order parameter is indeed a probability measure of eigenvalues.}.
    This is an important difference from the finite-rank case: for rank-$k$ matrix factorization (or in general for rank-$k$ recovery problems), the state of the system is governed by a $k \times k$ overlap matrix.
    \item \textbf{Small $\alpha$ series --} 
    As we have seen, the corrections at order $\eta^3$ to the free entropy are of order $\alpha^{3/2}$.
    Here we remind that $\alpha$ is the ratio between the size of the observed matrix $m$ and the size of the factor $n$.
    Large $\alpha$ corresponds to the low-rank matrix estimation (or undercomplete regime) and small $\alpha$ leads to the observed matrix having elements close to i.i.d. (or overcomplete regime).
    It seems to us that the expansion in powers of $\eta$, which is the core of the PGY approach, actually turns into an expansion of the TAP equations in powers of $\alpha$. 
    While this is easily seen up to order $3$ from Result~\ref{result:pgy_order_3_extensive_xx}, 
    we have no formal proof of this statement for all orders, and leave it as a conjecture.
    \item \textbf{Iterating the equations --} 
    As thoroughly discussed in \cite{maillard2019high}, the TAP equations are in general not sufficient to obtain an algorithm with good convergence properties. 
    A classical example is given by the Generalized Vector Approximate Passing algorithm (G-VAMP) \cite{schniter2016vector}, for which the corresponding TAP equations are derived in \cite{maillard2019high}
    using PGY expansions.
    As highlighted there, the TAP equations correspond to the stationary limit of G-VAMP, however, \emph{there is no obvious iterative resolution scheme of the TAP equations that automatically gives the GVAMP algorithm}. 
    This indicates that even if one obtains all the orders of perturbations in Result~\ref{result:pgy_order_3_extensive_xx}, turning them into an efficient algorithm 
    may require further work\footnote{We unsuccessfully tried naive iteration schemes to solve eqs.~\eqref{eq:fixedpoint_lagrange_extensive_xx},\eqref{eq:tap_order_3_extensive_xx}, as we found the algorithm to always either diverge or converge to a trivial solution 
    $\bmm = 0$. We leave a more precise investigation of the numerical properties of these iterations to a future work.}. 
\end{itemize}

\subsection{Nature of the approximation in previous approaches}\label{subsec:previous_approaches}

As we have found that the conjecture of \cite{kabashima2016phase} about exactness of their solution is incorrect, it is useful to go back to it and understand where that approach failed. 
In \cite{kabashima2016phase}
the free entropy of the matrix factorization problem is estimated in the Bayes-optimal setting, in two different ways, namely via the replica method and via 
belief propagation (BP) equations. In both cases there are somewhat hidden hypotheses which we think are not valid. 
These hypotheses are also present in the derivation of the BiGAMP (and BiG-VAMP) algorithm (cf.\ e.g.\ \cite{parker2014bilinear, parker2014bilinear2,zou2021multi}).
We therefore believe that these algorithms are also not able to give an exact asymptotic computation of the marginal probabilities in this problem. 

\medskip\noindent
Let us now describe both approaches taken in \cite{kabashima2016phase}, and explain how the assumptions behind them fail, 
focusing primarily on the replica analysis performed in Section~V.B of \cite{kabashima2016phase}.
The main idea behind the replica method is to compute the \emph{quenched} free entropy from the evaluation of the moments of the partition function, using the relation \cite{mezard1987spin}:
\begin{align*}
    \lim_{n \to \infty}\frac{1}{n^2}\EE_\bY \ln \mathcal{Z}_{\bY,n} &= \frac{\partial}{\partial r} \Big[\lim_{n \to \infty} \frac{1}{n^2} \ln \EE_\bY \mathcal{Z}_{\bY,n}^r \Big]_{r=0} . 
\end{align*}
The computation of the quenched free entropy then reduces to the evaluation of the integer moments of the partition function, by analytically expanding the expression of the $r$-th moment 
to any $r > 0$.
When writing out $\EE_\bY \mathcal{Z}_{\bY,n}^r$ there naturally appears $(r+1)$ \emph{replicas} of the system, that interact via the channel distribution term, as represented in the following equation (here we consider Model~\ref{model:extensive_factorization_fx}, to 
be in the same setting as \cite{kabashima2016phase}):
\begin{align*}
    \EE_\bY \mathcal{Z}_{\bY,n}^r &= \int \rd \bY \Big[\prod_{a=0}^{r} P_F(\rd \bF^a) P_X(\rd \bX^a)\Big] \prod_{a=0}^r \Big[\prod_{\mu,l} P_{\rm out} \Big(Y_{\mu l}\Big| \frac{1}{\sqrt{n}}\sum_i F^a_{\mu i} X^a_{il}\Big)\Big].
\end{align*}
A key step in the calculation of \cite{kabashima2016phase} is the assumption that 
\begin{align}\label{eq:def_Za_mul_extensive_fx}
  Z^{a}_{\mu l} \equiv \frac{1}{\sqrt{n}} \sum_i F^a_{\mu i} X^a_{il}  
\end{align}
are multivariate Gaussian random variables. However, although 
$\{\bF^a\}$ and $\{\bX^a\}$ follow statistically independent distributions, with zero mean and finite variances, this is not enough to guarantee the Gaussianity of the $Z^{a}_{\mu l}$ variables in the high-dimensional limit.
Indeed, there is a number $\mathcal{O}(n^2)$ of variables $Z^{a}_{\mu l}$, and therefore classical central limit results cannot conclude on the asymptotic Gaussianity of the joint distribution of such variables.
In general, this Gaussianity is actually false, as can be seen e.g.\ by considering the following quantity, for a single replica:
\begin{align}\label{eq:def_L4}
        L_4 &\equiv \lim_{n \to \infty}\EE \Bigg[\frac{1}{n^3} \sum_{\mu_1 \neq \mu_2}\sum_{l_1 \neq l_2} Z_{\mu_1 l_1} Z_{\mu_1 l_2} Z_{\mu_2 l_2} Z_{\mu_2 l_1} \Bigg].
\end{align}
Let us assume that both $P_F$ and $P_X$ are standard Gaussian distributions for the simplicity of the argument. The computation of $L_4$ then 
yields $L_4 = \alpha^2 \psi^2$. However, should the \emph{joint} distribution of the $\{Z_{\mu l}\}$ converge to a multivariate (zero-mean) Gaussian distribution, 
such a distribution would satisfy by definition
\begin{align}
  \EE[Z_{\mu_1 l_1} Z_{\mu_2 l_2}] = \frac{1}{n} \sum_{i,j} \EE [F_{\mu_1 i} F_{\mu_2 j}] \EE [X_{l_1 i} X_{l_2 j}] = \delta_{\mu_1 \mu_2} \delta_{l_1 l_2} \; ,
\end{align}
and  Wick's theorem would give, wrongly, $L_4 = 0$.
Therefore the joint distribution of $\{Z^a_{\mu l}\}$ is not Gaussian.

\medskip\noindent
\textbf{The message-passing approach --}
Another approach to the problem are
the Belief-Propagation (BP) equations \cite{mezard2009information}, also referred to as the \emph{cavity method} in the physics literature \cite{mezard1986sk}.
The goal of BP is to compute the \emph{marginal} distributions of each variable in the system, 
by solving iterative equations involving probability distributions over each single variable. These probability distributions are called \emph{messages} in the BP language, 
and the fixed point of the iterative equations yields an estimate of the marginal distributions. 
While a detailed treatment of the BP derivation of \cite{kabashima2016phase} is beyond the scope of this paper, 
we have seen in the previous Section~\ref{subsec:plefka_symmetric} that  the message-passing approach of \cite{kabashima2016phase} is equivalent to neglecting (wrongly) some higher order terms in the PGY expansion.

\subsection{Non-symmetric extensive-rank matrix factorization}\label{subsec:non_symmetric}

Performing the PGY expansion for Model~\ref{model:extensive_factorization_fx} is extremely similar to the calculation we have done for Model~\ref{model:extensive_factorization_xx}:
one can introduce a field $\tilde{\bH} \equiv \bF \bX / \sqrt{n}$, and then perform the same calculations via the Fourier transform of the delta function. 
In the following of Section~\ref{subsec:non_symmetric}, we give the results of our derivation, while more details are given in Appendix~\ref{subsec_app:pgy_fx}.
We adopt similar notations to the ones of Section~\ref{subsec:plefka_symmetric}, adding additional $X,F$ subscripts to differentiate the two fields. More precisely, we impose the first and second moment constraints:
\begin{align}\label{eq:def_means_variances_extensive_fx}
    \begin{cases}
    \langle F_{\mu i} \rangle = m^F_{\mu i}, & \langle X_{il} \rangle = m^X_{i l},  \\
    \langle F_{\mu i}^2 \rangle = v^F_{\mu i} + (m^F_{\mu i})^2, & \langle X_{il}^2 \rangle = v^X_{il} + (m^X_{il})^2, \\
    \langle (iH)_{\mu l} \rangle = -g_{\mu l}, & \langle (iH)_{\mu l}^2 \rangle = -r_{\mu l} + g_{\mu l}^2.
    \end{cases} 
\end{align}
\textbf{The PGY expansion at order $\eta^4$ --}
Similarly to Result~\ref{result:pgy_order_3_extensive_xx}, we obtain the first orders of the free entropy as:
\begin{result}[First orders of the PGY expansion for Model~\ref{model:extensive_factorization_fx}]\label{result:pgy_order_4_extensive_fx}
\noindent
We have, at leading order as $n \to \infty$:
\begin{align*}
&n(m+p) [\Phi_{\bY,n}(\eta) - \Phi_{\bY,n}(0)] = \frac{\eta}{\sqrt{n}} \sum_{\mu,i,l} g_{\mu l} m^F_{\mu i} m^X_{il} -\frac{\eta^2}{2 n} \sum_{\mu,i,l} r_{\mu l}[v^F_{\mu i} (m^X_{il})^2  + (m^F_{\mu i})^2 v^X_{il}] \\
& + \frac{\eta^2}{2n} \sum_{\mu,i,l} (g_{\mu l}^2 - r_{\mu l}) v^F_{\mu i} v^X_{il} + \frac{\eta^4}{4n^2} \sum_i \sum_{\mu_1 \neq \mu_2} \sum_{l_1 \neq l_2} g_{\mu_1 l_1} g_{\mu_2 l_1} g_{\mu_2 l_2} g_{\mu_1 l_2} v^F_{\mu_1 i} v^F_{\mu_2 i} v^X_{i l_1} v^X_{i l_2}+ \mathcal{O}(\eta^5).
\end{align*}
\end{result}
Recall that the term $\Phi_{\bY,n}(0)$ contains the dependency on the channel and priors contributions, as well as the Lagrange multipliers introduced to enforce the conditions of eq.~\eqref{eq:def_means_variances_extensive_fx}. 
Its precise form is: 
\begin{align}\label{eq:Phi_eta_0_extensive_xfx}
    &n(m+p)\Phi_{\bY,n}(0) = \sum_{\mu,i}\Big[\lambda^F_{\mu i} m^F_{\mu i} + \frac{\gamma^F_{\mu i}}{2}\Big(v^F_{\mu i} + (m^F_{\mu i})^2\Big) +\ln \int P_F(\rd f) \, e^{-\frac{\gamma^F_{\mu i}}{2} f^2 - \lambda^F_{\mu i} f} \Big] \nonumber\\
    &+\sum_{i,l}\Big[\lambda^X_{ i l} m^X_{i l} + \frac{\gamma^X_{i l}}{2}\Big(v^X_{i l} + (m^X_{i l})^2\Big) +\ln \int P_X(\rd x) \, e^{-\frac{\gamma^X_{i l}}{2} x^2 - \lambda^X_{i l} x} \Big]\nonumber \\
    &+ \sum_{\mu,l}\Big[-\omega_{\mu l} g_{\mu l} - \frac{b_{\mu l}}{2}\Big(-r_{\mu l} + g_{\mu l}^2\Big)  + \ln \int \rd z \, \frac{ e^{-\frac{1}{2 b_{\mu l}}(z - \omega_{\mu l})^2}}{\sqrt{2 \pi b_{\mu l}}} \, P_\mathrm{out}(Y_{\mu l}|z) \Big].
\end{align}

\medskip\noindent
\textbf{Higher-order terms, and breakdown of previous approximations --}
As in the symmetric case, the approximation of \cite{kabashima2016phase,parker2014bilinear,parker2014bilinear2,zou2021multi} amounts to truncating the perturbation series of Result~\ref{result:pgy_order_4_extensive_fx} 
at order $\eta^2$.
However, exactly as for Model~\ref{model:extensive_factorization_xx}, the higher-order terms in Result~\ref{result:pgy_order_4_extensive_fx} are in general non-negligible. 
Under a similar hypothesis as \ref{hyp:structure_g_extensive_xx_new}, they are related to the asymptotic singular value distribution of $\bg/\sqrt{n}$, similarly to what happened in the symmetric case.
The limit of such terms
has been worked out in \cite{maillard2019high}.
Assuming e.g.\ $v^F_{\mu i} = v^F$ and $v^{X}_{il} = v^X$, we have at $\eta = 0$:
\begin{align}\label{eq:order_4_pgy_extensive_fx}
    \frac{1}{4! n (m+p)} \frac{\partial^4 \bPhi_{\bY,n}}{\partial \eta^4}(\eta = 0) &= \frac{\psi}{\alpha+\psi} \frac{(v^F)^2 (v^X)^2}{4} \Gamma_2\Big(\frac{\alpha}{\psi},\frac{\bg^\intercal \bg}{n}\Big) + \smallO_n(1).
\end{align}
The functions $\Gamma_p$ are related to the rectangular free cumulants introduced in \cite{benaych2011rectangular}, and their 
precise definition is given in \cite{maillard2019high}.
Again, in general these coefficients are non-negligible in the limit $n \to \infty$.

\medskip\noindent
For completeness, let us write the TAP equations for this problem, with the corrections arising from the order $\eta^4$.
As in the symmetric case the extremization with respect to the Lagrange multipliers only depends on the zero-th order term, and yields:
\begin{align}\label{eq:TAP_non_symmetric_order_0}
    \begin{dcases}
        m^F_{\mu i} = \EE_{P_F(\lambda^F_{\mu i}, \gamma^F_{\mu i})} [f], & v^F_{\mu i} = \EE_{P_F(\lambda^F_{\mu i}, \gamma^F_{\mu i})} [(f-m^F_{\mu i})^2], \\
        m^X_{i l} = \EE_{P_X(\lambda^X_{i l}, \gamma^X_{i l})} [x], & v^X_{i l} = \EE_{P_X(\lambda^X_{i l}, \gamma^X_{i l})} [(x-m^X_{i l})^2], \\
        g_{\mu l} = g_{\rm out}(Y_{\mu l}, \omega_{\mu l}, b_{\mu l}), & r_{\mu l} = - \partial_\omega g_{\rm out}(Y_{\mu l}, \omega_{\mu l}, b_{\mu l}), \\
    \end{dcases}
\end{align}
The fourth-order corrections affect the remaining TAP equations, which read: 
\begin{align}\label{eq:TAP_non_symmetric_order_4}
    \begin{dcases}
        b_{\mu l} &= \frac{\eta^2}{n} \sum_{i} [v^F_{\mu i} v^X_{il} + v^F_{\mu i} (m^X_{il})^2 + (m^F_{\mu i})^2 v^X_{il}], \\
        \omega_{\mu l} &= \frac{\eta}{\sqrt{n}} \sum_i m^F_{\mu i} m^X_{il} -  g_{\mu l} [b_{\mu l} - \frac{\eta^2}{n} \sum_i v^F_{\mu i} v^X_{il} ] + \frac{\eta^4}{n^2} \sum_{i,\mu',l'} g_{\mu' l} g_{\mu' l'} g_{\mu l'} v^F_{\mu i} v^F_{\mu' i} v^X_{i l} v^X_{i l'}, \\
        \gamma^F_{\mu i} &= \frac{\eta^2}{n} \sum_l [r_{\mu l} v^X_{il} + r_{\mu l} (m^X_{il})^2 - g_{\mu l}^2 v^X_{il}] - \frac{\eta^4}{n^2} \sum_{\mu'} \sum_{l_1 \neq l_2}  g_{\mu l_1} g_{\mu' l_1} g_{\mu' l_2} g_{\mu l_2} v^F_{\mu' i} v^X_{i l_1} v^X_{i l_2}, \\
        \lambda^F_{\mu i} &= -\frac{\eta}{\sqrt{n}} \sum_l g_{\mu l} m^X_{il} + m^F_{\mu i} [-\gamma^F_{\mu i} + \frac{\eta^2}{n} \sum_l r_{\mu l} v^X_{il} ],\\
        \gamma^X_{i l} &= \frac{\eta^2}{n} \sum_\mu [r_{\mu l} v^F_{\mu i} + r_{\mu l} (m^F_{\mu i})^2 - g_{\mu l}^2 v^F_{\mu i}] - \frac{\eta^4}{n^2} \sum_{l'} \sum_{\mu_1 \neq \mu_2} g_{\mu_1 l} g_{\mu_2 l} g_{\mu_2 l'} g_{\mu_1 l'} v^F_{\mu_1 i} v^F_{\mu_2 i} v^X_{i l'}, \\
        \lambda^X_{i l} &= -\frac{\eta}{\sqrt{n}} \sum_\mu g_{\mu l} m^F_{\mu i} + m^X_{i l} [-\gamma^X_{i l} + \frac{\eta^2}{n}  \sum_\mu r_{\mu l} v^F_{\mu i} ].
    \end{dcases}
\end{align}
\section{Probing our results in symmetric matrix denoising}\label{sec:denoising}

\subsection{The free entropy of factorization and denoising}\label{subsec:fentropy_denoising}

We consider the symmetric problem, specializing to the Gaussian setup in which the prior $P_X = \mathcal{N}(0,1)$, and the measurement channel $P_{\rm out} = \mathcal{N}(0,\Delta)$.
Then the partition function of both matrix factorization and denoising, defined in eqs.~\eqref{eq:def_gibbs_xx} and \eqref{eq:gibbs_denoising_xx}, is given by\footnote{Note that we added the diagonal terms $\mu = \nu$ with respect to eq.~\eqref{eq:def_gibbs_xx}. However, 
this number $\mathcal{O}(n)$ of terms does not affect the thermodynamic free entropy.}: 
\begin{align}
\label{eq:Zn_direct_quenched_start}
   \mathcal{Z}_{\bY,n} &= \int_{\bbR^{m \times n}} \frac{\rd \bX}{(2\pi)^{\frac{mn}{2}}(2\pi \Delta)^{\frac{m(m-1)}{4}}} e^{-\frac{1}{2} \underset{\mu,i}{\sum} X_{\mu i}^2} e^{-\frac{1}{4 \Delta}\underset{\mu,\nu}{\sum}\big(Y_{\mu \nu} - \frac{1}{\sqrt{n}} \underset{i}{\sum} X_{\mu i} X_{\nu i}\big)^2}.
\end{align}
The corresponding free entropy $\Phi_{\bY,n} \equiv (nm)^{-1}\ln \mathcal{Z}_{\bY,n}$ can be computed for a given signal $\bY$,  using random matrix techniques. 
We shall compute here its thermodynamic limit $\Phi_\bY \equiv \lim_{n\to\infty}\Phi_{\bY,n}$, in the generic case where the asymptotic eigenvalue distribution of $\bY / \sqrt{m}$ is well defined, and we denote it $ \rho_{\bY}$.
This computation relies on a sophisticated result on the celebrated Harish-Chandra-Itzykson-Zuber (HCIZ) integrals, 
that emerged in the theoretical physics literature \cite{matytsin1994large}, before being rigorously established \cite{guionnet2002large}.
This approach was first discussed in the context of extensive-rank matrix factorization in \cite{schmidt2018statistical}.
Here we elaborate on this work, filling in gaps in its arguments, and most importantly we work out how to evaluate analytically and numerically the corresponding equations, in order to provide explicit comparison to previous works and the PGY expansion proposed here. 
The recent and independent work \cite{barbier2021statistical} also derives asymptotic results in matrix factorization using Matytsin's results \cite{matytsin1994large}, but does not evaluate them or compare them to other approaches (beyond scalar denoising).

\myskip
\textbf{Beyond Wishart matrices --} While we focus on the denoising of Wishart matrices (since we are initially interested in the matrix factorization problem), 
we emphasize that one can easily generalize all the conclusions of Sections~\ref{subsec:fentropy_denoising}, \ref{subsec:mmse_denoising} and \ref{subsec:Bouchaud_denoising} to the more general 
denoising of a rotationally-invariant matrix.
This allows to compare our results with the general ones on symmetric matrix denoising of \cite{bun2016rotational,bun2017cleaning} 
in the case of Gaussian noise\footnote{Note that \cite{bun2016rotational,bun2017cleaning} also consider generic rotationally-invariant additive noise, beyond the Gaussian assumption. 
While this is beyond our scope here, adapting the spherical integral techniques presented hereafter to this more generic setting 
could be possible, and we leave it as an open direction of research.}.

\subsubsection{The case $\alpha\leq 1$}\label{subsubsec:alpha_leq_1}

We shall first study the case
 $\alpha \leq 1$ (i.e.\ $m \leq n$), so that the distribution of $\bS \equiv \bX \bX^\intercal / \sqrt{n m}$ is non-singular. 
We can rewrite eq.~\eqref{eq:Zn_direct_quenched_start} as:
\begin{align}\label{eq:Zn_direct_quenched_1}
\exp\{n m \Phi_{\bY,n}\} = \int P_S(\rd \bS) \frac{e^{-\frac{1}{4 \Delta}\underset{\mu,\nu}{\sum}\big(Y_{\mu \nu} - \sqrt{m} S_{\mu \nu}\big)^2}}{(2\pi \Delta)^{\frac{m(m-1)}{4}}},
\end{align}
in which $P_S(\rd \bS)$ is the Wishart distribution given by (with $\Gamma_p$ the multivariate $\Gamma$ function) :
\begin{align}
	P_S(\rd \bS) &\equiv \frac{m^{\frac{mn}{2}}}{(2\sqrt{\alpha})^{\frac{mn}{2}} \Gamma_m(n/2)} (\det \bS)^{\frac{n - m - 1}{2}} e^{-\frac{\sqrt{nm}}{2} \mathrm{Tr} \, \bS} \rd \bS.
\end{align}
We can decompose the integration in eq.~\eqref{eq:Zn_direct_quenched_1} on the eigenvalues and eigenvectors of $\bS = \bO \bL \bO^\intercal$, where the diagonal matrix $\bL$ contains the $m$ eigenvalues $\{l_\mu\}$.
This change of variables yields a combinatorial factor, that can be found e.g.\ using eq.~(C.4) of \cite{nicolaescu2014complexity} or Propositions~4.1.1 and 4.1.14 of \cite{anderson2010introduction}.
All in all, eq.~\eqref{eq:Zn_direct_quenched_1} turns into:
\begin{align}\label{eq:Zn_direct_quenched_2}
\exp\{n m \Phi_{\bY,n} \}&= \frac{m^{\frac{mn}{2}} e^{-\frac{1}{4 \Delta} \sum_{\mu,\nu} Y_{\mu \nu}^2}}{(2\sqrt{\alpha})^{\frac{mn}{2}} \Gamma_m(n/2) (2\pi \Delta)^{\frac{m(m-1)}{4}}} \times \frac{\pi^{\frac{m(m+1)}{4}}}{2^{m/2}\Gamma(m+1) \prod_{\mu=1}^m \Gamma(\mu/2) }  \\ 
& \times \int_{\bbR_+^m} \rd \bL \prod_{\mu < \nu} |l_\mu - l_\nu| e^{-\frac{m}{2} \sum\limits_{\mu=1}^m \big(\frac{l_\mu}{\sqrt{\alpha}} + \frac{l_\mu^2}{2 \Delta}\big) + \frac{n-m-1}{2}\sum\limits_{\mu=1}^m \ln l_\mu} \int_{\mathcal{O}(m)} \mathcal{D}\bO \, e^{\frac{\sqrt{m}}{2 \Delta} \mathrm{Tr}[\bY \bO \bL \bO^\intercal]} . \nonumber
\end{align}
The last integral is given by the large-$n$ limit of the extensive-rank HCIZ integral \cite{matytsin1994large,guionnet2002large}:
\begin{align*}
	\int_{\mathcal{O}(m)} \mathcal{D}\bO \, e^{\frac{\sqrt{m}}{2 \Delta} \mathrm{Tr}[\bY \bO \bL \bO^\intercal]} \simeq \exp\Big\{\frac{m^2}{2} I_\Delta[\rho_{\bY},\rho_\bS]\Big\}.
\end{align*}
The function $I_\Delta$ depends only on the asymptotic eigenvalue distribution  $\rho_{\bY}$ of $\bY / \sqrt{m}$, and on the asymptotic distribution of the eigenvalues of $\bS$.
In principle, using the Laplace method, the result should be expressed as a supremum over the asymptotic distribution of these eigenvalues, and one would need to assume the existence of this asymptotic distribution (this is justified as there is a number $\mathcal{O}(n)$ of eigenvalues, and we consider asymptotics in the scale $\exp\{\Theta(n^2)\}$).
It turns out that in our case we do not need to perform this complicated optimization on the spectrum of $\bS$.
This is a consequence of \emph{Bayes optimality}.
Indeed, $\rho_\bS$ is the limit of the eigenvalue distribution of $\bS$ under the posterior measure. 
But by the Nishimori identity \cite{nishimori2001statistical}, for any bounded function $\phi$ we have:
\begin{align}\label{eq:nishimori_matytsin}
	\EE \Big\langle \int \phi(\lambda) \rho_\bS(\rd \lambda) \Big\rangle = \EE \int \phi(\lambda) \rho_\bS^\star(\rd \lambda),
\end{align}
in which $\rho_\bS^\star$ is the limit eigenvalue distribution of the ground-truth signal $\bS^\star$.
In other terms, the Nishimori identity implies that the eigenvalue distribution of $\bS$ under the distribution of eq.~\eqref{eq:Zn_direct_quenched_1} concentrates on the Marchenko-Pastur law. 

\medskip\noindent
Therefore the function $I_\Delta$ depends on two known spectra, $\rho_\bS$ which is the Marcenko-Pastur spectrum of the matrix $\bX^\star (\bX^\star)^\intercal/\sqrt{nm}$, and $\rho_\bY$ which is the spectrum of the signal, equal to a Wishart matrix plus a Gaussian noise. 
We moreover have, from \cite{matytsin1994large,guionnet2002large}:
\begin{align}\label{eq:def_matytsin_limit}
	I_\Delta[\rho_{\bY},\rho_\bS] &= \frac{\ln \Delta}{2} - \frac{3}{4} + \frac{1}{2 \Delta} \Big[\int \rho_{\bY} (\rd x) x^2 + \int \rho_\bS(\rd x) x^2 \Big] - \frac{1}{2} \int \rho_{\bY}(\rd x) \rho_{\bY}(\rd y) \ln |x-y| \\ 
	& - \frac{1}{2}\int \rho_\bS(\rd x) \rho_\bS(\rd y) \ln |x-y| -\frac{1}{2} \Big\{\int_0^\Delta \rd t \int \rd x \rho(x,t) \Big[\frac{\pi^2}{3}\rho(x,t)^2 + v(x,t)^2\Big] \Big\}\ , \nonumber
\end{align}
where the function $f(x,t) \equiv v(x,t) + i \pi \rho(x,t)$ satisfies a complex Burgers' equation 
with prescribed boundary conditions:
\begin{align}\label{eq:burgers_complex}
   \begin{dcases}
        \partial_t f + f \partial_x f &= 0, \\
        \rho(x,t=0) &= \rho_\bS(x), \\ 
        \rho(x,t = \Delta) &= \rho_{\bY}(x).
   \end{dcases}
\end{align}
We now evaluate the thermodynamic limit of the free entropy of eq~\eqref{eq:Zn_direct_quenched_2}, when $m,n \to \infty$ with a fixed limit of $\alpha=m/n$ :
\begin{align}
   \Psi[\rho_\bY,\rho_\bS] \equiv \lim_{n\to\infty} \Phi_{\bY,n} = \Psi_1 + \Psi_2 +\Psi_3,
\end{align}
where
\begin{align}
  \begin{dcases}
    \Psi_1 &\equiv\lim_{m \to \infty}\Big[\frac{1}{2} \ln m - \frac{1}{nm} \ln \Gamma_m(n/2) - \frac{1}{nm} \sum_{\mu=1}^m \Gamma(\mu/2)\Big], \\
    \Psi_2 &\equiv - \frac{\ln 2}{2} - \frac{\ln \alpha}{4} - \frac{3 \alpha}{8} - \frac{\alpha}{4} \ln 2, \\
    \Psi_3 &\equiv  \frac{\alpha}{2} \int \rd x \rd y \; \rho_\bS(x)\rho_\bS(y) \ln|x-y|- \frac{\alpha}{2} \int \rd x \; \rho_\bS(x) \Big[\frac{x}{\sqrt{\alpha}}+\frac{x^2}{2\Delta}\Big] \\ 
    & +\frac{1-\alpha}{2} \int dx \; \rho_\bS(x) \ln x + \frac{\alpha}{2}I_\Delta[\rho_\bY,\rho_\bS].
  \end{dcases}
\end{align}
$\Psi_1$ can be computed using the asymptotic expansion
\begin{align}
	\frac{1}{nm} \ln \Gamma_m(n/2)&= \frac{\alpha}{4} \ln \pi + \frac{1}{nm} \sum_{\mu=1}^m \ln \Gamma \Big(\frac{n + 1 - \mu}{2}\Big) + \smallO_n(1) ,
\end{align}
and Stirling's formula $\ln \Gamma(z) = z \ln z - z + \mathcal{O}(\ln z)$ as $z \to \infty$. We get:
\begin{align}
	&\frac{1}{2} \ln m - \frac{1}{nm} \ln \Gamma_m(n/2) - \frac{1}{nm} \sum_{\mu=1}^m \ln \Gamma(\mu/2) \nonumber \\ 
	&= \frac{\alpha +(\alpha -1)^2 \ln (1-\alpha )-(\alpha -2) \alpha  \ln (\alpha )}{4 \alpha } - \frac{\alpha}{4} \ln \pi + \frac{1 + \ln 2}{2}  + \smallO_n(1).
\end{align}
Collecting all pieces, we get the final expression for the free entropy density of extensive-rank symmetric matrix factorization in the Gaussian setting: 
\begin{align}
\label{eq:phi_quenched_alpha_leq1}
	\Psi[\rho_\bY,\rho_\bS] &=\frac{1}{8} \Big\{\frac{2 (\alpha -1)^2 \ln (1-\alpha )}{\alpha }-\alpha  (3+\ln 4)+2 \ln \alpha-2 \alpha  \ln \pi \alpha +6\Big\}
\nonumber \\
	&+ 	 \frac{\alpha}{4} \int \rho_\bS(\rd x) \rho_\bS(\rd y) \ln |x - y| - \frac{\sqrt{\alpha}}{2} \int \rho_\bS(\rd x) x - \frac{\alpha - 1}{2} \int \rho_\bS(\rd x) \ln x  \nonumber \\ 
	& - \frac{\alpha}{4} \int \rho_\bY(\rd x) \rho_\bY(\rd y) \ln |x - y| -\frac{\alpha}{4} \int_0^\Delta \rd t \int \rd x \rho(x,t)  \Big[\frac{\pi^2}{3}\rho(x,t)^2 + v(x,t)^2\Big]  \ .
\end{align}
As stated above, the  function $f(x,t)=v(x,t)+i\pi \rho(x,t)$ satisfies the complex Burgers' equation~\eqref{eq:burgers_complex}. 

\myskip
It turns out that we can greatly simplify eq.~\eqref{eq:phi_quenched_alpha_leq1}. 
This simplification is a consequence of two results, which we state hereafter.
While the first of these two results is not directly used in the simplification, we believe it also bears independent interest that allows to understand 
the reasons allowing for such a simplification.

\myskip
The first one is the analytical solution to eq.~\eqref{eq:burgers_complex}, using the Dyson Brownian motion. 
It was given in \cite{schmidt2018statistical}, but incompletely justified.
\begin{result}[Matytsin's solution and Dyson Brownian motion]\label{result:matytsin_solution_dyson}
	\noindent
    Recall that $\bY/\sqrt{m} = \bS + \sqrt{\Delta} \bZ / \sqrt{m}$, with $\bZ$ a Wigner matrix. For any $\bS$,
    the solution to eq.~\eqref{eq:burgers_complex} is given by the Stieltjes transform\footnotemark[1] \ of the Dyson brownian motion 
    $\bY(t)/\sqrt{m} \overset{\rd }{=} \bS + \sqrt{t} \bZ / \sqrt{m}$, with $t \in [0,\Delta]$.
    More precisely, if $g_{\bY(t)}(z) \equiv \int \rho_{\bY(t)}(\rd t) / (t-z)$, the solution is given by:
    \begin{align}\label{eq:def_rho_v_solution_dyson}
        \begin{dcases}
            \rho(x,t) &= \frac{1}{\pi}\lim_{\epsilon \downarrow 0} \Big\{ \mathrm{Im}[g_{\bY(t)}(x + i \epsilon)] \Big\}, \\
            v(x,t) &= \lim_{\epsilon \downarrow 0} \Big\{-\mathrm{Re}[g_{\bY(t)}(x + i \epsilon)]\Big\}.
        \end{dcases}
    \end{align}
\end{result}
\footnotetext[1]{For more details on the Stieltjes transform, see Appendix~\ref{sec_app:rmt}.}
\textbf{Remark --}
Note that this result, while not stated exactly as here, was used before in the mathematics literature, see e.g.\ \cite{guionnet2004first,menon2017complex}.
Further, these works relate the distribution $\rho(x,t)$ that appears in the general formula of eq.~\eqref{eq:def_matytsin_limit} of $I_\Delta[\rho_\bA,\rho_\bB]$ to the one of a \emph{Dyson Brownian Bridge}. 
More precisely, \cite{guionnet2004first} shows that \emph{there exists} a joint distribution of two random matrices $\bA'$ and $\bB'$ such that 
$\rho_A$ is the LSD of $\bA'$, $\rho_B$ the one of $\bB'$, and
$\rho(t)$ is the asymptotic eigenvalue distribution of 
$\bX(t) = (1-t/\Delta) \bA' + t \bB' / \Delta + \sqrt{t(1-t/\Delta)} \bW / \sqrt{m}$, with $\bW$ a Gaussian Wigner matrix independent of $\bA'$ and $\bB'$.
However, the joint distribution of $(\bA',\bB')$ is unknown in general: Result~\ref{result:matytsin_solution_dyson} shows that 
when $\bB = \bA + \sqrt{\Delta} \bZ / \sqrt{m}$, one can take $\bA' = \bA$ and $\bB' = \bB$\footnote[2]{Indeed, one has then $\bX(t) = \bA + [t/\sqrt{\Delta} \bZ +  \sqrt{t(1-t/\Delta)} \bW]/\sqrt{m}$. 
The last term of this equation is the sum of two independent Gaussian matrices, therefore it is itself Gaussian, with variance $t^2/\Delta + t (1-t/\Delta) = t$. 
In distribution one has thus $\bX(t) \deq \bA + \sqrt{t} \bW/\sqrt{m}$, i.e.\ $\bX(t)$ is the Dyson Brownian motion starting in $\bA$.}.
To the best of our knowledge, the only other case in which this joint distribution is known is when $\bA$ and $\bB$ are independent Gaussian Wigner matrices with arbitrary variances \cite{bun2014instanton}.

\myskip
Our second result allows for a direct simplification of eq.~\eqref{eq:phi_quenched_alpha_leq1}. It can be stated as follows: 
\begin{result}\label{result:gaussian_integral_matytsin}
\noindent
    For any well-behaved density $\rho_A$ (corresponding to a random matrix $\bA$), we have with $I = I_{\Delta = 1}$ the function of eq.~\eqref{eq:def_matytsin_limit}:
    \begin{align}\label{eq:result_gaussian_integral_matytsin}
    \sup_{\rho_L} \Big\{\frac{1}{2} \int \rho_L^{\otimes 2}(\rd x, \rd y) \ln |x-y| - \frac{1}{4} \int \rho_L(\rd x)\, x^2 + \frac{1}{2}I[\rho_L, \rho_A]\Big\}
    &= - \frac{3}{8} + \frac{1}{4} \int \rho_A(\rd x) \, x^2.
    \end{align}
    Equivalently, if $\rho(x,t)$ solves the Euler-Matytsin equation~\eqref{eq:burgers_complex} between $\rho_A$ and $\rho_L$, we have:
    \begin{align}\label{eq:result_gaussian_integral_matytsin_2}
     &\sup_{\rho_L} \Big\{\int \rho_L^{\otimes 2}(\rd x, \rd y) \ln |x-y| - \int_0^1 \rd t \int \rd x \, \rho(x,t) \big[\frac{\pi^2}{3} \rho(x,t)^2 + v(x,t)^2\big] \Big\}\\ 
      &= \int \rho_A^{\otimes 2}(\rd x, \rd y) \ln |x-y|. \nonumber
    \end{align}
    Moreover, in both previous equations the supremum is reached by the \emph{additive free convolution} $\rho_L^\star = \rho_A \boxplus \sigma_\mathrm{s.c.}$ \cite{voiculescu1986addition,anderson2010introduction}, i.e.\ the eigenvalue density of $\bA + \bZ / \sqrt{m}$, which is the Dyson Brownian motion at time $t = 1$. 
\end{result}
In Appendix~\ref{sec_app:matytsin_dyson} we give a complete justification of Results~\ref{result:matytsin_solution_dyson} and \ref{result:gaussian_integral_matytsin}.
Using Result~\ref{result:gaussian_integral_matytsin}, and the fact that $\bY/\sqrt{m} = \bS + \sqrt{\Delta} \bZ / \sqrt{m}$, eq.~\eqref{eq:phi_quenched_alpha_leq1} takes the final form
:
\begin{align}\label{eq:phi_alpha_leq1_simplified}
    \Psi[\rho_\bY,\rho_\bS] &=  \frac{1}{8} \Big\{\frac{2 (\alpha -1)^2 \ln (1-\alpha )}{\alpha }-\alpha  (3+\ln 4)+2 \ln \alpha-2 \alpha  \ln \pi \alpha +6\Big\} - \frac{\sqrt{\alpha}}{2} \int \rho_\bS(\rd x) x
  \nonumber \\
	&  - \frac{\alpha - 1}{2} \int \rho_\bS(\rd x) \ln x + 	 \frac{\alpha}{2} \int \rho_\bS(\rd x) \rho_\bS(\rd y) \ln |x - y|  
   - \frac{\alpha}{2} \int \rho_\bY(\rd x) \rho_\bY(\rd y) \ln |x - y|.
\end{align}
\textbf{Remark --} Note that the derivation of eq.~\eqref{eq:result_gaussian_integral_matytsin} (presented in Appendix~\ref{sec_app:matytsin_dyson}) does not rely on the 
expression of $I_\Delta[\rho_L,\rho_A]$ derived in \cite{matytsin1994large,guionnet2002large} that involved Burgers' equation, but is a simple consequence of the relation of the HCIZ integral to the large deviations of the Dyson Brownian motion \cite{guionnet2002large,guionnet2004first,bun2014instanton,menon2017complex}.
One can therefore recover eq.~\eqref{eq:phi_alpha_leq1_simplified} without appealing to the hydrodynamical formalism of Matytsin.

\subsubsection {The case  $\alpha > 1$}\label{subsubsec:alpha_geq_1}

When $\alpha > 1$, the calculation is very similar, but one considers the limit spectral density of $\bS' \equiv \bX^\intercal \bX / \sqrt{nm}$, which is now a full-rank 
matrix, rather than the singular matrix $\bS = \bX \bX^\intercal / \sqrt{nm}$.
Denoting  the asymptotic spectral densities of these matrices by $\rho_\bS$ and $\rho_{\bS'}$,  we have:
\begin{align*}
    \rho_\bS(x) = \alpha^{-1} \rho_\bS'(x) + (1-\alpha^{-1}) \delta(x).
\end{align*}
The issue that arises in the calculation is that $\rho_\bS(x)$ has a singular component around $x = 0$. 
In order to regularize this singularity, we consider an arbitrarily small $\epsilon > 0$, and we replace in the asymptotic result: 
\begin{align}\label{eq:def_regularization_alpha_geq1}
  \rho_\bS = \alpha^{-1} \rho_\bS'(x) + (1-\alpha^{-1}) \delta(x) \to \rho_\bS^{(\epsilon)} = (\alpha^{-1} \rho_\bS'(x) + (1-\alpha^{-1}) \delta(x)) \boxplus \sigma_\mathrm{s.c.}^{\epsilon}
\end{align}
Here, $\sigma_\mathrm{s.c.}^{\epsilon}$ is the asymptotic eigenvalue distribution of $\sqrt{\epsilon/m} \bZ$ with $\bZ$ a Gaussian Wigner matrix\footnote{In simpler terms, $\sigma_\mathrm{s.c.}^{\epsilon}$ is a Wigner semi-circle law extending from $-2\sqrt{\epsilon}$ to  $-2\sqrt{\epsilon}$.},
and $\boxplus$ is the free convolution \cite{voiculescu1986addition,anderson2010introduction}.
Assuming that the limit eigenvalue of $\bS$ is given by $\rho_\bS^{(\epsilon)}$, 
we perform calculations completely similar to what we did for $\alpha \leq 1$, and we reach, at leading order in $\epsilon$:
\begin{align}\label{eq:phi_quenched_alpha_geq1_partial}
  &\Psi[\rho_\bY,\rho_\bS] = C(\alpha) + \frac{1}{2\alpha} \int \rho_\bS'(\rd x)\rho_\bS'(\rd y) \ln |x-y| - \frac{\alpha}{4} \int (\rho_\bS^{(\epsilon)}(\rd x)\rho_\bS^{(\epsilon)}(\rd y) + \rho_\bY(\rd x)\rho_\bY(\rd y)) \ln |x-y| \nonumber \\ 
  &+ \frac{1-\alpha^{-1}}{2} \int \rho_\bS'(\rd x) \ln x -\frac{1}{2\sqrt{\alpha}} \int \rho_\bS'(\rd x) x  - \frac{\alpha}{4} \int_0^\Delta \rd t \int \rd x \rho^{(\epsilon)}(x,t) \Big[\frac{\pi^2}{3}\rho^{(\epsilon)}(x,t)^2 + v^{(\epsilon)}(x,t)^2\Big],
\end{align}
with the constant $C(\alpha)$ given by: 
\begin{align}\label{eq:def_constant_C_alpha_geq1}
  C(\alpha) = \frac{2 (\alpha -1)^2 \ln (\alpha -1)+\alpha  (-3 \alpha -2 \alpha  \ln (2 \pi \alpha) + 2 \ln \alpha+6)}{8 \alpha }.
\end{align}
We also denoted $v^{(\epsilon)}(x,t)$ and 
$\rho^{(\epsilon)}(x,t)$ to make explicit the fact that the boundary condition at $t = 0$ is determined by $\rho_\bS^{(\epsilon)}$.
Exactly as in the case $\alpha \leq 1$ we make use of Result~\ref{result:gaussian_integral_matytsin}, which allows to reach the simplification:
\begin{align}\label{eq:phi_quenched_alpha_geq1}
  \nonumber
  \Psi[\rho_\bY,\rho_\bS] &= C(\alpha) + \frac{1}{2\alpha} \int \rho_\bS'(\rd x)\rho_\bS'(\rd y) \ln |x-y| + \frac{1-\alpha^{-1}}{2} \int \rho_\bS'(\rd x) \ln x -\frac{1}{2\sqrt{\alpha}} \int \rho_\bS'(\rd x) x \\ 
  &- \frac{\alpha}{2} \int \rho_\bY(\rd x)\rho_\bY(\rd y) \ln |x-y|. 
\end{align}

\subsection{The minimum mean squared error of denoising}\label{subsec:mmse_denoising}

The information-theoretic minimum mean squared error (MMSE) for denoising is defined as:
\begin{align}\label{eq:def_y_mmse_denoising}
	\mathrm{MMSE}(\Delta) &\equiv \EE \Big[\frac{1}{m} \sum_{\mu , \nu} \Big(S^\star_{\mu \nu} - \langle S_{\mu \nu} \rangle\Big)^2  \Big],
\end{align}
where $\bS^\star = \bX^\star (\bX^\star)^\intercal / \sqrt{nm}$ is the original ``ground-truth'' matrix, and the observation is  $\bY = \sqrt{m} \bS^\star + \sqrt{\Delta} \bZ$. 
Note that adding the diagonal terms $\mu = \nu$ in this definition does not change the asymptotic MMSE, cf.\ Appendix~\ref{subsec:mmse_diagonal}.
This definition satisfies, for any $\alpha$:
\begin{align}
\lim_{\Delta \to \infty} \mathrm{MMSE}(\Delta) = 1.
\end{align}
As is usually the case for Gaussian channels \cite{guo2005mutual}, the $\mathrm{MMSE}$ is simply given by a derivative of the asymptotic free entropy:
\begin{align}\label{eq:I_mmse_denoising}
  \mathrm{MMSE}(\Delta) = \Delta -  \frac{4}{\alpha} \frac{\partial \Psi[\rho_\bY,\rho_\bS] }{\partial \Delta^{-1}}.
\end{align}
A proof of eq.~\eqref{eq:I_mmse_denoising} is given in Appendix~\ref{subsec_app:I-MMSE}.
We can now compute the MMSE explicitly using eq.~\eqref{eq:I_mmse_denoising} and eqs.~\eqref{eq:phi_quenched_alpha_leq1}, \eqref{eq:phi_quenched_alpha_geq1}. 
Therefore the MMSE takes the form:
\begin{align}\label{eq:mmse_denoising_final_expr}
  \mathrm{MMSE}(\Delta) &= \Delta + \frac{4}{\alpha} \Delta^2 \frac{\partial}{\partial \Delta}\{\Psi[\rho_\bY,\rho_\bS]\} =\Delta - 2 \Delta^2 \frac{\partial}{\partial \Delta} \int \rho_\bY(\rd x) \rho_\bY(\rd y) \ln |x - y|.
\end{align}
Recall that $\rho_\bY$ is the asymptotic eigenvalue density of $\bY /\sqrt{m}$. 
As we show below, this can be computed using free probability techniques.

\myskip
\textbf{A ``miraculously simple'' solution --} Eq.~\eqref{eq:mmse_denoising_final_expr} is particularly simple, as it only depends on the eigenvalue distribution of the \emph{observations} $\bY$.
This was noticed in \cite{bun2016rotational} for an important class of rotationally-invariant estimators.
Here, this arises as a consequence of the fact that the only dependency in $\Delta$ in the free entropy of eqs.~\eqref{eq:phi_quenched_alpha_leq1}, \eqref{eq:phi_quenched_alpha_geq1} is through the 
eigenvalue density of $\bY$.

\myskip
\textbf{Numerical evaluation of the MMSE --} 
In order to compute the Stieltjes transform $g_\bY(z)$ (and from it $\rho_\bY$ and $v_\bY$ by eq.~\eqref{eq:def_rho_v_solution_dyson}), 
we can use free probability techniques. 
Indeed, we know the $\mathcal{R}$-transform of $\bY/\sqrt{m}$, since we have
$\bY / \sqrt{m}= \bX \bX^\intercal / \sqrt{nm} + (\Delta/m)^{1/2} \bZ$:
\begin{align}
  \mathcal{R}_\bY(s) &= g_{\bY}^{-1}(-s) - \frac{1}{s} = \frac{1}{\sqrt{\alpha}(1-\sqrt{\alpha}s)} + \Delta s.
\end{align}
This implies that $s = - g_\bY(z)$ can be found as the solution to the algebraic equation:
\begin{align}\label{eq:algebraic_g}
	s^3 - \frac{\sqrt{\alpha}z+\Delta}{\sqrt{\alpha}\Delta} s^2 + \frac{1}{\sqrt{\alpha}\Delta} (z + \sqrt{\alpha} - \alpha^{-1/2}) s - \frac{1}{\sqrt{\alpha}\Delta} &= 0.
\end{align}
Eq.~\eqref{eq:algebraic_g} can be solved numerically.
We then compute the derivative of the logarithmic potential in eq.~\eqref{eq:mmse_denoising_final_expr}
using finite differences, which allows us to efficiently access the MMSE numerically.

\subsection{Bayes-optimal estimator for denoising}\label{subsec:Bouchaud_denoising}

\subsubsection{From asymptotic results to estimation}\label{subsubsec:matytsin_estimation}

While the derivations presented in Sections~\ref{subsec:fentropy_denoising} and \ref{subsec:mmse_denoising} 
allow to characterize the performance of the Bayes-optimal estimator, they did not provide an actual estimation algorithm matching this performance.
In this section, we work out such an estimator using the asymptotic limits described above.
From Bayesian statistics, the Bayes-optimal estimator of $\bS \equiv \bX \bX^\intercal / \sqrt{nm}$ 
is $\hat{\bS} \equiv \langle \bS \rangle$, with $\langle \cdot \rangle$ the average under the posterior distribution 
of eq.~\eqref{eq:Zn_direct_quenched_1}.

\medskip\noindent
Starting again from eq.~\eqref{eq:Zn_direct_quenched_1}, we have: 
\begin{align}
  \label{eq:estimator_from_fentropy}
  \begin{dcases}
  \langle S_{\mu \nu} \rangle &= \frac{Y_{\mu \nu}}{\sqrt{m}} - \Delta n \frac{\partial \Phi_{\bY,n}}{\partial (Y_{\mu \nu} / \sqrt{m})}, \\
  \langle S_{\mu \mu} \rangle &= \frac{Y_{\mu \mu}}{\sqrt{m}} - 2 \Delta n \frac{\partial \Phi_{\bY,n}}{\partial (Y_{\mu \mu} / \sqrt{m})}.
  \end{dcases}
\end{align}
Importantly, our results of eqs.~\eqref{eq:phi_quenched_alpha_leq1} and \eqref{eq:phi_quenched_alpha_geq1}
imply that, as $n \to \infty$, $\Phi_{\bY,n}$ depends on $\bY$ solely via the spectrum of $\bY/\sqrt{m}$.
This means that we can write eq.~\eqref{eq:estimator_from_fentropy} at leading order as $n,m \to \infty$ as:
\begin{align}
  \label{eq:estimator_from_fentropy_2}
  \begin{dcases}
  \langle S_{\mu \nu} \rangle &= \frac{Y_{\mu \nu}}{\sqrt{m}} - \Delta n \sum_{\rho=1}^m \frac{\partial \Phi_{\bY,n}}{\partial y_\rho} \frac{\partial y_\rho}{\partial(Y_{\mu \nu}/\sqrt{m})}, \\
  \langle S_{\mu \mu} \rangle &= \frac{Y_{\mu \mu}}{\sqrt{m}} - 2 \Delta n \sum_{\rho=1}^m \frac{\partial \Phi_{\bY,n}}{\partial y_\rho} \frac{\partial y_\rho}{\partial(Y_{\mu \mu}/\sqrt{m})},
  \end{dcases}
\end{align}
in which $\{y_\rho\}_{\rho=1}^m$ are the eigenvalues of $\bY/\sqrt{m}$.
Let us denote $\{\bv^\rho\}$ the corresponding eigenvectors, and fix $\mu_0 < \nu_0$.
By the Hellman-Feynman theorem, if we perturb the matrix $\bY/\sqrt{m}$ as:
\begin{align}
    Y_{\mu \nu} \to Y^\gamma_{\mu\nu} = Y_{\mu\nu}+\gamma \sqrt{m}\delta_{\mu\mu_0}\delta_{\nu\nu_0},
\end{align}
for a small $\gamma \in \bbR$, 
then the corresponding transformation of an eigenvalue $y_\rho$ is: 
\begin{align}\label{eq:hellman_feynman_off_diag}
    \frac{dy_\rho}{d\gamma}= \frac{\partial y_\rho}{\partial(Y_{\mu_0 \nu_0} / \sqrt{m})}
    = \sum_{\mu\nu} (\delta_{\mu\mu_0}\delta_{\nu\nu_0} + \delta_{\mu\nu_0}\delta_{\nu\mu_0})v^\rho_\mu v^\rho_\nu 
    = 2 v^\rho_{\mu_0} v^\rho_{\nu_0}.
\end{align}
In the same way, we obtain for $\mu_0 = \nu_0$:
\begin{align}\label{eq:hellman_feynman_diag}
    \frac{\partial y_\rho}{\partial(Y_{\mu_0 \mu_0} / \sqrt{m})}
    = \sum_{\mu\nu} \delta_{\mu\mu_0}\delta_{\nu\mu_0} v^\rho_\mu v^\rho_\nu 
    = (v^\rho_{\mu_0})^2.
\end{align}
Plugging eqs.~\eqref{eq:hellman_feynman_off_diag} and \eqref{eq:hellman_feynman_diag}
in eq.~\eqref{eq:estimator_from_fentropy_2} yields, for any $\mu,\nu$:
\begin{align}
  \label{eq:estimator_from_fentropy_3}
  \langle S_{\mu \nu} \rangle &= \frac{Y_{\mu \nu}}{\sqrt{m}} - 2 \Delta n \sum_{\rho=1}^m \frac{\partial \Phi_{\bY,n}}{\partial y_\rho} v_\mu^\rho v_\nu^\rho 
  = \sum_{\rho=1}^m \Big[y_\rho -  2 \Delta n \frac{\partial \Phi_{\bY,n}}{\partial y_\rho}\Big] v_\mu^\rho v_\nu^\rho.
\end{align}
Note that eqs.~\eqref{eq:phi_alpha_leq1_simplified}, \eqref{eq:phi_quenched_alpha_geq1} have the following form at finite $n$:
\begin{align}
    \Phi_{\bY,n} &= F_n(\rho_\bS) - \frac{1}{2 m n} \sum_{\mu \neq \nu} \ln |y_\mu - y_\nu| + \smallO_n(1),
\end{align}
so that 
\begin{align}
    n \frac{\partial \Phi_{\bY,n}}{\partial y_\mu} &= - \frac{1}{m} \sum_{\nu(\neq \mu)} \frac{1}{y_\mu - y_\nu} + \smallO_n(1) = \mathrm{P.V.} \int \rd y \, \rho_\bY(y) \frac{1}{y - y_\mu} + \smallO_n(1),
\end{align}
with $\mathrm{P.V.}$ the principal value of the integral. Stated differently, the Bayes-optimal estimator has the form:
\begin{align}\label{eq:expression_RIE}
  \begin{dcases}
  \hat{\bS} &= \sum_{\mu=1}^m \hat{\xi}_\mu \bv^\mu (\bv^\mu)^\intercal, \\
	\hat{\xi}_\mu &= y_\mu - 2 \Delta v_{\bY}(y_\mu,\Delta).
  \end{dcases}
\end{align}
In eq.~\eqref{eq:expression_RIE} we introduced the function:
\begin{align}\label{eq:def_vY}
  v_{\bY}(x,\Delta) \equiv - \lim_{\epsilon \downarrow 0} \mathrm{Re}[g_{\bY}(x + i\epsilon)],
\end{align}
with $g_\bY(z)$ the Stieltjes transform of $\bY/\sqrt{m}$, cf.\ Appendix~\ref{sec_app:rmt}.

\myskip
An important consequence of eq.~\eqref{eq:expression_RIE} is that the optimal estimator $\hat{\bS}$ is 
\emph{diagonal in the eigenbasis of the observation $\bY$}. 
Such estimators are usually called \emph{rotationally-invariant estimators} (RIE). 
Note that one can argue that the optimal estimator is a RIE in this problem without going through the derivation above, but simply as a consequence of the rotation-invariance of the posterior distribution 
of eq.~\eqref{eq:gibbs_denoising_xx}, which was noticed in \cite{bun2017cleaning}. 
The optimal RIE (in the $L^2$ sense) for the denoising of an extensive-rank matrix corrupted by Gaussian noise has been worked out in \cite{bun2016rotational}, in which the authors 
exactly obtained the expression of eq.~\eqref{eq:expression_RIE}.
We therefore found back their main result using our asymptotic derivation\footnote{As noticed in \cite{bun2016rotational}, eq.~\eqref{eq:expression_RIE} is in a sense a ``miracle'' similar to the one we described for the MMSE in Section~\ref{subsec:mmse_denoising}: the eigenvalues $\{\hat{\xi}_\mu\}$ of the optimal estimator are expressed solely as a function of the 
spectrum of the observations $\bY$.}.

\myskip 
As we mentioned at the beginning of Section~\ref{sec:denoising}, one can easily see that none of the results on the MMSE or the optimal denoiser are specific to the denoising of Wishart matrices, but can be easily generalized to 
any rotationally-invariant matrix, so that our derivation 
can be compared to the one of \cite{bun2016rotational} based on a direct minimization of the $L^2$ error, in the case of Gaussian additive noise. 
As we mentioned above, we leave for future work an extension of our derivation to completely generic additive rotationally-invariant noise (which is analyzed in \cite{bun2016rotational}).
Note that the mean squared error of the optimal estimator should be the MMSE of eq.~\eqref{eq:mmse_denoising_final_expr}.
We have checked this numerically (see Section~\ref{subsec:numerics}) for two types of signals.
Fig.~\ref{fig:comparison_denoising_orthogonal_wigner} shows that the two results match, as expected.

\subsubsection{Small-$\alpha$ expansion of the optimal denoiser}\label{subsubsec:small_alpha_rie_denoising}

We shall use the optimal estimator of eq.~\eqref{eq:expression_RIE} in order to test our PGY expansion for denoising in the Gaussian setting.
Recall that we are denoising a $m\times m$ Wishart matrix $\bS=\bX\bX^T/\sqrt{nm}$, where $\bX$ is $m\times n$, and $\alpha=n/m$.
We shall consider here an expansion of the denoiser in the overcomplete regime of small $\alpha= m/n$ (i.e.\ close to the limit where $\bS$ becomes Gaussian), and work out in the next section the small-$\alpha$ expansion within the PGY formalism.

\medskip\noindent
Note that in the PGY approach, we will not consider diagonal observations: we use $Y_{\mu \mu} = 0$. 
Equivalently, by rotation-invariance, we assume here that $\bY$ has asymptotically zero trace, which 
we can do without any loss of generality. Simply, we shift the observations by a deterministic term:
\begin{align}\label{eq:shifted_Y}
  \bY = \frac{1}{\sqrt{n}} \bX \bX^\intercal - \frac{1}{\sqrt{\alpha}} \Id_m + \sqrt{\Delta} \bZ. 
\end{align}
The computation detailed in Appendix~\ref{sec_app:small_alpha_rie} expands 
the Stieltjes transform of $\bY/\sqrt{m}$ and gives 
\begin{align}
	v_\bY(y) &= \frac{y}{2(\Delta+1)} + \frac{(\Delta +1-y^2)}{2 (\Delta +1)^3} \sqrt{\alpha} + \mathcal{O}(\alpha).
\end{align}
Using this expansion in the  optimal estimator of eq.~\eqref{eq:expression_RIE}, we get the result:
\begin{align}\label{eq:denoising_rie_order_2}
	\hat{\xi}_\mu &=	\frac{1}{\Delta + 1} y_\mu - \frac{\Delta }{(\Delta + 1)^2}\Big[1 - \frac{y_\mu^2}{\Delta + 1}\Big] \sqrt{\alpha} + \mathcal{O}(\alpha).
\end{align}
This expansion describes how every eigenvalue $y$ of $\sqrt{m}$ should be denoised, in the limit of small $\alpha$. The leading term,
$\hat{\xi}_\mu=y_\mu/(\Delta+1)$, is the standard expression for denoising a Wigner matrix, as expected. The term of order $\sqrt{\alpha}$ is the first non-trivial correction, we shall compare it to PGY-based denoising in the next section.

\subsection{Denoising using the PGY expansion}\label{subsec:pgy_denoising}

\subsubsection{PGY expansion for the denoising problem}\label{subsubsec:pgy_denoising_derivation}

In this section, we apply the PGY expansion formalism to the symmetric matrix denoising problem, for a Wishart matrix $\bS \equiv \bX \bX^\intercal / \sqrt{nm}$.
Recall that we observe $Y_{\mu \nu} \sim P_{\rm out}( \cdot | \sqrt{m} S_{\mu \nu})$.
We consider a general channel $P_{\rm out}$ (where the noise is applied independently to every component of the matrix), and a general prior, factorized over the components of the input matrix: $P_X(\bX)= \prod_{\mu,i} P_X(X_{\mu i})$.
Furthermore, we shall assume that the mean of $P_X$ is $0$.
Without loss of generality (as it can be absorbed in the channel), we also assume that the 
variance of the prior is $\EE_{P_X}[x^2] = 1$.

\medskip\noindent
The PGY expansion is very close to the computation that we did for the matrix factorization problem, except that,
in the denoising problem, we do not fix the first and second moments of the field $\bX$.  The free entropy thus takes a form which is simpler than the one of eq.~\eqref{eq:phi_extensive_xx_before_plefka}.
We get:
\begin{align}
	nm\Phi_{\bY,n} &= \sum_{\mu < \nu}\Big[-\omega_{\mu \nu} g_{\mu \nu} - \frac{b_{\mu \nu}}{2}\Big(-r_{\mu \nu} + g_{\mu \nu}^2\Big) \Big] + \ln \int P_H(\rd \bH) \ P_X(\rd \bx) \ e^{-S_{\rm eff}[\bx,\bH]},
\end{align}
in which we introduced an \emph{effective action}: 
\begin{align}
	S_{\rm eff}[\bx,\bH] &= \sum_{\mu < \nu} \Big[\omega_{\mu \nu} (iH)_{\mu \nu} - \frac{b_{\mu \nu}}{2} (iH)_{\mu \nu}^2\Big] + H_{\rm eff}[\bx,\bH] \ , 
\end{align}
and the (un-normalized) 
distribution and effective Hamiltonian:
\begin{align}
  \begin{dcases}
       P_H[\rd \bH] &\equiv \rd \bH \, \prod_{\mu < \nu} \int \frac{\rd \tilde{H}}{2 \pi} e^{i H_{\mu \nu} \tilde{H}} P_{\rm out}(Y_{\mu \nu}|\tilde{H}) = \prod_{\mu < \nu} \frac{1}{2 \pi} e^{-\frac{\Delta H_{\mu \nu}^2}{2} + i Y_{\mu\nu} H_{\mu \nu} } \ \rd H_{\mu \nu}, \\
       H_{\rm eff}[\bx,\bH] &\equiv \frac{1}{\sqrt{n}}\sum_{\mu < \nu} \sum_i (iH)_{\mu \nu} x_{\mu i} x_{\nu i}.
  \end{dcases}
\end{align}
We only impose the constraints $g_{\mu \nu} = - \langle (iH)_{\mu \nu}\rangle$ and $r_{\mu \nu} = - \langle (iH)_{\mu \nu}^2\rangle + g_{\mu \nu}^2$, using the Lagrange multipliers $\omega_{\mu\nu}$ and $b_{\mu\nu}$.

\medskip\noindent
The PGY expansion proceeds exactly as in Section~\ref{subsec:plefka_symmetric}:
for pedagogical reasons, we detail the PGY expansion specifically for the denoising problem in Appendix~\ref{sec_app:pgy_denoising}.
As shown there, the result for denoising can be obtained by using the PGY expansion for matrix factorization and imposing $m_{\mu i}=0$, $v_{\mu i}=1$, $\lambda_{\mu i}=0$, $\gamma_{\mu i}=0$.
The PGY free entropy for denoising at order $\eta^3$ takes the form:
\begin{align}
\label{eq:phi_pgy_order_3_denoising}
 nm\Phi_{\bY,n} &= 
        \sum_{\mu < \nu}\Big[-\omega_{\mu \nu} g_{\mu \nu} - \frac{b_{\mu \nu}}{2}\Big(-r_{\mu \nu} + g_{\mu \nu}^2\Big)  + \ln \int \rd z \, \frac{ e^{-\frac{1}{2 b_{\mu \nu}}(z - \omega_{\mu \nu})^2}}{\sqrt{2 \pi b_{\mu \nu}}} \, P_\mathrm{out}(Y_{\mu \nu}|z) \Big] \nonumber \\
       &+\frac{\eta^2}{2} \sum_{\mu<\nu} [g_{\mu\nu}^2-r_{\mu\nu}]
       +\frac{\eta^3}{6 n^{1/2}}
       \sum_{\substack{\mu_1,\mu_2,\mu_3 \\ \text{pairwise distinct}}} g_{\mu_1 \mu_2} g_{\mu_2 \mu_3} g_{\mu_3 \mu_1}.
\end{align}

\subsubsection{Small-$\alpha$ expansion of the PGY denoiser}\label{subsubsec:small_alpha_pgy_denoising}

We can now expand the denoising estimator predicted by the PGY expansion at order $3$ in $\alpha$ as we did for the optimal denoiser in Section~\ref{subsubsec:small_alpha_rie_denoising}.
We also remind what we noted above (see Section~\ref{subsubsec:higher_orders_extensive_xx}) that the PGY expansion at order $\eta^3$ corresponds to an expansion of the optimal denoiser in $\alpha$, up to order $\sqrt{\alpha}$.

\medskip\noindent
We start from the expression of the free entropy in eq.~\eqref{eq:phi_pgy_order_3_denoising}, evaluated at $\eta=1$.
Using the fact that the diagonal elements of $\bg$ are zero, the last term is just proportional to $\mathrm{Tr}[\bg^3]$.
Using that $\partial_{g_{\mu \nu}} \mathrm{Tr}[\bg^k] = 2 k (\bg^{k-1})_{\mu \nu}$, the TAP equations corresponding to eq.~\eqref{eq:phi_pgy_order_3_denoising} are:
\begin{align}\label{eq:TAP_denoising_order_3_full}
  \begin{dcases}
  b_{\mu \nu} = 1  \hspace{0.5cm} &; \hspace{0.5cm} 
  g_{\mu \nu} = \frac{Y_{\mu \nu} - \omega_{\mu \nu}}{\Delta + b_{\mu \nu}},  \\
  \omega_{\mu \nu} = -b_{\mu \nu} g_{\mu \nu} + g_{\mu \nu} + \frac{1}{\sqrt{n}} (\bg^2)_{\mu \nu} \hspace{0.5cm} &; \hspace{0.5cm}
  r_{\mu \nu} = \frac{1}{\Delta + b_{\mu \nu}}.
  \end{dcases}
\end{align}
Therefore this can be reduced to a single equation on $\bg$: 
\begin{align}\label{eq:TAP_order_3_denoising}
	\frac{\bY}{\sqrt{m}} &= (\Delta + 1) \frac{\bg}{\sqrt{m}} + \sqrt{\alpha} \Big(\frac{\bg}{\sqrt{m}}\Big)^2 - \frac{\sqrt{\alpha}}{m}\mathrm{Tr}\Big[\frac{\bg^2}{m}\Big] \Id_m.
\end{align}
Note that in eq.~\eqref{eq:TAP_order_3_denoising} we added a last term to impose the zero-trace condition on $\bg$ and 
$\bY$, since eq.~\eqref{eq:TAP_denoising_order_3_full} only describes the off-diagonal terms\footnote{Since these are rotationally-invariant matrices, having a zero trace is asymptotically equivalent to having a zero diagonal.}.
We solve eq.~\eqref{eq:TAP_order_3_denoising} perturbatively in $\alpha$\footnote{Note that one could also solve eq.~\eqref{eq:TAP_denoising_order_3_full} iteratively, as what would usually be done for TAP equations. Here, we rather leverage an analytical expression for the spectrum of the solution. 
In the more general matrix factorization case, there is no such expression for the solution to the TAP equations~\eqref{eq:fixedpoint_lagrange_extensive_xx},\eqref{eq:tap_order_3_extensive_xx}, and we found ``naive'' iterations to be numerically very ill-behaved, see our remark 
at the end of Section~\ref{subsec:plefka_symmetric}.}.
We denote by $y$ a generic eigenvalue of $\bY / \sqrt{m}$, and $g_y$ the corresponding eigenvalue of $\bg / \sqrt{m}$ (as is clear from eq.~\eqref{eq:TAP_order_3_denoising}, the two matrices will be diagonal in the same eigenbasis).
The first order is trivial, and given by:
\begin{align}\label{eq:gy_pgy_order_1}
	g_y = \frac{y}{\Delta + 1} + \mathcal{O}(\sqrt{\alpha}).
\end{align}
Plugging this back into eq.~\eqref{eq:TAP_order_3_denoising} and considering the order $\sqrt{\alpha}$,
we reach (imposing the zero-trace condition):
\begin{align}\label{eq:gy_pgy_order_2}
	g_y = \frac{y}{\Delta + 1} + \frac{(\Delta + 1 - y^2)}{(\Delta+1)^3} \sqrt{\alpha} + \mathcal{O}(\alpha).
\end{align}
This is the same result as the one obtained with the rotationally-invariant estimator of eq.~\eqref{eq:denoising_rie_order_2}, and this is the expression plotted in Fig.~\ref{fig:denoising_intro} in orange.
Therefore the PGY expansion 
does capture the asymptotic optimal estimator of the problem (in the overcomplete regime), as it should.

\medskip\noindent
\textbf{Effect of the prior on the denoising problem --}
Note that the PGY expansion suggests that the Bayes-optimal estimator for denoising is independent of the particularities 
of the prior distribution $P_X$, and only depends on its first two moments, assumed here to be fixed. 
While this observation is only made here up to order $\eta^3$, understanding if (and how) an i.i.d.\ prior on $\bX$ 
might enhance the denoising of the matrix $\bX \bX^\intercal$ by going to higher orders in the PGY expansion is beyond our results.
Note finally that the universality suggested by the PGY expansion at order $3$ might be better understood in terms of random matrix theory, 
in which many such universality results are known in the asymptotic limit, e.g.\ for the bulk spectral statistics of $\bX \bX^\intercal$ \cite{marchenko1967distribution}.

\subsection{Numerical simulations}\label{subsec:numerics}

In this section, we present numerical results illustrating our contributions.
The code is available in a public \href{https://github.com/sphinxteam/perturbative_mean_field_matrix_factorization}{GitHub repository} \cite{github_repo}.

\medskip\noindent
\textbf{Denoising MMSE for Wishart denoising --}
\begin{figure}[t]
  \includegraphics[width=\textwidth]{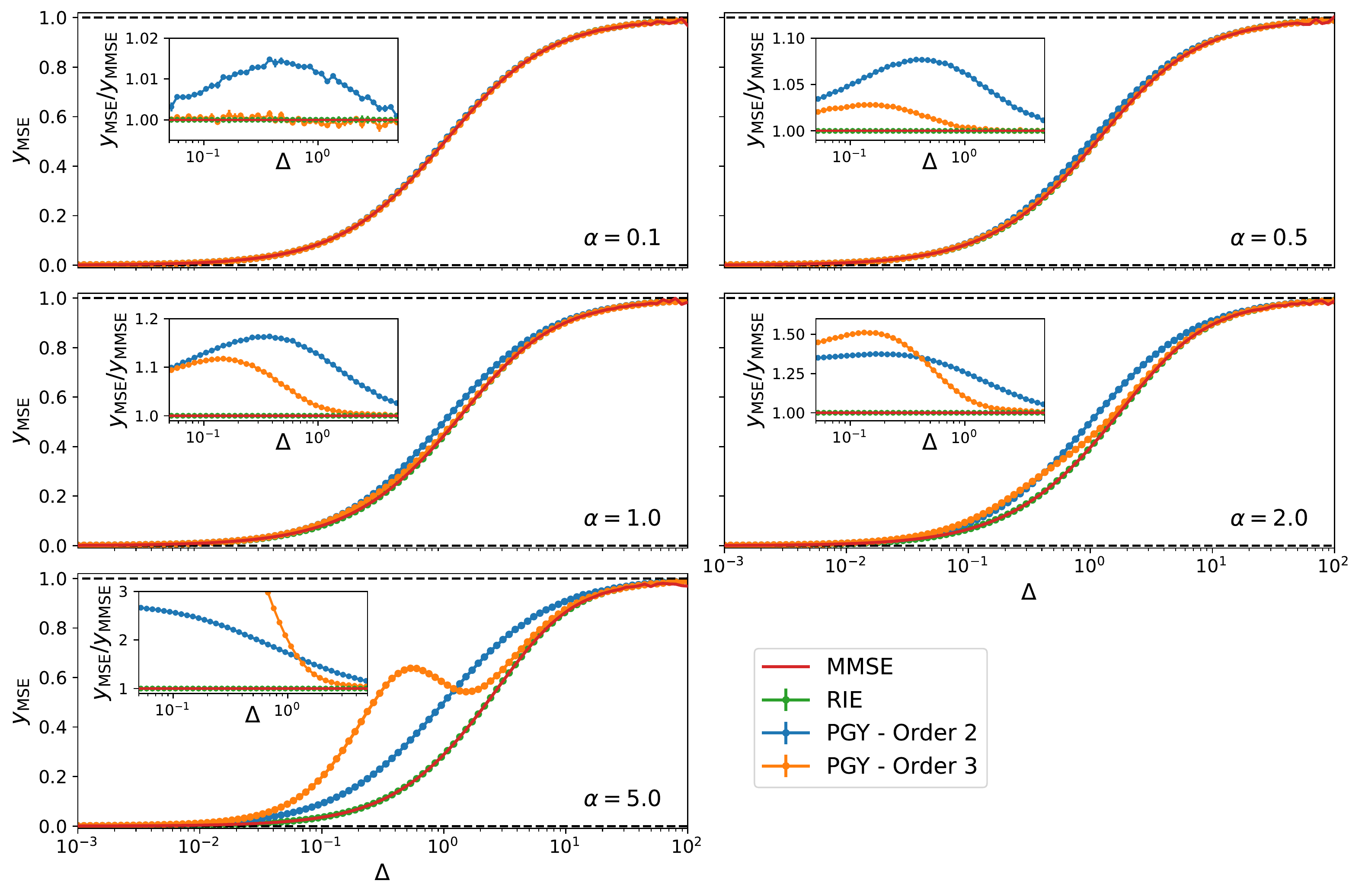}
	\caption{(Extended version of Fig. \ref{fig:denoising_intro}) Denoising of a Wishart matrix, for different values of the aspect ratio $\alpha$ as a function of the additive noise variance $\Delta$: comparison of the analytic MMSE with the RIE estimator of \cite{bun2016rotational}, and the 
  TAP-based estimator obtained from the PGY expansion truncated at orders $2$ and $3$. 
	\label{fig:comparison_denoising}}
\end{figure}
In Fig.~\ref{fig:comparison_denoising}, we compare the denoising errors reached by different procedures: 
\begin{itemize}[leftmargin=*]
  \item The MMSE, that is the analytical value of the Bayes-optimal error predicted in eq.~\eqref{eq:mmse_denoising_final_expr}, 
  is shown in the full red line.
  We discretize both parameters $(t,x)$, to obtain precise estimates of the involved integrals via Romberg's method \cite{romberg1955vereinfachte}, and a polynomial solver
  to compute the Stieltjes transform of $\bY(t)$ above the real axis by using eq.~\eqref{eq:algebraic_g}.
  \item The denoising MSE reached by the optimal RIE of eq.~\eqref{eq:expression_RIE}. 
  For each value of $\Delta$ (shown as green points), we compute the MSE achieved on $3$ different instances of $\bY$, and 
  error bars are too small to be visible. We use $m = 3000$ to generate the matrices.
  \item In blue points, we show the error reached by the PGY equations at order $\eta^2$, that is simply via the denoiser of eq.~\eqref{eq:gy_pgy_order_1}.
  As for the RIE, we use $m = 3000$ and average over $3$ instances, while error bars are in practice invisible.
  Note that the truncation at order $2$ corresponds to assuming that the underlying matrix is a Wigner matrix, as clearly shown in eq.~\eqref{eq:gy_pgy_order_1}.
  \item Finally, orange points correspond to the PGY equations truncated at order $\eta^3$, that is 
  eq.~\eqref{eq:gy_pgy_order_2}. We use the same physical parameters as for the PGY at order $2$.
\end{itemize}
In the inset of all figures shown in Fig.~\ref{fig:comparison_denoising}, we show the ratio of the error 
with the analytical MMSE.
It is clear that the optimal RIE achieves the Bayes-optimal error within numerical accuracy. 
Moreover, the PGY at order $3$ significantly outperforms the order-$2$ method when $\alpha$ is not too large (recall that the order of the PGY expansion 
can be understood as an order of perturbation in $\alpha$).
This strengthens our claim that the PGY expansion captures the optimal estimator, although one would have to take into account 
all orders of perturbation to turn it into an estimator for the factorization problem, something which is beyond the scope of this paper.
Interestingly, for large values of $\alpha$ (e.g. $\alpha = 5$ in Fig.~\ref{fig:comparison_denoising}) and small values of $\Delta$, 
the order-$3$ method can perform worse than the order-$2$.
Such a peculiar behavior of the PGY expansion series was already noticed for a class of Ising models 
in \cite{ricci2012bethe}. We remark that our evidence combined with the one of \cite{ricci2012bethe} 
suggests that such a phenomenon might also be present in the general PGY expansion of many rotationally-invariant inference problems 
which was derived in \cite{maillard2019high}: this interesting question is however not in the scope of the present paper and would deserve further studies.

\medskip\noindent
\textbf{Estimating the asymptotic free entropy --}
\begin{figure}[t]
  \includegraphics[width=\textwidth]{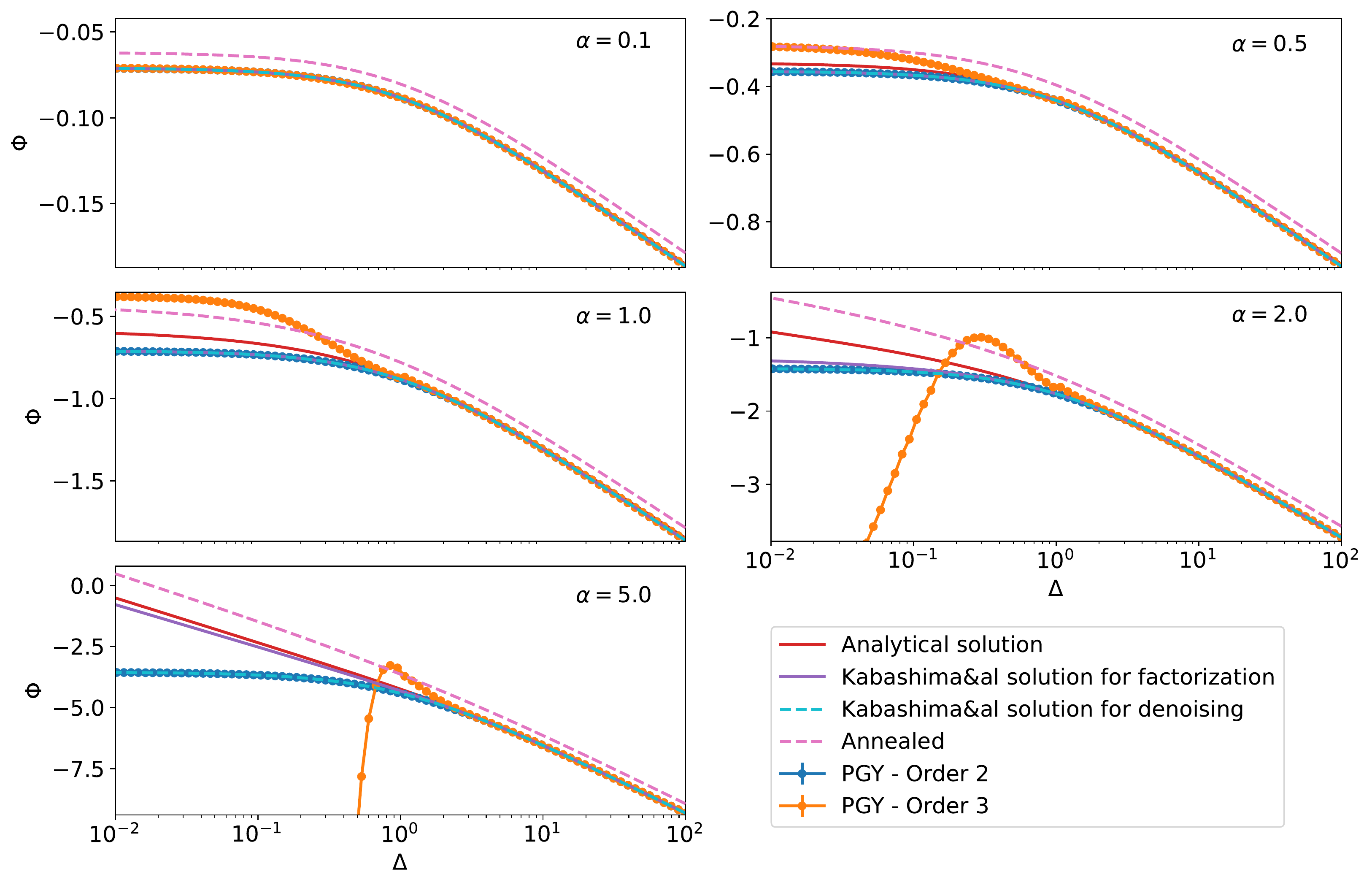}
  \caption{Denoising of a Wishart matrix: comparison of the analytical free entropy computed 
  using Matytsin's equations (red curve), the one computed with the TAP approach, truncated at orders $2$ (blue points) and $3$ (orange points), the annealed free energy (dashed pink line) computed in Appendix \ref{sec_app:annealed}, 
  the denoising solution of \cite{kabashima2016phase} in dashed cyan, and the (incorrect) solution for the factorization problem proposed in \cite{kabashima2016phase} in purple.
	\label{fig:comparison_free_energies}}
\end{figure}
Similar observations can be made by computing the free entropy predicted by the 
solution to the Matytsin problem and comparing it to the predictions of the truncated PGY expansion.
We present our results in Fig.~\ref{fig:comparison_free_energies}, in which we show the following curves for different values of $\alpha$:
\begin{itemize}[leftmargin=*]
  \item The red curve is obtained by analytically estimating eqs.~\eqref{eq:phi_quenched_alpha_leq1} and \eqref{eq:phi_quenched_alpha_geq1}, 
  depending on the regime of $\alpha$. The discretization technique we use was detailed in the description of Fig.~\ref{fig:comparison_denoising}.
  \item The blue and oranges points correspond to truncating the PGY expansion at order $2$ and $3$, as in Fig.~\ref{fig:comparison_denoising}, and the physical parameters 
  used in the simulations are the same.
  Note that truncating the PGY series at order $3$ does not necessarily yield a better approximation (to the free entropy or the MSE, see Fig.~\ref{fig:comparison_denoising}) than order $2$ for low values of $\Delta$: indeed, 
  this truncation does not correspond to any physical approximation, while we saw that the truncation at order $2$ is equivalent to assuming that the matrix to denoise 
  is Wigner.
  \item
  In dashed cyan, we show the free entropy prediction of \cite{kabashima2016phase} for the matrix denoising problem, matching exactly the PGY expansion at order $2$.
  Note that \cite{kabashima2016phase} also tackled the more involved matrix factorization problem. Since the free energies of factorization and denoising are the same, 
  we show as well in solid purple their (incorrect) prediction for the free entropy when analyzing the matrix factorization problem. Recall that we analytically disproved their approach in Section~\ref{subsec:previous_approaches}.
  The computation of these two curves is elementary, and described in Appendix~\ref{subsec_app:kabashima_sol_gaussian}.
  \item For completeness, we compare all these curves to an annealed bound (in dashed pink) that can be computed analytically, 
  using a calculation presented in Appendix~\ref{sec_app:annealed}. 
  In particular, the PGY expansion at order $3$ clearly violates the annealed bound for small enough $\Delta$ and large enough $\alpha$, 
  indicating that it does not correspond to a physical free entropy and that the PGY expansion would have to be carried to further orders 
  to correct this behavior.
\end{itemize}

\medskip\noindent
\textbf{Denoising of other rotationally-invariant matrices --}
\begin{figure}[t]
     \centering
		\includegraphics[width=\textwidth]{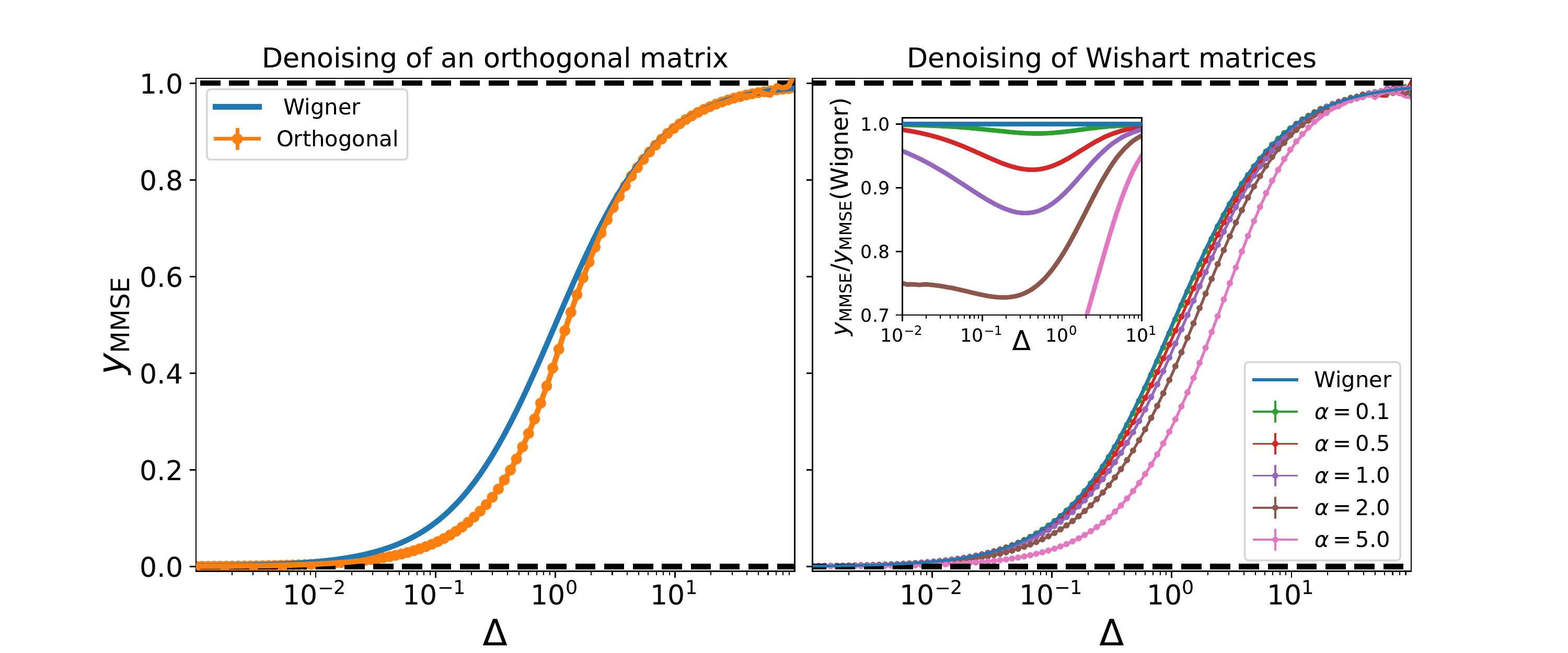}
     	\caption{
       Comparison of the denoising MMSE computed from the optimal rotation invariant estimator of \cite{bun2016rotational}, i.e.\ eq.~\eqref{eq:expression_RIE}, for different 
       ground-truth matrices $\bS^\star$. We use $m = 3000$ and average over $3$ instances, with no visible error bars.
       Left: denoising of a uniformly-sampled symmetric orthogonal matrix. Right: denoising of a Wishart matrix with various values of $m/n=\alpha$. The well known denoising of a Wigner random matrix (which is the overcomplete limit of a Wishart matrix when $\alpha\to 0$) is shown for reference.
	\label{fig:comparison_denoising_orthogonal_wigner}}
\end{figure}
In Fig.~\ref{fig:comparison_denoising_orthogonal_wigner}, we illustrate the influence of the structure of the signal on the performance of the optimal denoiser. 
We show how the structure of a symmetric orthogonal matrix (i.e.\ $\bS^\star = \bO \bD \bO^\intercal$, with a uniformly-sampled orthogonal matrix $\bO$, and $D_\mu \overset{\mathrm{i.i.d.}}{=} \pm 1$ with probability $1/2$), or of a Wishart matrix with different values of $\alpha$, 
allows to significantly improve the denoising performance over simple Wigner (scalar) denoising.

\medskip\noindent 
\textbf{Large-$\alpha$ behavior and the BBP transition --}
\begin{figure}[t]
     \centering
		\includegraphics[width=\textwidth]{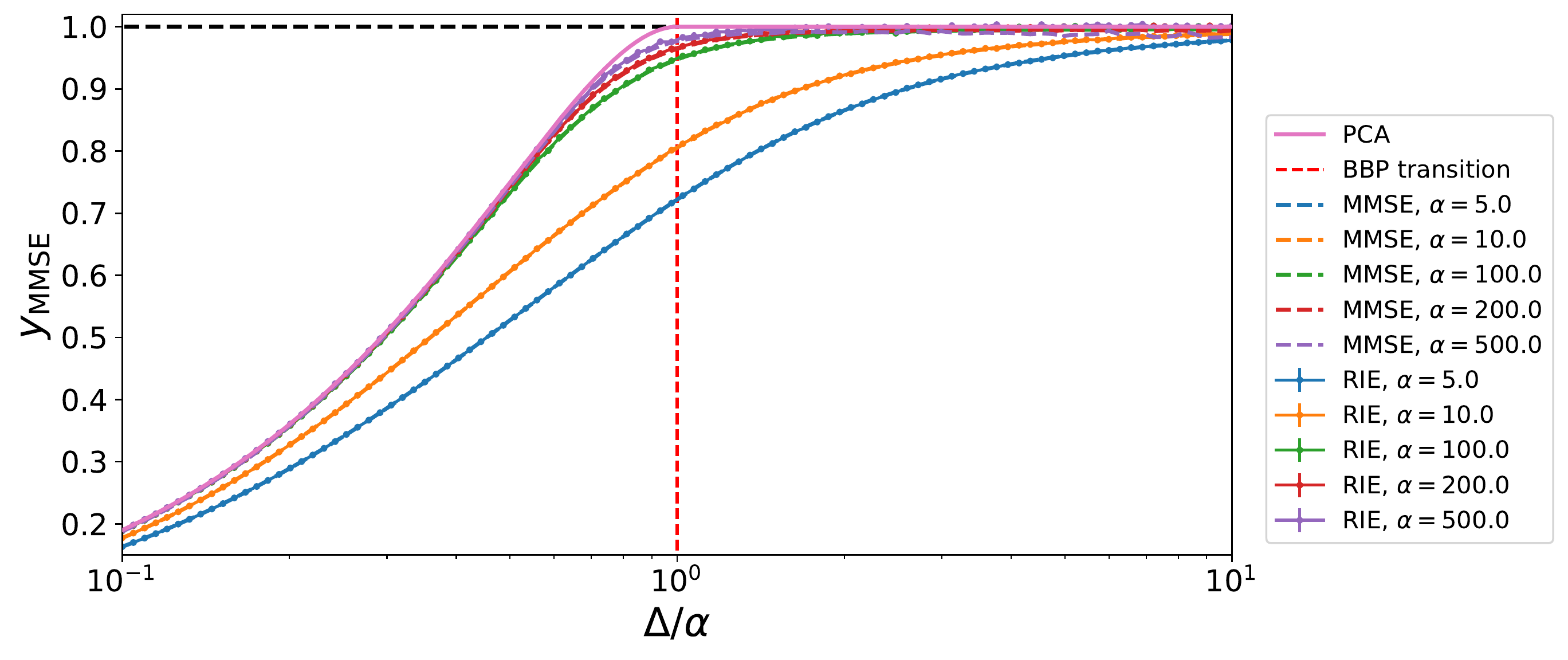}
     	\caption{
         Denoising MMSE for Wishart denoising at large $\alpha$, and different values of $\Delta$.
         We show both analytical predictions using eq.~\eqref{eq:mmse_denoising_final_expr} (dashed lines) and 
         numerical applications of the RIE of eq.~\eqref{eq:expression_RIE} (using $m = 20000$ and averaging over $3$ instances, error bars are invisible, dashed lines are covered by the corresponding data points). 
         Numerical errors arise in the MMSE calculation, due to the difficulty of estimating the derivative in eq.~\eqref{eq:expression_RIE} for very large values of the noise.
         We compare it with the performance of vanilla PCA, which is optimal in the low-rank limit $n = \mathcal{O}(1)$.
	\label{fig:denoising_large_alpha}}
\end{figure}
We conclude this presentation of numerical results by Fig.~\ref{fig:denoising_large_alpha}, in which we 
investigate the denoising MMSE in the large $\alpha$ (or undercomplete) regime. 
While this regime is not reachable by our PGY expansion, we can evaluate the MMSE with the exact formula of eq.~\eqref{eq:mmse_denoising_final_expr} (dashed lines), 
or by sampling large matrices and using the optimal RIE of eq.~\eqref{eq:expression_RIE}. 
The agreement between these two methods is again extremely good, showing in particular that the RIE calculation is not affected strongly by finite-size effects (e.g.\ at $\alpha = 200$, we have 
$n = m / \alpha = 100$, so that finite-size effects could have been present).

\medskip\noindent
Interestingly, we can compare the performance in this regime to the low-rank limit in which $n = \mathcal{O}(1)$. 
In the extreme case in which $n = 1$ (so that $\bX = \bx$ is a vector), it is known that the optimal procedure is PCA, which estimates $\bx$ as the leading eigenvector of $\bY/\sqrt{\alpha} = \bx \bx^\intercal / \sqrt{m} + \sqrt{\Delta/\alpha} \bZ$.
Its performance has been described very precisely, and exhibits a transition, known as ``BBP'' \cite{baik2005phase} at $\Delta/\alpha = 1$.
More precisely, if $\bv$ is the leading eigenvector of $\bY$, we have 
\begin{align}
   \lim_{m \to \infty} \frac{1}{m} |\bv^\intercal \bx|^2 &= 
   \begin{dcases}
      0 & \textrm{ if } \Delta \geq \alpha \\ 
      1 - \frac{\Delta}{\alpha} &\textrm{ if } \Delta \leq \alpha.
   \end{dcases}
\end{align}
We show the corresponding denoising MMSE as a pink line in Fig.~\ref{fig:denoising_large_alpha}. 
It is clear that the optimal denoiser approaches the PCA performance at large $\alpha$. We note that in the extensive-rank regime we observe no phase transition in the MMSE. We will further comment  
on the description of the transition between the low-rank and extensive-rank regimes in the coming conclusion.
\section*{Conclusion and openings}
\addcontentsline{toc}{section}{Conclusion and open questions}

In this paper, we have shown how the PGY formalism of high-temperature expansions at fixed order parameters can be applied to
the problems of extensive-rank matrix factorization and denoising.
The previous approaches to this problem \cite{sakata2013statistical,krzakala2013phase,kabashima2016phase,parker2014bilinear,parker2014bilinear2,zou2021multi} turn out to be relying on a hidden assumption that amounts to truncating the PGY series at second order.
Both in the matrix factorization and in the denoising problem, we have shown, by an explicit computation of the third order contribution, that the higher-order corrections are not negligible.
 
\medskip\noindent
The denoising of rotationally-invariant matrices provides a good test of our approach.
On the one hand it can be solved analytically using matrix integration techniques.
On the other hand there exists an optimal denoising algorithm that relies only on the observed matrix, and denoises each eigenvalue.
We have shown that these two approaches give the same mean squared error, and that our PGY-based expansion, truncated at order three, also agrees with these other approaches for denoising Wishart matrices in an overcomplete (or small-$\alpha$) regime close to the Wigner limit.
 
\medskip\noindent
The interest of the formalism that we have presented here is that it opens the way to an optimality-matching algorithm for matrix factorization in the extensive rank case accounting for general prior and output channel.
Indeed, the optimization of the PGY free entropy leads to TAP equations which, properly iterated in time, develop into AMP-type algorithms.
An important step in that direction would of course be to work out the PGY expansion to all orders, and re-sum it.
On the other hand, as a first step, it will be interesting to find out if the TAP equations that we have derived at order 3 can be transformed into a useful algorithm for overcomplete matrix factorization.

\medskip\noindent
It would also be interesting to perform a perturbative calculation of the free entropy and MMSE in the opposite regime of large $\alpha$, to analytically understand 
the limit behavior observed in Fig.~\ref{fig:denoising_large_alpha}.
Such a calculation is related to an important open problem in random matrix theory, that is a sharp description of the transition between the low-rank and extensive-rank spherical HCIZ integrals.
Solving this later problem would help to describe the transition between low-rank and extensive-rank results in the PGY expansion.
    
\medskip\noindent
Another interesting open problem is to relate the PGY-expansion to other approaches.
The recent conjecture proposed in \cite{barbier2021statistical} for solving matrix factorization with replicas should be compared to the present approach.
In general it is well known that the two main approaches to disordered systems, the replica approach and the cavity approach (that derives TAP equations) have the same physical content, the second one having the advantage that it can be transformed into an algorithm.
The challenge that we face here is the extension of these two methods to the case where the order parameter is a large matrix, and the contact that can be made between the two.
In particular, in both of these approaches, we have found that the effective order parameter turns out to be a distribution of eigenvalues.
Understanding the generality of this statement is an important open challenge.
On the other hand, it has been shown in \cite{maillard2019high} in finite-rank problems that these expansions are equivalent to other techniques, e.g.\ the Expectation Consistency or adaptive TAP approaches.
This equivalence does not seem to easily transfer to the extensive-rank case: understanding how to apply these approaches here is another interesting direction of research.

\section*{Acknowledgements}

We are grateful to Laura Foini, Christian Schmidt, Alice Guionnet, Jean-Philippe Bouchaud, and Yoshiyuki Kabashima for many insightful discussions throughout different stages of this work.
A.M. is particularly thankful to Laura Foini for collaboration on the annealed calculations and discussions on the PGY expansion.
We acknowledge funding from the ERC under the European Union’s Horizon 2020 Research and Innovation
Program Grant Agreement 714608-SMiLe.
Additional funding is acknowledged by A.M. from ‘Chaire de
recherche sur les modèles et sciences des données’, Fondation CFM pour la Recherche-ENS.

\bibliographystyle{alpha}
\bibliography{refs}

\appendix
\addtocontents{toc}{\protect\setcounter{tocdepth}{1}} 
\addcontentsline{toc}{part}{Appendices}

\section{Some definitions of probability theory}\label{sec_app:rmt}

We introduce here a few notations and quantities particularly useful to study the asymptotic spectra of 
random matrices.
For more complete mathematical or physical introductions to random matrix theory, the reader can refer to \cite{mehta2004random, anderson2010introduction,livan2018introduction}.
Let us consider a symmetric random matrix $\bM \in \bbR^{n \times n}$, with eigenvalues $(\lambda_1,\cdots,\lambda_n)$. 
In general, we will consider random matrices that admit an \emph{asymptotic spectral density}, here denoted $\mu(x)$, such that 
\begin{align}
    \mu(x) &= \lim_{n \to \infty} \frac{1}{n} \sum_{i=1}^n \delta(x - \lambda_i).
\end{align}
Here the limit is to be understood as the weak limit of probability measures.
We denote $\lambda_{\rm max} \equiv \max \mathrm{supp}(\mu)$ and $\lambda_{\rm min} \equiv \min \mathrm{supp}(\mu)$, 
assuming that the support of $\mu$ is bounded from above and below.
Letting $\bbC_+ \equiv \{z \in \bbC \ \mathrm{s.t.} \ \mathrm{Im}(z) > 0\}$,
one can introduce the \emph{Stieltjes} transform of $\mu$ as:
\begin{align}\label{eq:def_stieltjes}
g_\mu (z) \equiv \mathbb{E} \Big[\frac{1}{X-z}\Big] = \int_{\mathbb{R}} \mathrm{d}\lambda \, \mu(\lambda) \, \frac{1}{\lambda - z} = \lim_{n \to \infty} \frac{1}{\lambda_i - z},
\end{align}
for all $z \in \bbC_+$. Note that then we also have $g_{\mu}(z) \in \bbC_+$.
Moreover, on $(\lambda_{\rm max},+\infty)$, $g_\mu$ induces a strictly increasing $\mathcal{C}^\infty$ diffeomorphism ${\cal S}_\mu : (\lambda_{\rm max},\infty) \hookrightarrow (-\infty,0)$, and 
we denote its inverse $g^{-1}_\mu$. 
One can then introduce the \emph{${\cal R}$-transform} of $\mu$ as: 
\begin{align}
\forall z > 0, \quad \mathcal{R}_\mu(z) &\equiv g_\mu^{-1}(-z) - \frac{1}{z}.
\end{align}
$\mathcal{R}_\mu(z)$ is \emph{a priori} defined for $-z \in g_\mu[(\lambda_{\rm min},\lambda_{\rm max})^c]$
and admits an analytical expansion around $z = 0$.
We can write this expansion as:
\begin{align}\label{eq:expansion_R_freecum}
\mathcal{R}_\mu(z) = \sum_{k=0}^\infty c_{k+1}(\mu) \, z^k.
\end{align} 
The elements of the sequence $\{c_k(\mu)\}_{k \in \bbN^\star}$ are called the \emph{free cumulants} of $\mu$.
In particular, one can show that $c_1(\mu) = \EE_\mu(X)$ and $c_2(\mu) = \EE_\mu(X^2) - (\EE_\mu X)^2$.
The free cumulants can be recursively computed from the moments of the measure using the so-called \emph{free cumulant equation}: 
\begin{align}\label{eq:free_cum_formula}
\forall k \in \mathbb{N}^*, \quad \mathbb{E}_\mu X^k = \sum_{m = 1}^k c_m(\mu) \sum_{\substack{\{k_i\}_{i \in [|1,m|]} \\ \text{s.t } \sum_i k_i = k}} \prod_{i=1}^m \mathbb{E}_\mu X^{k_i -1}.
\end{align}
\section{Technicalities of the PGY expansion}\label{sec_app:technicalities_plefka}

\subsection{Setting of the Plefka-Georges-Yedidia expansion}\label{subsec_app:setting_pgy}

We describe here in more details the formalism we used to derive Result~\ref{result:pgy_order_3_extensive_xx}: we follow the formalism of \cite{georges1991expand} to perform a Plefka expansion \cite{plefka1982convergence}.
Note that some parts of the derivation are very similar to what is done in \cite{maillard2019high}.
We start from eq.~\eqref{eq:phi_extensive_xx_before_plefka}:
\begin{align*}
   \begin{dcases}
      nm\Phi_{\bY,n} &= \sum_{\mu,i}\Big[\lambda_{\mu i} m_{\mu i} + \frac{\gamma_{\mu i}}{2}\Big(v_{\mu i} + (m_{\mu i})^2\Big) \Big] + \sum_{\mu < \nu}\Big[-\omega_{\mu \nu} g_{\mu \nu} - \frac{b_{\mu \nu}}{2}\Big(-r_{\mu \nu} + g_{\mu \nu}^2\Big) \Big] \\
      & + \ln \int P_H(\rd \bH) P_X(\rd \bX) \, e^{-S_{\rm eff}[\bX,\bH]}, \\
         S_{\rm eff}[\bX,\bH] &\equiv \sum_{\mu, i} \Big[\lambda_{\mu i} X_{\mu i} + \frac{\gamma_{\mu i}}{2} X_{\mu i}^2\Big] + \sum_{\mu<\nu} \Big[\omega_{\mu \nu} (iH)_{\mu \nu} - \frac{b_{\mu \nu}}{2} (iH)_{\mu \nu}^2\Big] + H_{\rm eff}[\bX,\bH], \\
         H_{\rm eff}[\bX,\bH] &\equiv \frac{1}{\sqrt{n}}\sum_{\mu <\nu} \sum_{i} (iH)_{\mu \nu} \,X_{\mu i} \,X_{\nu i}.
   \end{dcases}
\end{align*}
For clarity, we will keep the dependency of all the Lagrange parameters on $\eta$
explicit when needed. For a given $\eta$ and a given $\bY$, one defines the operator $U$ of Georges-Yedidia \cite{georges1991expand}:
\begin{align}\label{eq:def_U_xx}
   U_{\bY,\eta} \equiv  H_{\rm eff}- \langle H_{\rm eff}\rangle_{\bY,\eta} &+ \sum_{\mu,i} \partial_\eta\lambda_{\mu i} (X_{\mu i} - m_{\mu i}) + \frac{1}{2} \sum_{\mu,i}\partial_\eta \gamma_{\mu i}[X_{\mu i}^2 - v_{\mu i} - (m_{\mu i})^2 ] \nonumber \\
   &+ \sum_{\mu < \nu} \partial_\eta \omega_{\mu \nu} (iH_{\mu \nu} + g_{\mu \nu}) - \frac{1}{2} \sum_{\mu<\nu}\partial_\eta b_{\mu \nu}[(iH)_{\mu \nu}^2 + r_{\mu \nu} - g_{\mu \nu}^2].
\end{align}

\subsection{First orders of perturbation}\label{subsec_app:first_orders_pgy_xx}

\noindent
\textbf{Order $1$ in $\eta$ --}
At order $1$, we have directly:
\begin{align}\label{eq:phi_pgy_xx_order_1}
   \Big(\frac{\partial \Phi_{\bY,n}}{\partial \eta}\Big)_{\eta = 0} &= \frac{1}{nm} \langle H_\mathrm{eff} \rangle_0 = \frac{1}{n^{3/2}m} \sum_i \sum_{\mu < \nu} g_{\mu \nu} m_{\mu i} m_{\nu i}.
\end{align}
We can then use ``Maxwell-like'' relations \cite{georges1991expand} to compute the derivatives of the Lagrange parameters at $\eta = 0$.
For instance, for $\lambda_{\mu i}$ and $\gamma_{\mu i}$ they can be read from eq.~\eqref{eq:phi_extensive_xx_before_plefka}:
\begin{subnumcases}{\label{eq:maxwell}}
   \label{eq:maxwell_1}
   \lambda_{\mu i} + m_{\mu i} \gamma_{\mu i} = nm \, \frac{\partial \Phi_{\bY,n}}{\partial m_{\mu i}}, &\\
   \label{eq:maxwell_2}
   \gamma_{\mu i} = 2 nm \, \frac{\partial \Phi_{\bY,n}}{\partial v_{\mu i}}, & 
\end{subnumcases}
so that we can compute :
\begin{align}\label{eq:maxwell_relations_x}
   \begin{dcases}
   \partial_\eta \gamma_{\mu i}(\eta = 0) &= 2 n m \frac{\partial}{\partial v_{\mu i}} \Big[\frac{\partial \Phi_{\bY,n}}{\partial \eta}(\eta = 0)\Big] = 0, \\
   \partial_\eta \lambda_{\mu i}(\eta = 0) &= n m \frac{\partial}{\partial m_{\mu i}} \Big[\frac{\partial \Phi_{\bY,n}}{\partial \eta}(\eta = 0)\Big] = \frac{1}{\sqrt{n}} \sum_{\nu} g_{\mu \nu} m_{\nu i}. 
   \end{dcases}
\end{align}
In a similar fashion, we reach
\begin{align}\label{eq:maxwell_relations_h}
   \begin{dcases}
   \partial_\eta b_{\mu \nu}(\eta = 0) &= 2 n m \frac{\partial}{\partial r_{\mu \nu}} \Big[\frac{\partial \Phi_{\bY,n}}{\partial \eta}(\eta = 0)\Big] = 0, \\
   \partial_\eta \omega_{\mu \nu}(\eta = 0) &= - n m \frac{\partial}{\partial g_{\mu \nu}} \Big[\frac{\partial \Phi_{\bY,n}}{\partial \eta}(\eta = 0)\Big] = -\frac{1}{\sqrt{n}} \sum_{i} m_{\mu i}m_{\nu i}. 
   \end{dcases}
\end{align}
Using eqs.~\eqref{eq:maxwell_relations_x} and \eqref{eq:maxwell_relations_h}, we can compute $U_{\bY,0}$ from eq.~\eqref{eq:def_U_xx}.
For clarity of the notation, we will denote by lowercase letters \emph{centered variables}, i.e.\ $x_{\mu i}\equiv X_{\mu i} - m_{\mu i}$ and $(ih)_{\mu \nu} \equiv (iH)_{\mu \nu} + g_{\mu \nu}$. 
We obtain after a straightforward calculation:
\begin{align}\label{eq:U0_xx}
    U_{\bY,\eta = 0} &= \frac{1}{\sqrt{n}} \sum_{i} \sum_{\mu < \nu} [(ih)_{\mu \nu} x_{\mu i} x_{\nu i} - g_{\mu \nu} x_{\mu i} x_{\nu i} + (ih)_{\mu \nu} m_{\mu i} x_{\nu i} + (ih)_{\mu \nu} x_{\mu i} m_{\nu i}].
\end{align}

\medskip\noindent
\textbf{Order $2$ in $\eta$ --}
Relying on the formulas of \cite{georges1991expand}, 
we have:
\begin{align}\label{eq:pgy_method_order_2}
   \frac{1}{2}\Big(\frac{\partial^2 \Phi_{\bY,n}}{\partial \eta^2}\Big)_{\eta = 0} &= \frac{1}{2nm} \langle U_\bY^2\rangle_0.
\end{align}
Recall that at $\eta = 0$, all the variables $\{h_{\mu \nu},x_{\rho i}\}$ are independent and have zero mean. 
Using this fact alongside with the expression of $U_{\bY,\eta = 0}$ given in eq.~\eqref{eq:U0_xx} 
yields:
\begin{align}\label{eq:phi_pgy_xx_order_2}
   \frac{1}{2}\Big(\frac{\partial^2 \Phi_{\bY,n}}{\partial \eta^2}\Big)_{\eta = 0} &= \frac{1}{2 n^2 m} \sum_{i,j} \sum_{\substack{\mu < \nu \\ \mu' < \nu'}} \Big\langle [(ih)_{\mu \nu} x_{\mu i} x_{\nu i} - g_{\mu \nu} x_{\mu i} x_{\nu i} + (ih)_{\mu \nu} m_{\mu i} x_{\nu i} + (ih)_{\mu \nu} x_{\mu i} m_{\nu i}]
   \nonumber \\ 
   & \hspace{1cm} \times 
   [(ih)_{\mu' \nu'} x_{\mu' j} x_{\nu' j} - g_{\mu' \nu'} x_{\mu' j} x_{\nu' j} + (ih)_{\mu' \nu'} m_{\mu' j} x_{\nu' j} + (ih)_{\mu' \nu'} x_{\mu' j} m_{\nu' j}]
   \Big\rangle_0,  \nonumber \\ 
   &= \frac{1}{2 n^2 m} \sum_i \sum_{\mu < \nu} [-r_{\mu \nu} v_{\mu i} v_{\nu i} + g_{\mu \nu}^2 v_{\mu i} v_{\nu i} - r_{\mu \nu} m_{\mu i}^2 v_{\nu i} - r_{\mu \nu} v_{\mu i} m_{\nu i}^2].
\end{align}

\medskip\noindent
\textbf{Order $3$ in $\eta$ --}
At order $3$, the formula of Appendix~A of \cite{georges1991expand} yields:
\begin{align}\label{eq:pgy_method_order_3}
   \frac{1}{3!}\Big(\frac{\partial^3 \Phi_{\bY,n}}{\partial \eta^3}\Big)_{\eta = 0} &= -\frac{1}{6nm} \langle U_\bY^3\rangle_0.
\end{align}
In order to compute the right hand-side of eq.~\eqref{eq:pgy_method_order_3}, we decompose the operator $U$ of eq.~\eqref{eq:U0_xx} as 
$U_\bY = U_c + U_g + U_m$, with 
\begin{align}\label{eq:decomposition_U0_xx}
   \begin{dcases}
      U_c &\equiv \frac{1}{\sqrt{n}} \sum_{i} \sum_{\mu < \nu} (ih)_{\mu \nu} x_{\mu i} x_{\nu i}, \\
      U_g &\equiv -\frac{1}{\sqrt{n}} \sum_{i} \sum_{\mu < \nu} g_{\mu \nu} x_{\mu i} x_{\nu i}, \\
      U_m &\equiv  \frac{1}{\sqrt{n}} \sum_{i} \sum_{\mu < \nu} [(ih)_{\mu \nu} m_{\mu i} x_{\nu i} + (ih)_{\mu \nu} x_{\mu i} m_{\nu i}].
   \end{dcases}
\end{align}
Therefore, we obtain (dropping the $0$ subscript in the averages to lighten the notations): 
\begin{align}\label{eq:decomposition_U3}
   \langle U_\bY^3 \rangle_0 &= \langle U_c^3 \rangle + \langle U_g^3 \rangle + \langle U_m^3 \rangle \\
  \nonumber &+  3 \langle U_c^2 U_g \rangle + 3 \langle U_c^2 U_m \rangle + 3 \langle U_g^2 U_c \rangle + 3 \langle U_g^2 U_m \rangle 
   + 3 \langle U_m^2 U_c \rangle + 3 \langle U_m^2 U_g \rangle + 6 \langle U_c U_g U_m \rangle.
\end{align}
In eq.~\eqref{eq:decomposition_U3}, the terms $\langle U_g^2 U_c \rangle$ and $\langle U_g^2 U_m\rangle$  are trivially zero since $\langle \bh \rangle = 0$.
Let us now argue that, in eq.~\eqref{eq:decomposition_U3}, all the terms except $\langle U_g^3\rangle$ are negligible in the thermodynamic limit.
We can for instance consider $\langle U_c^3 \rangle$, which can easily be written from eq.~\eqref{eq:decomposition_U0_xx} as (using again $\langle \bh \rangle = 0$): 
\begin{align}
   \langle U_c^3 \rangle &= \frac{1}{n^{3/2}} \sum_{\mu < \nu} \langle (ih)_{\mu \nu}^3 \rangle \sum_{i,j,k} \langle x_{\mu i} x_{\nu i} x_{\mu j} x_{\nu j} x_{\mu k} x_{\nu k} \rangle, \\ 
   &= \frac{1}{n^{3/2}} \sum_{\mu < \nu} \langle (ih)_{\mu \nu}^3 \rangle \sum_{i=1}^n \langle x_{\mu i}^3 \rangle \langle x_{\nu i}^3 \rangle.
\end{align}
In particular, this directly implies that $\langle U_c^3 \rangle = \mathcal{O}(n^{3/2})$, and therefore that it will not contribute to the asymptotic free energy, 
which is in the scale $\Theta(n^2)$. In the same way, one can show: 
\begin{subnumcases}{\label{eq:all_terms_order_3}}
  \langle U_m^3 \rangle = \frac{1}{n^{3/2}} \sum_{\mu < \nu} \langle (ih)_{\mu \nu}^3 \rangle \sum_{i=1}^n [ m_{\mu i}^3 \langle x_{\nu i}^3 \rangle +  m_{\nu i}^3 \langle x_{\mu i}^3 \rangle ] = \mathcal{O}(n^{3/2}) , & \\
  \langle U_c^2 U_g \rangle = \frac{1}{n^{3/2}} \sum_{\mu < \nu} g_{\mu \nu} r_{\mu \nu} \sum_{i=1}^n  \langle x_{\mu i}^3 \rangle \langle x_{\nu i}^3 \rangle = \mathcal{O}(n^{3/2}) , & \\
  \langle U_c^2 U_m \rangle = \frac{1}{n^{3/2}} \sum_{\mu < \nu} \langle (ih)_{\mu \nu}^3 \rangle \sum_{i=1}^n [\langle x_{\mu i}^3 \rangle m_{\nu i} v_{\nu i} + \langle x_{\nu i}^3 \rangle m_{\mu i} v_{\mu i}] = \mathcal{O}(n^{3/2}) , & \\
  \langle U_m^2 U_c \rangle = \frac{2}{n^{3/2}} \sum_{\mu < \nu} \langle (ih)_{\mu \nu}^3 \rangle \sum_{i=1}^n m_{\mu i} v_{\mu i} m_{\nu i} v_{\nu i} = \mathcal{O}(n^{3/2}) , & \\
  \langle U_m^2 U_g \rangle = \frac{2}{n^{3/2}} \sum_{\mu < \nu} g_{\mu \nu} r_{\mu \nu} \sum_{i=1}^n m_{\mu i} v_{\mu i} m_{\nu i} v_{\nu i} = \mathcal{O}(n^{3/2}) , & \\
  \langle U_c U_g U_m \rangle = \frac{1}{n^{3/2}} \sum_{\mu < \nu} g_{\mu \nu} r_{\mu \nu} \sum_{i=1}^n [\langle x_{\mu i}^3 \rangle m_{\nu i} v_{\nu i} + \langle x_{\nu i}^3 \rangle m_{\mu i} v_{\mu i}] = \mathcal{O}(n^{3/2}). &
\end{subnumcases}
All in all, eqs.~\eqref{eq:pgy_method_order_3} and \eqref{eq:all_terms_order_3} imply that: 
\begin{align}
   \frac{1}{3!}\Big(\frac{\partial^3 \Phi_{\bY,n}}{\partial \eta^3}\Big)_{\eta = 0} &= -\frac{1}{6 nm} \langle U_g^3 \rangle + \smallO_n(1).
\end{align}
This remaining term can be computed again by expanding the sums and using the independence of all variables $\{h_{\mu \nu}, x_{\rho i}\}$ at $\eta = 0$.
Elementary combinatorics allow to count the number of terms appearing in this expansion, and we reach:
\begin{align}\label{eq:expansion_Ug3}
   \langle U_g^3 \rangle &= -\frac{1}{8 n^{3/2}} \Big\{ \prod_{a=1}^3\sum_{\mu_a \neq \nu_a} g_{\mu_a \nu_a}\Big\} \sum_{i_1,i_2,i_3} \Big\langle \prod_{a=1}^3 x_{\mu_a i_a} x_{\nu_a i_a} \Big\rangle, \nonumber \\
    &= -\frac{1}{n^{3/2}} \Big\{ \prod_{a=1}^3\sum_{\mu_a \neq \nu_a} g_{\mu_a \nu_a}\Big\} \delta_{\mu_1 \nu_2} \delta_{\mu_2 \nu_3} \delta_{\mu_3 \nu_1} \sum_{i=1}^n \prod_{a=1}^3 v_{\mu_a i} - \frac{1}{2n^{3/2}} \sum_{\mu \neq \nu} g_{\mu \nu}^3 \sum_{i=1}^n \langle x_{\mu i}^3 \rangle \langle x_{\nu i}^3 \rangle.
\end{align}
Note that the factor $8$ in eq.~\eqref{eq:expansion_Ug3} disappeared since all possible pairings of indices $\{\mu_a,\nu_a\}_{a=1}^3$ are equivalent, and one can easily count that there 
are eight such pairings.
The second term in eq.~\eqref{eq:expansion_Ug3} is again $\mathcal{O}(n^{3/2})$, and we reach:
\begin{align}
   \label{eq:phi_pgy_xx_order_3}
   \frac{1}{3!}\Big(\frac{\partial^3 \Phi_{\bY,n}}{\partial \eta^3}\Big)_{\eta = 0} &= \frac{1}{6 n^{5/2}m} \sum_i \sum_{\substack{\mu_1,\mu_2,\mu_3 \\ \text{pairwise distinct}}} g_{\mu_1 \mu_2} g_{\mu_2 \mu_3} g_{\mu_3 \mu_1} \prod_{a=1}^3 v_{\mu_a i} + \smallO_n(1).
\end{align}
Note that the constraint of having pairwise distinct indices is a simple consequence of the constraint $\mu_a \neq \nu_a$ in eq.~\eqref{eq:expansion_Ug3}, 
associated with the form of the pairing.

\medskip\noindent
\textbf{Remark on higher-order cumulants --}
An important remark that one can already conjecture by generalizing from eqs.~\eqref{eq:phi_pgy_xx_order_2},\eqref{eq:phi_pgy_xx_order_3} is that at any given 
order of perturbation in $\eta$, \emph{only the first two moments of the fields $\bH,\bX$ will appear at dominant order}.
This conjecture arises as a consequence of a simple scaling argument : the higher-order moments constraint too much indices on which we can sum, and thus the terms involving them can be neglected.
This was already formulated for similar Plefka expansions in symmetric and bipartite finite-rank models in \cite{maillard2019high}.

\subsection{PGY expansion in the non-symmetric Model~\ref{model:extensive_factorization_fx}}\label{subsec_app:pgy_fx}

In this section, we detail the derivation of Result~\ref{result:pgy_order_4_extensive_fx} from the PGY expansion formalism. 
As many things are similar to the derivation of Result~\ref{result:pgy_order_3_extensive_xx}, we will shorten some arguments that can be easily transposed 
from the symmetric setting.

\subsubsection{The method}\label{subsubsec_app:method_pgy_fx}

As in the symmetric case, we start from the original expression of the TAP free entropy, very similar to eq.~\eqref{eq:phi_extensive_xx_before_plefka}:
\begin{align}
   \begin{dcases}
    n(m+p)\Phi_{\bY,n}(\eta) &= \sum_{\mu,i}\Big[\lambda^F_{\mu i} m^F_{\mu i} + \frac{\gamma^F_{\mu i}}{2}(v^F_{\mu i} + (m^F_{\mu i})^2) \Big] + \sum_{i,l}\Big[\lambda^X_{i l} m^X_{i l} + \frac{\gamma^X_{i l}}{2}(v^X_{i l} + (m^X_{i l})^2) \Big] , \\
    & \hspace{-6pt}+ \sum_{\mu,l}\Big[-\omega_{\mu l} g_{\mu l} - \frac{b_{\mu l}}{2}(-r_{\mu l} + g_{\mu l}^2) \Big] + \ln \int P_H(\mathrm{d}\bH) \, P_F(\mathrm{d}\bF) \, P_X(\mathrm{d}\bX) \, e^{-S_{\rm eff,\eta}[\bF,\bX,\bH]}, \\
    S_{\rm eff,\eta}[\bF,\bX,\bH] &\equiv \sum_{\mu, i} \Big[\lambda^F_{\mu i} F_{\mu i} + \frac{\gamma^F_{\mu i}}{2} F_{\mu i}^2\Big] +   \sum_{i,l} \Big[\lambda^X_{i l} X_{il} + \frac{\gamma^X_{i l}}{2} X_{i l}^2\Big]\\
    &\quad + \sum_{\mu,l} \Big[\omega_{\mu l} (iH)_{\mu l} - \frac{b_{\mu l}}{2} (iH)_{\mu l}^2\Big] + \frac{\eta}{\sqrt{n}}\sum_{\mu,i,l} (iH)_{\mu l} \,F_{\mu i} \,X_{i l}, \\
    H_{\rm eff}[\bF,\bX,\bH] &\equiv \frac{1}{\sqrt{n}} \sum_{\mu,i,l} (iH)_{\mu l} F_{\mu i} X_{il}. 
   \end{dcases}
\end{align}
The operator $U$ of Georges-Yedidia is defined similarly as the one of eq.~\eqref{eq:def_U_xx}:
\begin{align}\label{eq:def_U_fx}
   U_{\bY,\eta} \equiv  H_{\rm eff}- \langle H_{\rm eff}\rangle_{\bY,\eta} &+ \sum_{\mu,i} \partial_\eta\lambda^F_{\mu i} (F_{\mu i} - m^F_{\mu i}) + \frac{1}{2} \sum_{\mu,i}\partial_\eta \gamma^F_{\mu i}[F_{\mu i}^2 - v^F_{\mu i} - (m^F_{\mu i})^2 ] \nonumber \\
   &+ \sum_{i,l} \partial_\eta\lambda^X_{il} (X_{i l} - m^X_{i l}) + \frac{1}{2} \sum_{i,l}\partial_\eta \gamma^X_{i l}[X_{i l}^2 - v^X_{i l} - (m^X_{i l})^2] \nonumber \\ 
   &+ \sum_{\mu,l} \partial_\eta \omega_{\mu l} (iH_{\mu l} + g_{\mu l}) - \frac{1}{2} \sum_{\mu,l}\partial_\eta b_{\mu l}[(iH)_{\mu l}^2 + r_{\mu l} - g_{\mu l}^2].
\end{align}
\textbf{Order $1$ in $\eta$ --} At order $1$, we have directly:
\begin{align}
   \label{eq:phi_order1_fx}
   \Big(\frac{\partial \Phi_{\bY,n}}{\partial \eta}\Big)_{\eta = 0} &= -\frac{1}{n(m+p)} \langle H_{\rm eff}\rangle_0
   = \frac{1}{n^{3/2} (m+p)} \sum_{\mu,i,l} g_{\mu l} m^F_{\mu i} m^X_{il}.
\end{align}
We can then use the Maxwell relations we already described in eq.~\eqref{eq:maxwell}, 
to compute e.g.\:
\begin{align}\label{eq:maxwell_f_fx}
   \begin{dcases}
      \partial_\eta \gamma^F_{\mu i}(\eta = 0) &= 2 n (m+p) \frac{\partial}{\partial v^F_{\mu i}} \Big[\frac{\partial \Phi_{\bY,n}}{\partial \eta}(\eta = 0)\Big] = 0, \\
      \partial_\eta \lambda^F_{\mu i}(\eta = 0) &= n (m+p) \frac{\partial}{\partial m^F_{\mu i}} \Big[\frac{\partial \Phi_{\bY,n}}{\partial \eta}(\eta = 0)\Big] = \frac{1}{\sqrt{n}} \sum_l g_{\mu l} m^X_{il}. 
   \end{dcases}
\end{align}
Applying this technique to all Lagrange multipliers, we compute $U_{\bY,0}$ from eq.~\eqref{eq:def_U_fx}.
Again we will denote by lowercase letters \emph{centered variables}, e.g.\ $x_{\mu i}\equiv X_{\mu i} - m^X_{\mu i}$. 
We obtain:
\begin{align}\label{eq:U0_fx}
   U_{\bY,0} &= \frac{1}{\sqrt{n}} \sum_{\mu,i,l}[(ih)_{\mu l} f_{\mu i} x_{il} - g_{\mu l} f_{\mu i} x_{i l} + (ih)_{\mu l} m^F_{\mu i} x_{il} + (ih)_{\mu l} f_{\mu i} m^X_{il}].
\end{align}
\textbf{Order $2$ in $\eta$ --}
One can then compute, using the formulas of \cite{georges1991expand}, and very similarly to Model~\ref{model:extensive_factorization_xx}:
\begin{align}
   \label{eq:phi_order2_fx}
   \frac{1}{2}\Big(\frac{\partial^2 \Phi_{\bY,n}}{\partial \eta^2}\Big)_{\eta = 0} &= \frac{1}{2n(m+p)} \langle U^2\rangle_0,\nonumber \\
   &= \frac{1}{2 n^2 (m+p)} \sum_{\mu,i,l} [ - r_{\mu l} v^F_{\mu i} v^X_{il} + g_{\mu l}^2 v^F_{\mu i} v^X_{il}- r_{\mu l} (m^F_{\mu i})^2 v^X_{il} - r_{\mu l} v^F_{\mu i} (m^X_{il})^2 ].
\end{align}
Orders $3$ and $4$ are more tedious to compute, and we detail them in separate paragraphs.

\subsubsection{Order $3$ of the expansion}\label{subsubsec:order3_fx}

In order to compute this order of the PGY expansion, we use again the formulas of Appendix~A of \cite{georges1991expand}: 
\begin{align}
\frac{1}{3!}\Big(\frac{\partial^3 \Phi_{\bY,n}}{\partial \eta^3}\Big)_{\eta = 0} &= -\frac{1}{6n(m+p)} \langle U^3 \rangle_0.
\end{align}
We will need to introduce the (centered) third moments of the distributions 
of the independent variables $\{iH_{\mu l}, F_{\mu i}, X_{il}\}$ at $\eta = 0$.
These moments are denoted $\{c^{(3,H)}_{\mu l}, c^{(3,F)}_{\mu i} , c^{(3,X)}_{i l} \}$.
We then decompose the operator of eq.~\eqref{eq:U0_fx} as follows:
\begin{align}\label{eq:U_decomposition_fx}
   U_{\bY,0} &= \frac{1}{\sqrt{n}} \sum_{\mu,i,l}\Big[\underbrace{(ih)_{\mu l} f_{\mu i} x_{il}}_{A} + \underbrace{(- g_{\mu l} f_{\mu i} x_{i l})}_{B_H} + \underbrace{(ih)_{\mu l} m^F_{\mu i} x_{il}}_{B_F} + \underbrace{(ih)_{\mu l} f_{\mu i} m^X_{il}}_{B_X}\Big].
\end{align}
Since all the variables in the equation above are \emph{centered}, we get easily:
\begin{align}\label{eq:U3_first_term_fx}
   &\langle A^3 + B_H^3 + B_F^3 + B_X^3\rangle_0  \\
   &= \frac{1}{n^{3/2}} \sum_{\mu,i,l}[c^{(3,H)}_{\mu l} c^{(3,F)}_{\mu i} c^{(3,X)}_{i l} - g_{\mu l}^3 c^{(3,F)}_{\mu i} c^{(3,X)}_{i l}+ c^{(3,H)}_{\mu l} (m^{F}_{\mu i})^3 c^{(3,X)}_{i l}+ c^{(3,H)}_{\mu l} c^{(3,F)}_{\mu i} (m^{X}_{i l})^3]. \nonumber
\end{align}
Using again the centering of the variables and the decomposition above, we get that the only non-zero terms of the type $\langle X^2 Y\rangle_0$ with $X,Y \in \{A,B_H,B_F,B_X\}$ yields the contribution:
\begin{align}\label{eq:U3_second_term_fx}
   3& \langle A^2(B_H+B_F+B_X)\rangle_0 = \\
   & \frac{3}{n^{3/2}} \sum_{\mu,i,l}[g_{\mu l} r_{\mu l} c^{(3,F)}_{\mu i} c^{(3,X)}_{i l} + c^{(3,H)}_{\mu l} m^{F}_{\mu i} v^F_{\mu i} c^{(3,X)}_{i l} + c^{(3,H)}_{\mu l} c^{(3,F)}_{\mu i} m^{X}_{i l} v^X_{il}]\nonumber.
\end{align}
Finally, the last contribution to $\langle U^3\rangle_0$ comes from the term:
\begin{align}\label{eq:U3_third_term_fx}
   6& \langle A B_H B_F + A B_H B_X + A B_F B_X + B_H B_F B_X\rangle_0 = \\
   & \frac{6}{n^{3/2}} \sum_{\mu,i,l}[g_{\mu l} r_{\mu l} m^F_{\mu i} v^F_{\mu i} c^{(3,X)}_{i l} + g_{\mu l} r_{\mu l}  c^{(3,F)}_{\mu i}m^X_{i l} v^X_{i l}   + c^{(3,H)}_{\mu l} m^F_{\mu i} v^F_{\mu i} m^{X}_{i l} v^X_{il}  + g_{\mu l} r_{\mu l} m^F_{\mu i} v^F_{\mu i} m^{X}_{i l} v^X_{il}]\nonumber.
\end{align}
Summing the contributions from eqs.~\eqref{eq:U3_first_term_fx},\eqref{eq:U3_second_term_fx},\eqref{eq:U3_third_term_fx} yields $\langle U^3 \rangle_0$, which 
then yields:
\begin{align}\label{eq:phi_order3_fx}
\frac{1}{3!}\Big(\frac{\partial^3 \Phi_{\bY,n}}{\partial \eta^3}\Big)_{\eta = 0} &= \frac{-1}{6n^{5/2} (m+p)} \sum_{\mu,i,l} [c^{(3,H)}_{\mu l} c^{(3,F)}_{\mu i}c^{(3,X)}_{i l} - g_{\mu l}^3 c^{(3,F)}_{\mu i}c^{(3,X)}_{i l}+ c^{(3,H)}_{\mu l} c^{(3,F)}_{\mu i}(m^{X}_{i l})^2 \nonumber \\
&  + c^{(3,H)}_{\mu l} (m^{F}_{\mu i})^3 c^{(3,X)}_{i l} + 3 g_{\mu l} r_{\mu l} c^{(3,F)}_{\mu i} c^{(3,X)}_{il} + 3 c^{(3,H)}_{\mu l} m^F_{\mu i} v^F_{\mu i} c^{(3,X)}_{il} + 3 c^{(3,H)}_{\mu l} c^{(3,X)}_{\mu i} m^X_{il} v^X_{il}\nonumber \\
& + 6 g_{\mu l} r_{\mu l} m^F_{\mu i} v^F_{\mu i} c^{(3,X)}_{il} + 6 g_{\mu l} r_{\mu l} c^{(3,F)}_{\mu i} m^X_{i l} v^X_{il} + 6 c^{(3,H)}_{\mu l} m^F_{\mu i} v^F_{\mu i} m^X_{i l} v^X_{il} \nonumber \\
& + 6 g_{\mu l} r_{\mu l} m^F_{\mu i} v^F_{\mu i} m^X_{il} v^X_{il}].
\end{align}
From eq.~\eqref{eq:phi_order3_fx}, since all involved terms inside the sum are of order $\mathcal{O}_N(1)$, it is clear that the third order
is subdominant:
\begin{align}\label{eq:phi_order3_final_fx}
\frac{1}{3!}\Big(\frac{\partial^3 \Phi_{\bY,n}}{\partial \eta^3}\Big)_{\eta = 0} &= \smallO_n(1).
\end{align}
\textbf{Higher-order moments --} An important remark that one can already conjecture by generalizing from eqs.~\eqref{eq:phi_order3_fx},\eqref{eq:phi_order3_final_fx} is that at any given 
order of perturbation in $\eta$, \emph{only the first two moments of the fields $\bH,\bF,\bX$ will appear at dominant order}.
This conjecture arises as a consequence of a simple scaling argument: the higher-order moments constraint too much indices on which we can sum, and thus the terms involving them can be neglected.
Note that we noticed already a completely similar behavior in the symmetric case in Appendix~\ref{subsec_app:first_orders_pgy_xx}.

\subsubsection{Order $4$ of the expansion}\label{subsubsec:order4_fx}
At order $4$, one can again use Appendix~A of \cite{georges1991expand}. The general (and quite heavy) formula is:
\begin{align}
n(m+p) \Big(\frac{\partial^4 \Phi_{\bY,n}}{\partial \eta^4}\Big)_{\eta = 0} &= \langle U^4\rangle_0 - 3 \langle U^2\rangle^2_0 \nonumber \\ 
&- 3 \sum_{\mu,i} \Big[\partial^2_\eta \lambda^F_{\mu i} \langle U^2 f_{\mu i}\rangle_0 + \frac{\partial^2_\eta \gamma^F_{\mu i}}{2} \langle U^2(f_{\mu i}^2 + 2 m^F_{\mu i} f_{\mu i} - v^F_{\mu i})\rangle_0\Big] \nonumber \\
& - 3 \sum_{i,l} \Big[\partial^2_\eta \lambda^X_{i l} \langle U^2 x_{i l}\rangle_0 + \frac{\partial^2_\eta \gamma^X_{i l}}{2} \langle U^2(x_{i l}^2 + 2 m^X_{i l} x_{i l} - v^X_{i l})\rangle_0\Big] \nonumber \\
\label{eq:order4_general}
&- 3 \sum_{\mu,l} \Big[\partial^2_\eta \omega_{\mu l} \langle U^2 (ih)_{\mu l}\rangle_0 + \frac{\partial^2_\eta b_{\mu l}}{2} \langle U^2((ih)_{\mu l}^2 - 2 g_{\mu l} (ih)_{\mu l} + r_{\mu l})\rangle_0\Big].
\end{align}
This section describes the calculation of the order $4$ perturbation of the free entropy for Model~\ref{model:extensive_factorization_fx}.
It is particularly tedious and lengthy, but the techniques involved are not conceptually complicated.
For simplicity, we will not consider terms involving cumulants of order $3$ and $4$ of the variables $ih_{\mu l}$, $f_{\mu i}$ and $x_{il}$. 
One can check that the terms involving these moments cancel out, and in the end only yield sub-dominant contributions in the thermodynamic limit.
From eq.~\eqref{eq:phi_order2_fx} and the Maxwell relations of eq.~\eqref{eq:maxwell}, we obtain the derivatives of the Lagrange multipliers at leading order and at $\eta = 0$:
\begin{align}\label{eq:lagrange_order_2_fx}
   \begin{cases}
   \partial^2_\eta \, \omega_{\mu l} &= \frac{2}{n} g_{\mu l} \sum_i [(m^F_{\mu i})^2 v^X_{il} + v^F_{\mu i} (m^X_{il})^2], \\
   \partial^2_\eta  \, b_{\mu l} &= -\frac{2}{n} \sum_i [v^F_{\mu i} v^X_{il} + (m^F_{\mu i})^2 v^X_{il} + v^F_{\mu i} (m^X_{il})^2], \\
   \partial^2_\eta \, \lambda^F_{\mu i} &= \frac{2}{n} m^F_{\mu i} \sum_l [r_{\mu l} (m^X_{il})^2 - g_{\mu l}^2 v^X_{il}], \\
   \partial^2_\eta  \, \gamma^F_{\mu i} &= -\frac{2}{n} \sum_l [r_{\mu l} v^X_{il} + r_{\mu l} (m^X_{il})^2 - g_{\mu l}^2 v^X_{il}], \\
   \partial^2_\eta \, \lambda^X_{i l} &= \frac{2}{n} m^X_{i l} \sum_\mu [r_{\mu l} (m^F_{\mu i})^2 - g_{\mu l}^2 v^F_{\mu i}], \\
   \partial^2_\eta  \, \gamma^X_{i l} &= -\frac{2}{n} \sum_\mu [r_{\mu l} v^F_{\mu i} + r_{\mu l} (m^F_{\mu i})^2 - g_{\mu l}^2 v^F_{\mu i}].
   \end{cases}
\end{align}
Discarding as we mentioned the cumulants of order greater than $3$, we can compute:
\begin{align*}
   \begin{dcases}
   \langle U^2 x_{il}\rangle_0 &= - \frac{2}{n} m^X_{il} v^X_{il} \sum_\mu r_{\mu l} v^F_{\mu i}, \\
   \langle U^2(x_{il}^2 + 2 m^X_{il} x_{il} - v^X_{il})\rangle_0 &= \frac{2 (v^X_{il})^2 }{n} \sum_\mu [-r_{\mu l} v^F_{\mu i} + g_{\mu l}^2 v^F_{\mu i} - r_{\mu l}(m^F_{\mu i})^2] - \frac{4}{n} v^X_{il} (m^X_{il})^2 \sum_\mu r_{\mu l} v^F_{\mu i}.
   \end{dcases}
\end{align*}
From this and eq.~\eqref{eq:lagrange_order_2_fx}, one can obtain the term involving the derivatives of the Lagrange parameters $\lambda^X$ and $\gamma^X$ in eq.~\eqref{eq:order4_general}:
\begin{align}\label{eq:order4_lagrange_x_fx}
   - 3 \sum_{i,l} &\Big[\partial^2_\eta \lambda^X_{i l} \langle U^2 x_{i l} \rangle_0 + \frac{\partial^2_\eta \gamma^X_{i l}}{2} \langle U^2(x_{i l}^2 + 2 m^X_{i l} x_{i l} - v^X_{i l})\rangle_0\Big] = \\
&- \frac{12}{n^2} \sum_{i,l} v^X_{il} (m^X_{il})^2 \Big[\sum_\mu r_{\mu l}v^F_{\mu i}\Big]^2 - \frac{6}{n^2} \sum_{i,l} (v^X_{il})^2 \Big[\sum_\mu [r_{\mu l} v^F_{\mu i} + r_{\mu l}(m^F_{\mu i})^2 - g_{\mu l}^2 v^F_{\mu i}]\Big]^2. \nonumber
\end{align}
Doing similarly for $(ih)_{\mu l}$ and $f_{\mu i}$, we obtain all the terms involving derivatives of the Lagrange multipliers in eq.~\eqref{eq:order4_general}:
\begin{align}\label{eq:order4_lagrange_f_fx}
   - 3 \sum_{\mu,i} &\Big[\partial^2_\eta \lambda^F_{\mu i} \langle U^2 f_{\mu i}\rangle_0 + \frac{\partial^2_\eta \gamma^F_{\mu i}}{2} \langle U^2(f_{\mu i}^2 + 2 m^F_{\mu i} f_{\mu i} - v^F_{\mu i})\rangle_0\Big] = \\
&- \frac{12}{n^2} \sum_{\mu, i} v^F_{\mu i} (m^F_{\mu i})^2 \Big[\sum_l r_{\mu l}v^X_{i l}\Big]^2 - \frac{6}{n^2} \sum_{\mu,i} (v^F_{\mu i})^2 \Big[\sum_l [r_{\mu l} v^X_{i l} + r_{\mu l}(m^X_{i l})^2 - g_{\mu l}^2 v^X_{i l}]\Big]^2, \nonumber\\
   \label{eq:order4_lagrange_h_fx}
   - 3 \sum_{\mu,l} &\Big[\partial^2_\eta \omega_{\mu l} \langle U^2 (ih)_{\mu l}\rangle_0 + \frac{\partial^2_\eta b_{\mu l}}{2} \langle U^2((ih)_{\mu l}^2 - 2 g_{\mu l} (ih)_{\mu l} + r_{\mu l})\rangle_0\Big] = \\
& \frac{12}{n^2} \sum_{\mu,l} r_{\mu l} g_{\mu l}^2 \Big[\sum_i v^F_{\mu i}v^X_{i l}\Big]^2 - \frac{6}{n^2} \sum_{\mu,l} r_{\mu l}^2 \Big[\sum_i [v^F_{\mu i} v^X_{il} + (m^F_{\mu i})^2 v^X_{il} + v^F_{\mu i}(m^X_{il})^2]\Big]^2. \nonumber
\end{align}
The calculation at order $2$ already gave (cf.\ eq.~\eqref{eq:phi_order2_fx}):
\begin{align}\label{eq:order4_U2_term_fx}
   -3 \langle U^2\rangle^2_0 = -\frac{3}{n^2} \sum_{\substack{\mu,i,l \\ \mu' i' l'}} &[ - r_{\mu l} v^F_{\mu i} v^X_{il} + g_{\mu l}^2 v^F_{\mu i} v^X_{il}- r_{\mu l} (m^F_{\mu i})^2 v^X_{il} - r_{\mu l} v^F_{\mu i} (m^X_{il})^2] \\ 
   & \hspace{0.5cm} \times [ - r_{\mu' l'} v^F_{\mu' i'} v^X_{i'l'} + g_{\mu' l'}^2 v^F_{\mu' i'} v^X_{i'l'}- r_{\mu' l'} (m^F_{\mu' i'})^2 v^X_{i'l'} - r_{\mu' l'} v^F_{\mu' i'} (m^X_{i'l'})^2]. \nonumber
\end{align}
We finally have to compute $\langle U^4 \rangle_0$, whose calculation is very tedious, but not conceptually difficult.
We again make use of the decomposition of eq.~\eqref{eq:U_decomposition_fx}.
A first simplification arises when using that the variables are centered and that we can neglect their moments of odd order.
This implies:
\begin{align}\label{eq:decomposition_U4_fx}
   \langle U^4\rangle_0 &= \underbrace{\langle A^4 \rangle_0}_{I_1} + \underbrace{6 \langle A^2(B_H^2+B_F^2+B_X^2) \rangle_0}_{I_2} +  \underbrace{\langle (B_H+B_F+B_X)^4\rangle_0}_{I_3} + \smallO_n(1).
\end{align}
We now compute the three terms $I_1,I_2,I_3$ independently.
An explicit calculation gives 
\begin{align}\label{eq:U4_I1_fx}
      I_1 &= \frac{6}{n^2} \sum_{\mu, i, l} r_{\mu l}^2 (v^F_{\mu i})^2 (v^X_{i l})^2 + \frac{3}{n^2} \sum_{\substack{\mu,i,l \\ \mu' i' l'}} r_{\mu l} r_{\mu' l'} v^F_{\mu i} v^F_{\mu' i'} v^X_{il} v^X_{i'l'} \\ 
      &+\frac{6}{n^2} \sum_{\mu, i, l} r_{\mu l} v^F_{\mu i} v^X_{il} \Big(\sum_{\mu'} r_{\mu' l} v^F_{\mu' i} v^X_{il} + \sum_{i'} r_{\mu l} v^F_{\mu i'} v^X_{i' l} + \sum_{l'} r_{\mu l'} v^F_{\mu i} v^X_{il'}\Big) + \smallO_n(1)\nonumber.
\end{align}
Importantly, the indices are not supposed to be pairwise distinct unless explicitly stated so.
The terms $I_2,I_3$ can be explicitly computed as well, and are very lengthy:
\begin{align}\label{eq:U4_I2_fx}
      I_2 &= \frac{6}{n^2}\sum_{\substack{\mu,i,l \\ \mu' i' l'}} r_{\mu l} v^F_{\mu i} v^X_{i l} [-g_{\mu' l'}^2 v^F_{\mu' i'} v^X_{i'l'} + r_{\mu'l'} v^F_{\mu' i'} (m^X_{i'l'})^2 + r_{\mu'l'}(m^F_{\mu'i'})^2 v^X_{il}]
      \\
      &+\frac{12}{n^2} \sum_{\mu,i,l} r_{\mu l}v^F_{\mu i} v^X_{il}[-g_{\mu l}^2 v^F_{\mu i} v^X_{il} + r_{\mu l} v^F_{\mu i} (m^X_{il})^2 + r_{\mu l} (m^F_{\mu i})^2 v^X_{il}]\nonumber \\ 
      &+\frac{12}{n^2} \sum_{\mu,i,l} \sum_{\mu'} r_{\mu l}v^F_{\mu i} v^X_{il}[-g_{\mu' l}^2 v^F_{\mu' i} v^X_{il} + r_{\mu' l} v^F_{\mu' i} (m^X_{il})^2 + r_{\mu' l} (m^F_{\mu' i})^2 v^X_{il}] \nonumber\\ 
      &+\frac{12}{n^2} \sum_{\mu,i,l} \sum_{i'} r_{\mu l}v^F_{\mu i} v^X_{il}[-g_{\mu l}^2 v^F_{\mu i'} v^X_{i'l} + r_{\mu l} v^F_{\mu i'} (m^X_{i'l})^2 + r_{\mu l} (m^F_{\mu i'})^2 v^X_{i'l}] \nonumber\\ 
      &+\frac{12}{n^2} \sum_{\mu,i,l} \sum_{l'} r_{\mu l}v^F_{\mu i} v^X_{il}[-g_{\mu l'}^2 v^F_{\mu i} v^X_{il'} + r_{\mu l'} v^F_{\mu i} (m^X_{il'})^2 + r_{\mu l'} (m^F_{\mu i})^2 v^X_{il'}] \nonumber  + \smallO_n(1), \\ 
      \label{eq:U4_I3_fx}
      I_3 &= \frac{3}{n^2} \sum_{\substack{\mu,i,l \\ \mu' i' l'}} [g_{\mu l}^2 g_{\mu' l'}^2 v^F_{\mu i} v^F_{\mu' i'} v^X_{il} v^X_{i'l'} + r_{\mu l} r_{\mu' l'} (m^F_{\mu i})^2 (m^F_{\mu' i'})^2 v^X_{il} v^X_{i'l'} + r_{\mu l} r_{\mu' l'} v^F_{\mu i} v^F_{\mu' i'} (m^X_{il})^2 (m^X_{i'l'})^2] \nonumber \\ 
      &- \frac{6}{n^2} \sum_{\substack{\mu,i,l \\ \mu' i' l'}} [-r_{\mu l}r_{\mu'l'} (m^F_{\mu' i'})^2 v^F_{\mu i} (m^X_{il})^2 v^X_{i'l'} + g_{\mu l}^2 r_{\mu'l'} (m^F_{\mu' i'})^2 v^F_{\mu i} v^X_{il} v^X_{i'l'} + g_{\mu l}^2 r_{\mu'l'} v^F_{\mu' i'} v^F_{\mu i} (m^X_{i'l'})^2 v^X_{il}] \nonumber \\ 
      &- \frac{12}{n^2} \sum_{\mu,i,l} \Big[\sum_{\mu'} g_{\mu l}^2 r_{\mu' l} (m^F_{\mu i})^2 v^F_{\mu i} (v^X_{il})^2 - \sum_{i'} r_{\mu l}^2 (m^F_{\mu i'})^2 v^F_{\mu i} (m^X_{il})^2 v^X_{i'l} + \sum_{l'} g_{\mu l}^2 r_{\mu l'}  (v^F_{\mu i})^2 (m^X_{il'})^2 v^X_{il}  \Big] \nonumber \\ 
      &+ \frac{6}{n^2} \sum_{\mu,i,l} \Big[\sum_{\mu',i'} m^F_{\mu i} m^F_{\mu' i} m^F_{\mu' i'} m^F_{\mu i'} r_{\mu l} r_{\mu' l} v^X_{il} v^X_{i'l} + \sum_{i',l'} m^X_{i l} m^X_{i' l} m^X_{i' l'} m^X_{i l'} v^F_{\mu i} v^F_{\mu i'} r_{\mu l} r_{\mu l'}\Big] \nonumber, \\
      &+ \frac{6}{n^2} \sum_{\mu,i,l} \Big[\sum_{\mu',l'} g_{\mu l} g_{\mu' l} g_{\mu' l'} g_{\mu l'} v^F_{\mu i} v^F_{\mu' i} v^X_{il} v^X_{il'}\Big] + \smallO_n(1).
\end{align}
Many simplifications occur in the terms of \cref{eq:order4_lagrange_x_fx,eq:order4_lagrange_f_fx,eq:order4_lagrange_h_fx,eq:order4_U2_term_fx,eq:U4_I1_fx,eq:U4_I2_fx,eq:U4_I3_fx}.
Two type of terms are for instance negligible:
\begin{itemize}[leftmargin=*]
   \item Terms of the type $n^{-4} \sum_{\mu i l} A_{\mu i l}$, with $A_{\mu i l}$ typically of order $1$. These terms are negligible by a simple scaling argument.
   \item Terms involving $m^F$ or $m^X$. For instance, the term:
   \begin{align*}
      \sum_{\mu} \sum_{i \neq i'} \sum_{l \neq l'} m^X_{i l} m^X_{i l'} m^X_{i' l'} m^X_{i' l} r_{\mu l} r_{\mu l'} v^F_{\mu i} v^F_{\mu i'}.
   \end{align*}
   By Hypothesis~\ref{hyp:means_uncorrelated_extensive_xx}, the variables $m^F,m^X$ behave like uncorrelated variables, so that all these terms will be negligible. 
   A more detailed explanation of how uncorrelated variables leads to all these terms being negligible can be found e.g.\ in \cite{maillard2019high}.
   This is precisely the sort of terms that are not negligible when involving $g_{\mu l}$, because of the structure described in \ref{hyp:structure_g_extensive_xx_new}.
\end{itemize}
We can now sum all \cref{eq:order4_lagrange_x_fx,eq:order4_lagrange_f_fx,eq:order4_lagrange_h_fx,eq:order4_U2_term_fx,eq:U4_I1_fx,eq:U4_I2_fx,eq:U4_I3_fx}, simplifying the terms that are negligible by the arguments above, and checking that almost all non-negligible terms are cancelling each other.
This is a lengthy but straightforwards calculation, and we reach the result:
\begin{align}\label{eq:phi_order4_fx}
\frac{1}{4!}\Big(\frac{\partial^4 \Phi_{\bY,n}}{\partial \eta^4}\Big)_{\eta=0} &= \frac{1}{4 n^3(m+p)} \sum_i \sum_{\mu_1 \neq \mu_2} \sum_{l_1 \neq l_2} g_{\mu_1 l_1} g_{\mu_1 l_2} g_{\mu_2 l_2} g_{\mu_2 l_1} v^F_{\mu_1 i} v^F_{\mu_2 i} v^X_{i l_1} v^X_{i l_2} + \smallO_n(1).
\end{align}
This ends the derivation of Result~\ref{result:pgy_order_4_extensive_fx}.
\section{On the solutions to Matytsin's equations}
\label{sec_app:matytsin_dyson}

\subsection{Derivation of Result~\ref{result:matytsin_solution_dyson}}

Note first that this result is shown in the very particular case in which $\bS$ is a Wigner matrix in \cite{bun2014instanton}.
Let us recall the complete form of Burgers' equation:
\begin{align}\label{eq:matytsin_pde_original}
    \begin{dcases}
        \partial_t \rho(x,t) + \partial_x[\rho(x,t) v(x,t)] &= 0, \\ 
        \partial_t v(x,t) + v(x,t) \partial_x v(x,t) &= \pi^2 \rho(x,t) \partial_x \rho(x,t).
    \end{dcases}
\end{align}
Recall that $g_{\bY(t)}(z)$ is the Stieltjes transform of the asymptotic eigenvalue distribution of $\bY(t) / \sqrt{m}$, 
and similarly $g_\bS$ is the Stieltjes transform of $\rho_\bS$.
It is known by simple free probability that for every $z$ such that $\mathrm{Im}[z] > 0$, we have that $\mathrm{Im}[g_{\bY(t)}(z)] > 0$ and that $g_{\bY(t)}$ satisfies 
the free convolution identity:
\begin{align}\label{eq:free_prob_additive_noise}
	g_{\bY(t)}(z) &= g_\bS[z + t g_{\bY(t)}(z)].
\end{align}
Note that defining $\rho$ and $v$ by eq.~\eqref{eq:def_rho_v_solution_dyson}, the boundary conditions of eq.~\eqref{eq:burgers_complex} 
are directly satisfied by the Stieltjes-Perron theorem.
In the following, we denote by $(\rho_\epsilon,v_\epsilon)$ the functions defined by eq.~\eqref{eq:def_rho_v_solution_dyson} 
without taking the $\epsilon \downarrow 0$ limit.
By eq.~\eqref{eq:free_prob_additive_noise}, we have for any $x$, with $f_\epsilon(x,t) \equiv - \overline{g_{\bY(t)}}(x + i\epsilon) = -g_{\bY(t)}(x - i\epsilon) = v_\epsilon(x,t) + i \pi \rho_\epsilon(x,t)$:
\begin{align*}
	\partial_t f_\epsilon(x,t) &= - \big\{- f_\epsilon(x,t) - t \partial_t f_\epsilon(x,t) \big\}g_{\bS}'[x - i \epsilon - t f_\epsilon(x,t)]\big\}, \\
	\partial_x f_\epsilon(x,t) &= -\big\{1  - t \partial_x f_\epsilon(x,t) \big\}g_{\bS}'[x - i \epsilon - t f_\epsilon(x,t)]\big\}.
\end{align*}
This implies
\begin{align*}
	\partial_t f_\epsilon + f_\epsilon \partial_x f_\epsilon &= -\big\{-f_\epsilon(x,t) - t \partial_t f_\epsilon(x,t) + f_\epsilon(x,t) - t f_\epsilon(x,t) \partial_x f_\epsilon(x,t) \big\} g_{\bS}'[x - i \epsilon - t f_\epsilon(x,t)], \\
	&= t\big\{\partial_t f_\epsilon(x,t) + f_\epsilon(x,t) \partial_x f_\epsilon(x,t) \big\} g_{\bS}'[x - i \epsilon - t f_\epsilon(x,t)]
\end{align*}
This implies, that \emph{for any $\epsilon > 0$}, $f_\epsilon(x,t)$ satisfies the complex Burgers' equation for all $t \in (0,\Delta)$.
Moreover, as $\epsilon \downarrow 0$, this solution satisfies the proper boundary conditions: this ends our justification 
of Result~\ref{result:matytsin_solution_dyson}.

\medskip\noindent
\textbf{Remark: the derivation of \cite{schmidt2018statistical} --}
In the appendix of his PhD thesis \cite{schmidt2018statistical}, C.~Schmidt states a result very similar to Result~\ref{result:matytsin_solution_dyson}.
However, there is an essential issue in his justification: 
the argument is based on the use of the methods of characteristics, which is in general wrong for complex PDEs. Fortunately, the solution found remains correct, as we show above.

\subsection{Derivation of Result~\ref{result:gaussian_integral_matytsin}}\label{appendix:bird}

\myskip
In this part, we ``shoot birds with cannons'' (in the words of \cite{schmidt2018statistical}), in order 
to derive Result~\ref{result:gaussian_integral_matytsin}.
The idea is quite simple:
consider $\bA$ a symmetric $n \times n$ matrix, with a well-defined asymptotic spectral distribution $\rho_A$.
We define the Gaussian integral:
\begin{align}\label{eq:def_In}
    I_n(\bA) &\equiv \frac{1}{n^2} \ln \int \prod_{i\leq j}\frac{\mathrm{d}H_{ij}}{\sqrt{2(1+\delta_{ij})\pi/n}} \exp\Big\{-\frac{n}{4} \mathrm{Tr}[\bH^2] + \frac{n}{2} \mathrm{Tr}[\bH \bA]\Big\}.
\end{align}
It is trivial to compute it, and if $I(\rho_A) \equiv \lim_{n \to \infty} I_n(\bA)$, we have directly
\begin{align}\label{eq:I_rho_explicit}
    I(\rho_A) &= \lim_{n\to \infty} \frac{1}{4 n} \sum_{i,j} A_{ij}^2 = \frac{1}{4} \int \rho_A(\rd x) \, x^2.
\end{align}
Now comes the cannon part. Can we find back eq.~\eqref{eq:I_rho_explicit} using Matytsin's formalism and HCIZ integrals?
We start from eq.~\eqref{eq:def_In}, introducing the change of variable to the eigenvalues of $\bH$, exactly as we did in eq.~\eqref{eq:Zn_direct_quenched_2}:
\begin{align}
    I_n(\bA) &= \frac{1}{n^2} \ln \mathcal{C}_n + \frac{1}{n^2} \ln \int_{\bbR^n} \rd \bL \, \prod_{i < j} |l_i - l_j| \, e^{-\frac{n}{4} \sum_i l_i^2} \int_{\mathcal{O}(n)} \mathcal{D}\bO e^{\frac{n}{2} \mathrm{Tr}[\bO \bL \bO^\intercal \bA]}, \\
    \nonumber
    \mathcal{C}_n &=  \frac{\pi^{n(n+1)/4}}{2^{n/2} \Gamma(n+1) \prod_{i=1}^n \Gamma(i/2)} \frac{1}{(2\pi/n)^{n(n-1)/4}} \frac{1}{(4\pi/n)^{n/2}}.
\end{align}
We worked out the asymptotics of $\mathcal{C}_n$ in Section~\ref{subsec:fentropy_denoising}, and we derived:
\begin{align}
    \frac{1}{n^2} \ln \mathcal{C}_n &= \frac{3}{8} + \smallO_n(1).
\end{align}
We can now use Laplace's method on $\bL$, and we reach: 
\begin{align}\label{eq:In_Mat_1}
    I(\rho_A) = \frac{3}{8} + \sup_{\rho_L} \Big\{\frac{1}{2} \int \rho_L^{\otimes 2}(\rd x, \rd y) \ln |x-y| - \frac{1}{4} \int \rho_L(\rd x)\, x^2 + \frac{1}{2}I[\rho_L, \rho_A]\Big\},
\end{align}
with $I[\rho_L, \rho_A] = I_{\Delta=1}[\rho_L, \rho_A]$ the Matytsin function of eq.~\eqref{eq:def_matytsin_limit}.
Using this explicit form into eq.~\eqref{eq:In_Mat_1} we reach:
\begin{align}
    I(\rho_A) &= \frac{1}{4} \int \rho_A(\rd x) \, x^2 + \sup_{\rho_L} \Big\{\frac{1}{4} \int \rho_L^{\otimes 2}(\rd x, \rd y) \ln |x-y| \\ 
    &- \frac{1}{4} \int \rho_A^{\otimes 2}(\rd x, \rd y) \ln |x-y| - \frac{1}{4} \int_0^1 \rd t \int \rd x \, \rho(x,t) \big[\frac{\pi^2}{3} \rho(x,t)^2 + v(x,t)^2\big] \Big\}. \nonumber
\end{align}
Comparing with eq.~\eqref{eq:I_rho_explicit} shows the equations presented in Result~\ref{result:gaussian_integral_matytsin}.

\myskip
The last part of Result~\ref{result:gaussian_integral_matytsin} can be justified using known results on high-dimensional HCIZ integrals and their links with large deviations theory \cite{guionnet2004large,bun2014instanton}.
Indeed, these works show (see e.g.\ eqs.~(14-15) of \cite{bun2014instanton}) that the following function:
\begin{align}
    S[\rho_L] \equiv 
    - \frac{3}{8} + \frac{1}{4} \int \rho_A(\rd x) \, x^2  + \frac{1}{4} \int \rho_L(\rd x)\, x^2 - \frac{1}{2} \int \rho_L^{\otimes 2}(\rd x, \rd y) \ln |x-y| - \frac{1}{2}I[\rho_L, \rho_A]
\end{align}
is the rate function for the large deviations of the Dyson Brownian Motion starting from $\rho_A$, at time $t = 1$.
In particular, the minimum of this rate function is reached in the expected eigenvalue density of the Dyson Brownian motion, i.e.\ $\rho_L^\star = \rho_A \boxplus \sigma_\mathrm{s.c.}$, 
which shows the last part of Result~\ref{result:gaussian_integral_matytsin}.

\myskip 
\textbf{A long remark: an alternative derivation of Result~\ref{result:gaussian_integral_matytsin} --}
Finally, we note that we can show eq.~\eqref{eq:result_gaussian_integral_matytsin} (or equivalently eq.~\eqref{eq:result_gaussian_integral_matytsin_2})
in an alternative manner that does not appeal to the Gaussian integral calculation we described.
To see this, recall that we know that $\rho_L^\star = \rho_A \boxplus \sigma_\mathrm{s.c.}$, 
and we denote $\rho(x,t)$ the density of the Dyson Brownian Motion (that solves the Euler-Matytsin equations between $\rho_A$ and $\rho_L^\star$, cf.\ Result~\ref{result:matytsin_solution_dyson}).
Let us define
\begin{align}
    G(t) &\equiv \int \rd x \, \rd y \, \rho( x, t) \, \rho(y, t) \, \ln |x-y| - \int_0^t \rd u \int \rd x \, \rho(x,u) \big[\frac{\pi^2}{3} \rho(x,u)^2 + v(x,u)^2\big].
\end{align}
We will obtain that $G(t)$ is constant by showing $G'(t) = 0$. From the definition of $G$, we have: 
\begin{align}
   G'(t) &=  2 \int \partial_t \rho(\rd x, t) \, \rho(\rd y, t) \, \ln |x-y| - \int \rd x \, \rho(x,t) \big[\frac{\pi^2}{3} \rho(x,t)^2 + v(x,t)^2\big].
\end{align}
Using eq.~\eqref{eq:burgers_complex} and integration by parts, we have, with $\mathrm{P.V.}$ the principal value of the integral:
\begin{align}\label{eq:Gprime}
   G'(t) &=  2 \int \rd x\, \rho(x,t) v(x,t) \mathrm{P.V.} \Big\{\int \rd y \rho(y, t) \frac{1}{x-y} \Big\} - \int \rd x \, \rho(x,t) \big[\frac{\pi^2}{3} \rho(x,t)^2 + v(x,t)^2\big].
\end{align}
However, we know by Result~\ref{result:matytsin_solution_dyson} that here we have: 
\begin{align}
    v(x,t) &= \mathrm{P.V.} \int \frac{\rho(y,t)}{x-y} \rd y,
\end{align}
Let us emphasize that this equality, shown in Result~\ref{result:matytsin_solution_dyson}, is really specific to the fact that we are 
considering the Matytsin equations between $\rho_A$ and $\rho_L^\star = \rho_A \boxplus \sigma_\mathrm{s.c.}$.
In general, there is a remainder term in $v(x,t)$ beyond the Hilbert transform of $\rho(x,t)$ \cite{guionnet2002large,guionnet2004first,menon2017complex}. 
This identity implies that eq.~\eqref{eq:Gprime} becomes:
\begin{align}\label{eq:Gprime_2}
   G'(t) &=  \int \rd x\, \rho(x,t) v(x,t)^2  - \frac{\pi^2}{3} \int \rd x \, \rho(x,t)^3.
\end{align}
The result then follows by a simple consequence of the theory of Hilbert transformation, 
which we give here as a lemma.
It can be found in different forms e.g.\ in \cite{tricomi1985integral} or \cite{guionnet2002large,guionnet2004first}, and we give a short proof here for completeness.
\begin{lemma}[Properties of Hilbert transformation]\label{lemma:hilbert}
    \noindent
    Let $\rho$ be a (well-behaved) probability density, and denote $\mathcal{H}[\rho]$ its Hilbert transform: 
    \begin{align}
        \mathcal{H}[\rho](x) \equiv \frac{1}{\pi} \mathrm{P.V.} \int \frac{\rho(y)}{x-y} \rd y = -\frac{1}{\pi}\lim_{\epsilon \downarrow 0} \mathrm{Re}[g_\rho(x+i \epsilon)].
    \end{align}
    Then one has the identity: 
    \begin{align}
        \frac{1}{3}\int \rd x \, \rho(x)^3 &= \int \rd x \, \rho(x) \, \mathcal{H}[\rho](x)^2.
    \end{align}
\end{lemma}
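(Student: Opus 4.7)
\textbf{Proof plan for Lemma~\ref{lemma:hilbert}.} The identity is a contour-integration fact about the boundary values of the Stieltjes transform, so the plan is to recognize both sides of the equation as moments of $g_\rho$ on the real line and then exploit analyticity in the upper half-plane to eliminate one of them.

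Concretely, I would introduce the Stieltjes transform
\begin{align*}
g_\rho(z) \equiv \int \frac{\rho(y)}{y-z}\,\rd y,\qquad z\in\mathbb{C}\setminus\mathbb{R},
\end{align*}
which is analytic on $\mathbb{C}_+$. By the Sokhotski--Plemelj formula, the boundary values on the real axis read
\begin{align*}
g_\rho(x+i0^+) \;=\; -\pi\, \mathcal{H}[\rho](x) + i\pi\,\rho(x),
\end{align*}
where I used the sign convention of the lemma, $\mathcal{H}[\rho](x) = -\pi^{-1}\lim_{\epsilon\downarrow 0}\mathrm{Re}\,g_\rho(x+i\epsilon)$. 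Writing $a(x) \equiv -\mathcal{H}[\rho](x)$ and $b(x)\equiv\rho(x)$, so that $g_\rho(x+i0^+) = \pi(a+ib)$, an elementary expansion gives
\begin{align*}
(a+ib)^3 = (a^3-3ab^2) + i(3a^2 b - b^3),
\end{align*}
so that the imaginary part of $g_\rho(x+i0^+)^3/\pi^3$ equals $3\rho(x)\mathcal{H}[\rho](x)^2 - \rho(x)^3$.

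The key step is then the vanishing of the boundary integral of $g_\rho^3$. Since $\rho$ is a probability density, $g_\rho(z) = -1/z + O(1/z^2)$ as $|z|\to\infty$, hence $g_\rho(z)^3 = O(|z|^{-3})$ uniformly in the upper half-plane. Closing the contour with a semicircle $\Gamma_R\subset\mathbb{C}_+$ of radius $R$ and applying Cauchy's theorem to the analytic function $g_\rho^3$ gives
\begin{align*}
\int_{-R}^{R} g_\rho(x+i0^+)^3\,\rd x + \int_{\Gamma_R} g_\rho(z)^3\,\rd z \;=\; 0,
\end{align*}
and the $\Gamma_R$-contribution is $O(R^{-2})$, so that $\int_{\mathbb{R}} g_\rho(x+i0^+)^3\,\rd x = 0$. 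Taking imaginary parts yields $\int(3\rho\mathcal{H}[\rho]^2 - \rho^3)\,\rd x = 0$, which is exactly the claimed identity.

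\textbf{Potential obstacles.} The calculation itself is routine; the only thing requiring a small amount of care is the justification of the boundary-value formula and of the contour deformation under the regularity assumed on $\rho$ (``well-behaved'' in the statement). If $\rho$ is, say, compactly supported and Hölder-continuous, then $\mathcal{H}[\rho]$ is continuous, $g_\rho^3$ extends continuously to $\overline{\mathbb{C}_+}$ away from the support singularities and the $1/z^3$ decay at infinity is uniform enough to make the semicircle argument rigorous; for the general ``well-behaved'' case one would justify the identity by a density argument, approximating $\rho$ in a suitable norm by such nice densities (as is standard for Hilbert-transform identities of this type, cf.\ \cite{tricomi1985integral}). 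Taking the real part of the same identity, for completeness, reproduces the companion formula $\int \mathcal{H}[\rho]^3\,\rd x = 3\int \mathcal{H}[\rho]\,\rho^2\,\rd x$, which is not needed here but confirms the consistency of the sign conventions used in~\eqref{eq:Gprime_2}.
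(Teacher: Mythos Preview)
Your proof is correct. Both your argument and the paper's ultimately rest on the analyticity of the Stieltjes transform in the upper half-plane, but they exploit it differently. The paper first derives the pointwise identity $\mathcal{H}[\rho]^2 = \rho^2 + 2\,\mathcal{H}\!\big[\rho\,\mathcal{H}[\rho]\big]$ (obtained by applying the ``imaginary part is the Hilbert transform of the real part'' principle to $f^2$ with $f=\rho+i\mathcal{H}[\rho]$), then multiplies by $\rho$, integrates, and invokes the skew-adjointness $\int g\,\mathcal{H}[f] = -\int f\,\mathcal{H}[g]$ to close the algebra. You instead go directly to the cube $g_\rho^3$, use the $O(|z|^{-3})$ decay to make the semicircle contribution vanish, and read off the identity from the imaginary part of $\int_{\mathbb{R}} g_\rho^3 = 0$. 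Your route is more streamlined for this particular statement, bypassing both the auxiliary identity \eqref{eq:identity_hilbert} and the skew-adjointness step; the paper's route has the side benefit of producing the pointwise relation $\mathcal{H}[\rho]^2 = \rho^2 + 2\,\mathcal{H}[\rho\,\mathcal{H}[\rho]]$, which is of independent interest. The paper also mentions a Fourier-side derivation via $\widehat{\mathcal{H}[\rho]}(u)=-i\,\mathrm{sign}(u)\,\hat\rho(u)$, which it does not spell out.
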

Applying Lemma~\ref{lemma:hilbert} with $v(t) = \pi \mathcal{H}[\rho(t)]$ ends our alternative derivation of Result~\ref{result:gaussian_integral_matytsin}.

\myskip
\begin{proof}[Proof of Lemma~\ref{lemma:hilbert} --]
    This simple fact can be shown in (at least) two ways. The more pedestrian way is to use the Fourier transform of the Hilbert transformation, 
    which reads 
    \begin{align}
        \hat{\mathcal{H}}[\rho](u) &\equiv \int e^{-2 i \pi x u} \mathcal{H}[\rho](x) \, \rd x = - i \mathrm{sign}(u) \hat{\rho}(u).
    \end{align} 
    Here we report a more clever proof that uses the general identity:
    \begin{align}\label{eq:identity_hilbert}
        \mathcal{H}[\rho]^2 &= \rho^2 + 2 \mathcal{H}[\rho \mathcal{H}[\rho]].
    \end{align}
    This can be shown by an argument of complex analysis, based on the following theorem:
    let $f(z)$ be an analytic function in the upper-half of the complex plane. 
    Then if $a(x) \equiv \lim_{\epsilon \downarrow 0} \mathrm{Re}[f(x+i \epsilon)]$ and $b(x) \equiv \lim_{\epsilon \downarrow 0} \mathrm{Im} [f(x+i \epsilon)]$, 
    we have $b = \mathcal{H}[a]$.
    Using this result for $f^2$ (with $f = \rho + i H[\rho]$) we reach eq.~\eqref{eq:identity_hilbert}. 
    Therefore we have:
    \begin{align}
        \int \mathcal{H}[\rho](x)^2 \, \rho(x) \, \rd x &= \int \rho(x)^3 \, \rd x + 2 \int \rd x \, \rho(x) \, \mathcal{H}[\rho \mathcal{H}[\rho]](x).
    \end{align}
    It is also a common property of the Hilbert transform (for all the mentioned properties see e.g.\ \cite{tao2004lecture}) that 
    $\int g \mathcal{H}[f] =  - \int f \mathcal{H}[g]$. Applying it in the last equation yields:
    \begin{align}
        \int \rd x \, \rho(x) \, \mathcal{H}[\rho](x)^2 &= \int \rho(x)^3 \, \rd x - 2 \int \rd x \, \rho(x) \, \mathcal{H}[\rho](x)^2, 
    \end{align}
    which is the sought result.
\end{proof}
\section{Small-\texorpdfstring{$\alpha$}{alpha} expansion of the optimal RIE}\label{sec_app:small_alpha_rie}

Here we study the behavior of the optimal rotationally-invariant denoiser of \cite{bun2016rotational} in the overcomplete limit of small $\alpha$. 
Starting from the optimal denoiser found in eq.~\eqref{eq:expression_RIE}, we need to understand the behavior of the Stieltjes transform of $\bY/\sqrt{m}$, with $\bY$ 
the shifted observation matrix of eq.~\eqref{eq:shifted_Y}.

\medskip\noindent
At leading order as $\alpha \to 0$, by the random matrix equivalent of the central limit theorem (and somewhat informally), 
$\bY / \sqrt{m}$ should behave like $\bW / \sqrt{m} + \sqrt{\Delta} \bZ / \sqrt{m}$, with $\bW$ and $\bZ$ independent Gaussian Wigner matrices. 
The Stieltjes transform of a Wigner matrix is well-known, and we get for any $y$ eigenvalue of $\bY/\sqrt{m}$ that: 
\begin{align}
  v_\bY(y) = \frac{y}{2 (\Delta + 1)} + \smallO_\alpha(1).
\end{align}
This yields:
\begin{align}
	\hat{\xi_\mu} = \frac{1}{\Delta + 1} y_\mu + \smallO_\alpha(1),
\end{align}
which corresponds to the limit of scalar denoising \cite{bun2016rotational}. 
Let us now go to higher order in $\alpha$. 
We need to consider the corrections of the 
spectrum of $\bY$ at first non-trivial order in $\alpha$. 
This can be done by recalling the $\mathcal{R}$-transform of the shifted Wishart matrix $\bR/\sqrt{m} = \bX \bX^\intercal / \sqrt{nm} - \alpha^{-1/2} \Id_m$:
\begin{align}
	\mathcal{R}_\bR(s) &= \frac{1}{\sqrt{\alpha}(1 -  s \sqrt{\alpha})} - \frac{1}{\sqrt{\alpha}} = s + \sqrt{\alpha} s^2 + \mathcal{O}(\alpha).
\end{align}
Therefore by free addition the $\mathcal{R}$-transform of $\bY/\sqrt{m}$ is:
\begin{align*}
	\mathcal{R}_\bY(s) &= s (1 + \Delta) + \sqrt{\alpha} s^2 + \mathcal{O}(\alpha) = g_{\bY}^{-1}(-s) - \frac{1}{s}.
\end{align*}
Letting $s = -g_\bY(z)$, we reach the expansion:
\begin{align}
	g_{\bY}(z) &= \overbrace{\frac{\sqrt{-4 \Delta -4 +z^2}-z}{2 (\Delta + 1}}^{\mathrm{Wigner}} \\ 
	\nonumber & - \Big[\frac{(z-\sqrt{z^2-4 (\Delta + 1)})^4}{8 (\Delta +1)^3 \big(4 (\Delta +1)+z (\sqrt{z^2-4 (\Delta +1)}-z)\big)} \Big]\sqrt{\alpha} + \mathcal{O}(\alpha).
\end{align}
For $|y| \leq 2 \sqrt{\Delta + 1}$, this gives the expansion of $v_\bY(y)$ as:
\begin{align}
	v_\bY(y) &= \frac{y}{2(\Delta+1)} + \frac{(\Delta +1-y^2)}{2 (\Delta +1)^3} \sqrt{\alpha} + \mathcal{O}(\alpha).
\end{align}
In the end, we reach the first non-trivial order:
\begin{align}\label{eq_app:denoising_rie_order_2}
	\hat{\xi}_\mu &=	\frac{1}{\Delta + 1} y_\mu - \frac{\Delta }{(\Delta + 1)^2}\Big[1 - \frac{y_\mu^2}{\Delta + 1}\Big] \sqrt{\alpha} + \mathcal{O}(\alpha).
\end{align}

\section{PGY expansion for denoising}\label{sec_app:pgy_denoising}

As for the factorization problem (cf.\ Section~\ref{sec:plefka}),
we replace $H_{\rm eff}$ by $\eta H_{\rm eff}$, and we do the Plefka-Georges-Yedidia expansion in $\eta$, and then take $\eta = 1$.
The order $1$ and the $U$ operator of \cite{georges1991expand} at $\eta = 0$ are very simple:
\begin{align}
  \begin{dcases}
    n m \frac{\partial \Phi}{\partial \eta} &= - \langle H_{\rm eff} \rangle = 0, \\
    U_{\eta = 0} &= \frac{1}{\sqrt{n}}\sum_{\mu < \nu} \sum_i (i H_{\mu \nu}) x_{\mu i} x_{\nu i} =  \frac{1}{2\sqrt{n}}\sum_{\mu \neq \nu} \sum_i (i H_{\mu \nu}) x_{\mu i} x_{\nu i}.
  \end{dcases}
\end{align}
Recall that here we consider the denoising problem, so $x_{\mu i} \overset{\mathrm{i.i.d.}}{\sim} P_X$, with $\EE_{P_X}[x] = 0$ and $\EE_{P_X}[x^2] = 1$.
For generality, we will not assume that $P_\mathrm{out}$ is necessarily Gaussian
so that we can consider any noise applied component-wise, and not only additive Gaussian noise. 
 
\medskip\noindent
\textbf{Order 2 --}
From the Georges-Yedidia formalism, we can obtain the order $2$:
\begin{align*}
	\frac{1}{2} \frac{\partial^2 \Phi}{\partial \eta^2} &= \frac{1}{2 nm}\langle U^2 \rangle, \\
	&= \frac{1}{8n^2 m} \sum_{\mu_1 \neq \nu_1} \sum_{\mu_2 \neq \nu_2} \sum_{i,j} \langle (iH)_{\mu_1 \nu_1} (iH)_{\mu_2 \nu_2} \rangle \langle x_{\mu_1 i} x_{\nu_1 i} x_{\mu_2 j} x_{\nu_2 j} \rangle, \\ 
	&=  \frac{1}{8n^2 m} \sum_{\substack{\mu_1 \neq \nu_1 \\\mu_2 \neq \nu_2}} \sum_{i,j} (g_{\mu_1 \nu_1} g_{\mu_2 \nu_2} - r_{\mu_1 \nu_1} [\delta_{\mu_1 \mu_2} \delta_{\nu_1 \nu_2} + \delta_{\mu_1 \nu_2} \delta_{\nu_1 \mu_2}]) (\delta_{\mu_1 \mu_2} \delta_{\nu_1 \nu_2} \delta_{ij} + \delta_{\mu_1 \nu_2} \delta_{\nu_1 \mu_2} \delta_{ij}), \\ 
	&=  \frac{1}{4n m} \sum_{\mu \neq \nu} (g_{\mu \nu}^2 - r_{\mu \nu}) = \frac{1}{2 n m} \sum_{\mu < \nu} [g_{\mu \nu}^2 - r_{\mu \nu}].
\end{align*}

\medskip\noindent
\textbf{Order 3 --}
The order $3$ can also be computed using \cite{georges1991expand}. 
We can actually use the calculation that we described in Appendix~\ref{subsec_app:first_orders_pgy_xx}, and decompose  
the operator $U_\bY$ as $U_\bY = U_c + U_g$, with terms given in eq.~\eqref{eq:decomposition_U0_xx}.
As we described very precisely in Appendix~\ref{subsec_app:first_orders_pgy_xx}, we reach: 
\begin{align}
    \langle U^3 \rangle &= - \frac{1}{\sqrt{n}} \sum_{\substack{\mu_1,\mu_2,\mu_3 \\ \text{pairwise distinct}}} g_{\mu_1 \mu_2} g_{\mu_2 \mu_3} g_{\mu_3 \mu_1} + \mathcal{O}(n^{3/2}),
\end{align}
in which the term $\mathcal{O}(n^{3/2})$ contains all the dependency on the third-order moments of the i.i.d.\ variables $\{x_{\mu i}\}$, 
and of the independent variables $\{(ih)_{\mu \nu}\}$.
All in all, we reach the expression of eq.~\eqref{eq:phi_pgy_order_3_denoising}:
\begin{align}
	&\frac{1}{3!} \frac{\partial^3 \Phi}{\partial \eta^3} = -\frac{1}{3! nm}\langle U^3 \rangle = \frac{1}{6 n^{3/2} m} \sum_{\substack{\sigma_1, \sigma_2, \sigma_3 \\ \textrm{pairwise distincts}}} g_{\sigma_1 \sigma_2} g_{\sigma_2 \sigma_3} g_{\sigma_1 \sigma_1} + \mathcal{O}(n^{-1/2}).
\end{align}
\section{Annealed calculations in the Gaussian setting}\label{sec_app:annealed}

\subsection{The symmetric case}\label{subsec_app:annealed_xx}

\noindent
\textbf{Outline of the calculation --}
We consider the inference problem of Model~\ref{model:extensive_factorization_xx}, in the Gaussian setting.
Equivalently, it can be written as (with $\lambda \equiv \Delta^{-1}$ and rescaling the observation $\bY$):
\begin{align}\label{eq:model_annealed_xx}
    \bY &= \sqrt{\frac{\lambda}{n}} \bX \bX^\intercal + \bZ,
\end{align}
in which $\bX \in \bbR^{m \times n}$, with a signal-to-noise ratio $\lambda > 0$, and $m/n \to \alpha > 0$.
Without loss of generality, we can assume that $X_{\mu i} \overset{\mathrm{i.i.d.}}{\sim} \mathcal{N}(0,1)$.
Note that the scaling of eq.~\eqref{eq:model_annealed_xx} implies that the empirical spectral distribution of $\bY/\sqrt{n}$ converges (as $n \to \infty$)
to a limit measure $\mu_Y$.
Since we assumed a standard Gaussian prior on $\bX$, one can write the partition function for this model as:
\begin{align*}
   \mathcal{Z}_{\bY,n} &\equiv \int_{\bbR^{m \times n}} \frac{{\rm d} \bx}{(2 \pi)^{mn/2}} \, \exp\Big\{-\frac{1}{2} \mathrm{Tr} \, (\bx\bx^\intercal) \Big\} \frac{\exp\Big\{- \frac{1}{4} \mathrm{Tr} \Big(\bY - \sqrt{\frac{\lambda}{n}}\bx \bx^\intercal\Big)^2\Big\}}{(2\pi)^{m(m+1)/4}},
\end{align*}
Note that compared to eq.~\eqref{eq:def_gibbs_xx} we added the diagonal terms $\mu = \nu$. 
As there are $\mathcal{O}(n)$ such terms, and the free energy is of order $\mathcal{O}(n^2)$, this addition does not affect the limit free entropy.
We compute here the annealed free entropy $\Phi_{\rm an}(\alpha) \equiv \lim_{n \to \infty} \{(nm)^{-1} \ln \EE \mathcal{Z}_{\bY,n} \}$.

\medskip\noindent
\textbf{Averaging the first moment --}
Using the Bayes-optimality assumption, we have:
\begin{align*}
\EE_\bY [\mathcal{Z}_{\bY,n}] = \int \prod_{\mu \leq \nu}\rd y_{\mu \nu}\, \prod_{a=0,1} \Big[ \int_{\bbR^{m \times n}} \frac{\rd \bX^a}{(2\pi)^{\frac{nm}{2}}} e^{-\frac{1}{2} \underset{\mu,i}{\sum} (x^a_{\mu i})^2} \frac{e^{- \frac{1}{4}\underset{\mu,\nu}{\sum}\Big(y_{\mu \nu} - \sqrt{\frac{\lambda}{n}} \underset{i}{\sum} x^a_{\mu i} x^a_{\nu i}\Big)^2}}{(2\pi)^{\frac{m(m+1)}{4}}} \Big].
\end{align*}
We can now integrate over $\bY$. We get:
\begin{align*}
\EE_\bY [\mathcal{Z}_{\bY, n}] &= \prod_{a \in \{0,1\}} \Big[ \int_{\bbR^{m \times n}} \frac{\rd \bX^a}{(2\pi)^{\frac{nm}{2}}} e^{-\frac{1}{2} \underset{\mu,i}{\sum} (x^a_{\mu i})^2}\Big] \prod_{\mu \leq \nu} \int \frac{\rd y_{\mu \nu}}{2\pi}e^{- \frac{1}{2(1+\delta_{\mu\nu})}\underset{a}{\sum}\Big(y_{\mu \nu} - \sqrt{\frac{\lambda}{n}} \underset{i}{\sum} x^a_{\mu i} x^a_{\nu i}\Big)^2}, \\
&= D_n(\alpha) \prod_{a \in \{0,1\}} \Bigg\{\int_{\bbR^{m \times n}} \frac{\rd \bX^a}{(2\pi)^{\frac{mn}{2}}} e^{-\frac{1}{2} \underset{\mu,i}{\sum} (x^a_{\mu i})^2}\Bigg\} e^{-\frac{\lambda}{4 n}  \underset{\mu,\nu}{\sum}\underset{a}{\sum} \Big(\underset{i}{\sum} x^a_{\mu i} x^a_{\nu i}\Big)^2 + \frac{\lambda}{8 n} \underset{\mu,\nu}{\sum} \Big(\underset{a}{\sum} \underset{i}{\sum} x^a_{\mu i} x^a_{\nu i} \Big)^2},
\end{align*}
with a constant $D_n(\alpha) > 0$. 
Note that the above equation can be rewritten, using the identity:
\begin{align*}
    e^{-\frac{\lambda}{4 n}  \underset{\mu,\nu}{\sum}\,\underset{a \in \{0,1\} }{\sum} \Big(\underset{i}{\sum} x^a_{\mu i} x^a_{\nu i}\Big)^2 + \frac{\lambda}{8 n} \underset{\mu,\nu}{\sum} \Big(\underset{a \in \{0,1\}}{\sum} \, \underset{i}{\sum} x^a_{\mu i} x^a_{\nu i} \Big)^2} &= \exp\Bigg\{-\frac{\lambda}{8n}\sum_{\mu,\nu} \Big[\sum_{i} x^0_{\mu i} x^0_{\nu i} - \sum_i x^1_{\mu i} x^1_{\nu i}\Big]^2\Bigg\},
\end{align*}
so that we decouple the replicas (up to a constant that only depends on $n$ and $\alpha$):
\begin{align*}
    \exp\Big\{-\frac{\lambda}{8n}\sum_{\mu,\nu} \Big[\sum_{i} x^0_{\mu i} x^0_{\nu i} - \sum_i x^1_{\mu i} x^1_{\nu i}\Big]^2\Big\} &\simeq \int \prod_{\mu \leq \nu} \rd Q_{\mu \nu} e^{-\frac{n}{4} \sum_{\mu,\nu} Q_{\mu \nu}^2} \nonumber \\
    & \qquad \exp\Big\{\frac{i \sqrt{\lambda}}{2 \sqrt{2}} \sum_{\mu,\nu} Q_{\mu \nu} \Big[\sum_i x^0_{\mu i} x^0_{\nu i} - \sum_i x^1_{\mu i} x^1_{\nu i}\Big]\Big\}.
\end{align*}
We can integrate over the prior distribution on $\{\bX^0,\bX^1\}$ and reach, up to a constant $B_n(\alpha)$ independent of $\lambda$ :
\begin{align}
\EE_\bY [\mathcal{Z}_{\bY, n}] &= B_n(\alpha) \int \prod_{\mu \leq \nu} \rd Q_{\mu \nu} \,e^{-\frac{n}{4} \underset{\mu,\nu}{\sum} Q_{\mu \nu}^2} e^{-\frac{n}{2} \ln \det \Big[\Id_m + i \sqrt{\frac{\lambda}{2}} \bQ\Big] - \frac{n}{2} \ln \det \Big[\Id_m - i \sqrt{\frac{\lambda}{2}} \bQ\Big]},  \nonumber\\
\label{eq:annealed_sym_OLO}
&= B_n(\alpha) \int \prod_{\mu \leq \nu} \rd Q_{\mu \nu} \,e^{-\frac{n}{4} \underset{\mu,\nu}{\sum} Q_{\mu \nu}^2} e^{-\frac{n}{2} \ln \det \Big[\Id_m + \frac{\lambda}{2} \bQ^2\Big] }.
\end{align}
\noindent
\textbf{The saddle point --}
 We note that in eq.~\eqref{eq:annealed_sym_OLO} the integrand only depends on the \emph{spectrum} of the matrix $\bQ$. Since $\bQ$ is integrated over the set of symmetric matrices, 
 we can write $\bQ = \bO \bD \bO^\intercal$ and use the classical change of variables to this representation (see e.g.\ Proposition~4.1.1 of \cite{anderson2010introduction}):
\begin{align}
\label{eq:annealed_sym_before_saddle}
\EE_\bY [\mathcal{Z}_{\bY, n}] &= C_n(\alpha) \int_{\bbR^m} \rd \bL \exp\Bigg\{- \frac{n}{4} \sum_{\mu=1}^m l_\mu^2 + \frac{1}{2} \sum_{\mu \neq \nu} \ln |l_\mu - l_\nu| - \frac{n}{2} \sum_{\mu = 1}^m \ln\Big(1 + \frac{\lambda}{2} l_\mu^2\Big)\Bigg\},
\end{align}
with a constant $C_n(\alpha)$ that we will compute at the end by taking the $\lambda \downarrow 0$ limit.
In the end we obtain the following variational principle for the annealed free entropy:
\begin{align}\label{eq:fe_annealed}
   \Phi_{\rm an}(\alpha) &= c(\alpha) + \sup_{\nu \in \mathcal{M}^+_1(\bbR)} \Big\{- \frac{1}{4} \EE_\nu[X^2]+ \frac{\alpha}{2}\EE_{X,Y \sim \nu} \ln |X-Y| - \frac{1}{2} \EE_\nu \Big[\ln \Big(1 + \frac{\lambda}{2} x^2\Big)\Big]\Big\},
\end{align}
in which $c(\alpha) \equiv \lim_{n \to \infty} \{(nm)^{-1}\ln C_n(\alpha) \}$, and $\mathcal{M}_1^+(\bbR)$ is the set of real probability measures.

\medskip\noindent
\textbf{Computing $c(\alpha)$ --}
When $\lambda=0$ the solution to the variational problem is easily known, as it is related to the celebrated 
Wigner semi-circle law.
More precisely, the solution to the variational problem of eq.~\eqref{eq:fe_annealed} at $\lambda=0$ is:
\begin{align*}
   \nu_\alpha^\star(\rd x) &= \frac{1}{2 \pi \alpha} \mathds{1}_{|x|\leq 2 \sqrt{\alpha}} \sqrt{4 \alpha-x^2}\rd x.  
\end{align*}
And it satisfies:
\begin{align*}
    - \frac{1}{4} \int \nu_\alpha^\star(\rd x)\, x^2 + \frac{\alpha}{2} \int \nu_\alpha^\star(\rd x) \nu_\alpha^\star(\rd y) \ln |x-y| &= -\frac{3 \alpha}{8} + \frac{\alpha}{4} \ln \alpha. 
\end{align*}
Moreover, a direct calculation at $\lambda = 0$ gives $\Phi_{\rm an}(\alpha)_{|\lambda=0} = - (\alpha/4) \ln 4 \pi$.
This yields 
\begin{align}\label{eq:constant_annealed}
   c(\alpha) &= \frac{3 \alpha}{8} - \frac{\alpha}{4} \ln 4 \pi - \frac{\alpha}{4} \ln \alpha.
\end{align}
\noindent
\textbf{Exact resolution of the variational principle --}
We now detail an exact resolution of the variational principle of eq.~\eqref{eq:fe_annealed}. This calculation 
is very much inspired by a similar derivation, for the Wigner semi-circle law, done in \cite{livan2018introduction}.
We define the potential
\begin{align*}
V_\alpha(x) \equiv \frac{1}{4 \alpha} x^2 + \frac{1}{2\alpha} \ln\Big(1+ \frac{\lambda}{2} x^2\Big), 
\end{align*}
Going back to eq.~\eqref{eq:annealed_sym_before_saddle}, prior to taking the $n \to \infty$ limit, we 
can apply the saddle point method to the set of eigenvalues $\{l_\mu\}_{\mu=1}^m$.
The saddle-point equation for the eigenvalues reads:
\begin{align}\label{eq:saddle_ev_annealed_sym}
\frac{1}{m} \sum_{\nu (\neq \mu)} \frac{1}{l_\mu - l_\nu} &= \frac{1}{2\alpha} \Big(l_\mu  +  \frac{\lambda l_\mu}{1+ \frac{\lambda}{2} l_\mu^2} \Big) = V_\alpha'(l_\mu).
\end{align}
A classical way to solve equations of the type of eq.~\eqref{eq:saddle_ev_annealed_sym} is to use Tricomi's theorem \cite{tricomi1985integral}, as explained 
for instance in \cite{livan2018introduction}. Here we follow an alternative (equivalent) way by writing a self-consistent equation on the Stieltjes transform.
We will multiply both sides of eq.~\eqref{eq:saddle_ev_annealed_sym} by $(z-l_\mu)^{-1}$, before summing the result over $\mu$.
Recall the definition of the Stieltjes transform for any $z \in \bbC_+$:
\begin{align*}
  {\cal S}(z) &\equiv \lim_{n \to \infty} \frac{1}{m} \sum_{\mu=1}^m \frac{1}{l_\mu - z}.
\end{align*}
As can be found e.g.\ in \cite{livan2018introduction} (or very quickly derived), we have at large $n,m$:
\begin{align*}
 \frac{1}{m^2} \sum_{\mu \neq \nu} \frac{1}{l_\mu - l_\nu} \frac{1}{z-l_\mu} = \frac{1}{2} {\cal S}(z)^2 + {\cal O}(1/n).
\end{align*}
We now focus on the right hand side of eq.~\eqref{eq:saddle_ev_annealed_sym}, multiplied by $(z-l_\mu)^{-1}$, and summed over $\mu$.
The first term can be simplified using the relation:
\begin{align*}
\frac{1}{2 \alpha} \frac{1}{m} \sum_{\mu=1}^m \frac{l_\mu}{z-l_\mu} = - \frac{1 + z {\cal S}(z) }{2\alpha}+ {\cal O}(1/n).
\end{align*}
The other term can also be simplified:
\begin{align}
  \Gamma(z) &\equiv  \frac{1}{2 \alpha} \frac{1}{m} \sum_{\mu=1}^m \frac{\lambda l_\mu}{1 + \frac{\lambda}{2} l_\mu^2}  \frac{1}{z-l_\mu}, \nonumber \\
  \label{eq:gamma_density_annealed}
  &= \frac{1}{2 \alpha z} \Big(\frac{1}{m} \sum_{\mu=1}^m \frac{\lambda l_\mu}{1 + \frac{\lambda}{2} l_\mu^2} + \frac{1}{m} \sum_{\mu=1}^m \frac{2}{z-l_\mu}\Big) - \frac{1}{2\alpha z} \frac{1}{m} \sum_{\mu=1}^m \frac{2}{(1+\frac{\lambda}{2} l_\mu^2)(z-l_\mu)}.
\end{align}
Note that by the symmetry $x \to -x$ in the potential $V_\alpha(x)$, the optimizing measure should be symmetric with respect to $0$. This implies that 
\begin{align*}
  \lim_{n \to \infty} \frac{1}{m} \sum_{\mu=1}^m \frac{\lambda l_\mu}{1+\frac{\lambda}{2} l_\mu^2} &= 0. 
\end{align*}
Let us denote $L > 0$ the limit:
\begin{align*}
   \frac{1}{m} \sum_{\mu=1}^m \frac{1}{1+\frac{\lambda}{2} l_\mu^2} &\equiv L + \smallO_n(1). 
\end{align*}
Finally, we have the relation
\begin{align*}
 \frac{1}{m} \sum_{\mu=1}^m \frac{2}{(1+\frac{\lambda}{2} l_\mu^2)(z-l_\mu)} &= \frac{2}{\lambda z} \frac{1}{m} \sum_{\mu=1}^m \frac{\lambda l_\mu}{(1+\frac{\lambda}{2} l_\mu^2)(z-l_\mu)} + \frac{2}{z} \frac{1}{m} \sum_{\mu=1}^m \frac{1}{1+\frac{\lambda}{2} l_\mu^2}. 
\end{align*}
Using these tricks, we reach for the term we were treating:
\begin{align*}
  \Gamma(z) &= \frac{1}{2 \alpha z} (- 2 {\cal S}(z) + \smallO_n(1)) - \frac{1}{2\alpha z} \Big[\frac{2}{\lambda z} 2 \alpha \Gamma(z) + \frac{2}{z}L + \smallO_n(1)\Big],
\end{align*}
so that 
\begin{align*}
   \Gamma(z) &= - \frac{\lambda}{2 \alpha} \frac{L + z {\cal S}(z)}{1+\frac{\lambda}{2} z^2} + \smallO_n(1).
\end{align*}
Finally going back to eq.~\eqref{eq:saddle_ev_annealed_sym} and taking the $n \to \infty$ limit, we obtain:
\begin{align*}
  {\cal S}(z)^2 &= - \frac{1}{\alpha} [1+ z {\cal S}(z)] - \frac{\lambda}{\alpha} \frac{L + z {\cal S}(z)}{1+ \frac{\lambda}{2}z^2}.
\end{align*}
We can solve this quadratic equation to obtain:
\begin{align*}
   {\cal S}(z) &= -\frac{1}{2} \Bigg[\frac{\lambda  z}{\alpha  (\frac{\lambda  z^2}{2}+1)}+\frac{z}{\alpha } - \sqrt{\Big(\frac{\lambda  z}{\alpha  (\frac{\lambda  z^2}{2}+1)}+\frac{z}{\alpha }\Big)^2-4 \Big(\frac{1}{\alpha }+\frac{\lambda  L}{\alpha (\frac{\lambda  z^2}{2}+1)}\Big)}\Bigg].
\end{align*}
The solution is taken such that ${\rm Im}[{\cal S}(z)] > 0$ for $z \in \bbC_+$ and ${\cal S}(z) \sim -1/z$ for $z \to + \infty$.
Using the Stieltjes-Perron inversion formula, the density of the maximizing measure $\nu_\alpha^\star$ is thus:
\begin{align}\label{eq:result_annealed_xx}
   \frac{{\rm d}\nu_\alpha^\star}{{\rm d}x} &= \lim_{\epsilon \to 0^+} \frac{1}{\pi} {\rm Im} [{\cal S}(x+i\epsilon)] =  \frac{1}{2\pi \alpha} \sqrt{4 \Bigg(\alpha +\frac{\lambda  \alpha L}{1 + \frac{\lambda x^2}{2}}\Bigg) - \Bigg(\frac{\lambda x}{1+\frac{\lambda x^2}{2}}+x\Bigg)^2}.
\end{align}
The constant $L > 0$ has to be chosen in order to ensure the proper normalization of the probability distribution.

\subsection{The non-symmetric case}\label{subsec_app:annealed_fx}

\noindent 
\textbf{The calculation --}
We consider now the non-symmetric setting, i.e.\ Model~\ref{model:extensive_factorization_fx}, 
with $m/n \to \alpha > 0$ and $p/n \to \psi > 0$, and in the Gaussian setting:
\begin{align*}
    \bY &= \sqrt{\frac{\lambda}{n}} \bF \bX + \bZ,
\end{align*}
in which $\bY \in \bbR^{m \times p}$, $\bF \in \bbR^{m \times n}$ and $\bX \in \bbR^{n \times p}$.
Without loss of generality, we assume an i.i.d.\ standard Gaussian prior for both $\bF,\bX$, and that the quenched noise $\bZ$ is Gaussian as well.
The partition function of this model is:
\begin{align*}
   \mathcal{Z}_{\bY,n} &= \int {\rm d}\bbf \ {\rm d}\bx \ \frac{e^{-\frac{1}{2} \sum_{\mu,i} f_{\mu i}^2}}{(2 \pi)^{\frac{mn}{2}}} \frac{e^{-\frac{1}{2} \sum_{i,l} x_{i l}^2}}{(2 \pi)^{\frac{np}{2}}} \frac{e^{-\frac{1}{2} \sum_{\mu,l}\Big(y_{\mu l} - \sqrt{\frac{\lambda}{n}} \sum_i f_{\mu i} x_{il}\Big)^2}}{(2 \pi)^{\frac{mp}{2}}}.
\end{align*}
We can write the first moment of the partition function using Bayes-optimality:
\begin{align*}
   \EE_\bY[\mathcal{Z}_{\bY,n}] &= \int \prod_{\mu,l} \rd y_{\mu l} \prod_{a \in \{0,1\}} \Big[\int \rd \bbf^a \rd \bx^a \frac{e^{-\frac{1}{2} \mathrm{Tr} \bx^a (\bx^a)^\intercal - \frac{1}{2} \mathrm{Tr} \bbf^a (\bbf^a)^\intercal}}{(2\pi)^{\frac{mn}{2} + \frac{pn}{2}}}\Big] \frac{e^{-\frac{1}{2} \sum\limits_{a \in \{0,1\}} \sum\limits_{\mu,l}\Big(y_{\mu l} - \sqrt{\frac{\lambda}{n}} \sum_i f^a_{\mu i} x^a_{il}\Big)^2}}{(2 \pi)^{mp}}.
\end{align*}
Integrating over $\bY$, we reach:
\begin{align*}
   \EE_\bY[\mathcal{Z}_{\bY,n}] &= \prod_{a \in \{0,1\}} \Big[\int \rd \bbf^a \rd \bx^a \frac{e^{-\frac{1}{2} \mathrm{Tr} \bx^a (\bx^a)^\intercal - \frac{1}{2} \mathrm{Tr} \bbf^a (\bbf^a)^\intercal}}{(2\pi)^{\frac{mn}{2} + \frac{pn}{2}}}\Big] \frac{\exp\Big\{-\frac{\lambda}{4n} \sum_{\mu,l} \Big[\sum_i f^0_{\mu i} x^0_{il} - \sum_i f^1_{\mu i} x^1_{il}\Big]^2\Big\}}{2^{\frac{mp}{2}}(2 \pi)^{\frac{mp}{2}}}.
\end{align*}
We can then use a Gaussian transformation in the last term:
 \begin{align*}
     \exp\Big\{-\frac{\lambda}{4n} \sum_{\mu,l} \Big[\sum_i f^0_{\mu i} x^0_{il} - \sum_i f^1_{\mu i} x^1_{il}\Big]^2\Big\}
     &= \Big(\frac{n}{2\pi}\Big)^{\frac{mp}{2}} \int \prod_{\mu,l} \rd Q_{\mu l} \, e^{-\frac{n}{2} \sum_{\mu,l} Q_{\mu l}^2} \nonumber \\
     & \qquad \exp\Big\{i \sqrt{\frac{\lambda}{2}} \sum_{\mu,l} Q_{\mu l} \Big[\sum_i f^0_{\mu i} x^0_{i l} - \sum_i f^1_{\mu i} x^1_{i l}\Big]\Big\}.
 \end{align*}
One integrates now over the prior distributions on $\bbf^a,\bx^a$. The two replicas yield the same contribution and we reach:
\begin{align}\label{eq:ZN_annealed}
   \EE_\bY[\mathcal{Z}_{\bY,n}] &= D_n(\alpha,\psi) \int \prod_{\mu,l} \rd Q_{\mu l} \, e^{-\frac{n}{2} \sum_{\mu,l} Q_{\mu l}^2} \, e^{-n \ln \det [\Id_p + \frac{\lambda}{2} \bQ^\intercal \bQ]},
\end{align}
with $D_n$ a constant that only depends on $n,\alpha,\psi$.
To compute this integral, we use a Weyl-type formula for integrating a function that only depends on the singular values of the integrated matrix. It is stated for instance in Proposition~4.1.3 of \cite{anderson2010introduction}. 
We need to separate the two possibilities $\alpha \leq \psi$ and $\alpha > \psi$.

\medskip\noindent
\textbf{The case $\alpha \leq \psi$ --}
We directly obtain from eq.~\eqref{eq:ZN_annealed} and the Weyl-type formula:
\begin{align*}
   \EE_\bY \mathcal{Z}_n(\bY) &= C_n(\alpha,\psi) \int_{\bbR_+^m} \prod_{\mu=1}^m \rd l_\mu \, e^{-\frac{n}{2} \sum_\mu l_\mu^2 - n \sum_\mu \ln (1 + \frac{\lambda}{2} l_\mu^2) + (p-m) \sum_\mu \ln l_\mu + \frac{1}{2} \sum_{\mu \neq \nu} \ln |l_\mu^2 - l_\nu^2|} ,
\end{align*}
in which $C_n(\alpha,\psi)$ is a constant and we denote $C(\alpha,\psi) \equiv \lim_{n \to \infty}\{\ln C_n(\alpha,\psi) / (n(m+p))\}$.
Performing a saddle point finally yields:
\begin{align}\label{eq_app:phi_annealed_fx_alpha}
   \Phi_{\rm an}(\alpha,\psi) = C(\alpha,\psi) + \sup_{\nu \in {\cal M}_1^+(\bbR_+)} &\Big\{-\frac{\alpha}{2(\alpha+\psi)}\EE_\nu[X] - \frac{\alpha}{\alpha+\psi} \EE_\nu\Big[\ln\Big(1+\frac{\lambda}{2}X\Big)\Big] \nonumber \\
   & + \frac{\alpha(\psi-\alpha)}{2(\alpha+\psi)} \EE_\nu[\ln X]+ \frac{\alpha^2}{2(\alpha+\psi)} \EE_{X,Y\sim \nu}\ln |X-Y|\Big\}. 
\end{align}
\noindent
\textbf{The case $\alpha > \psi$ --}
This time, we obtain from eq.~\eqref{eq:ZN_annealed} and the Weyl-type formula:
\begin{align*}
   \EE_\bY \mathcal{Z}_n(\bY) &= D_n(\alpha,\psi) \int_{\bbR_+^p} \prod_{l=1}^p \rd a_l \, e^{-\frac{n}{2} \sum_l a_l^2 - n \sum_l \ln (1 + \frac{\lambda}{2} a_l^2) + (m-p) \sum_l \ln a_l + \frac{1}{2} \sum_{l \neq l'} \ln |a_l^2 - a_{l'}^2|} ,
\end{align*}
in which $D_n(\alpha,\psi)$ is a constant and we denote $D(\alpha,\psi) \equiv \lim_{n \to \infty}[\ln D_n(\alpha,\psi)/ (n(m+p))]$.
Performing a saddle point finally yields:
\begin{align}\label{eq_app:phi_annealed_fx_psi}
   \Phi_{\rm an}(\alpha,\psi) = D(\alpha,\psi) + \sup_{\nu \in {\cal M}_1^+(\bbR_+)} &\Big\{-\frac{\psi}{2(\alpha+\psi)}\EE_\nu[X] - \frac{\psi}{\alpha+\psi} \EE_\nu\Big[\ln\Big(1+\frac{\lambda}{2}X\Big)\Big] \nonumber \\
   & + \frac{\psi(\alpha-\psi)}{2(\alpha+\psi)} \EE_\nu[\ln X]+ \frac{\psi^2}{2(\alpha+\psi)} \EE_{X,Y\sim \nu}\ln |X-Y|\Big\}. 
\end{align}
\noindent
\textbf{Computing the constants --}
The constants can again be found using the $\lambda \downarrow 0$ limit.
If $\alpha \leq \psi$, the extremizing measure $\nu^\star_{\alpha,\psi}$ is a scaled Marchenko-Pastur distribution \cite{marchenko1967distribution}
with ratio $\alpha /\psi \leq 1$:
\begin{align}\label{eq:lambda_0_uv}
  \nu^\star_{\alpha,\psi}({\rm d}x) &= \frac{1}{2 \pi \alpha} \frac{\sqrt{(\lambda_+ - x) (x - \lambda_-)}}{x} \mathds{1}\{\lambda_- \leq x \leq \lambda_+\} {\rm d} x, 
\end{align}
with $\lambda_\pm = (\sqrt{\psi} \pm \sqrt{\alpha})^2$.
Indeed, $\nu^\star_{\alpha,\psi}$ is the asymptotic eigenvalue distribution of $\bQ\bQ^\intercal$, in which $\{Q_{\mu l}\}$ are independent Gaussian 
random variables with zero mean and variance $n^{-1} = \psi p^{-1}$.
As an easy calculation gives $\Phi_{\rm an}(\alpha,\psi) = -[\alpha \psi/(2(\alpha+\psi))]\ln 4 \pi $ for $\lambda = 0$, this yields after some computation:
\begin{align*}
  C(\alpha,\psi) &= -\frac{\alpha ^2 \ln (\alpha )+\alpha  \psi  (-3+4\ln (2)+2 \ln (\pi ))-(\alpha -\psi )^2 \ln (\psi -\alpha )+\psi ^2 \ln (\psi )}{4 (\alpha +\psi )}.
\end{align*}
In the same way, one can get $D(\alpha,\psi) = C(\psi,\alpha)$.

\medskip\noindent
\textbf{Solving for the maximizing density --}
We can solve the variational principles of eqs.~\eqref{eq_app:phi_annealed_fx_alpha},\eqref{eq_app:phi_annealed_fx_psi} in a similar way as what we did in Appendix~\ref{subsec_app:annealed_xx} 
for the symmetric case.
In the limit $n\to\infty$, the saddle point equation for $x_\mu \equiv l_{\mu}^2$ gives:
\begin{align*}
- \frac{1}{2} - \frac{\lambda}{2} \frac{1}{1+\lambda x_{\mu}/2} + \frac{1}{n} \sum_{\nu (\neq \mu)} \frac{1}{x_{\mu}-x_{\nu}} +  \frac{(\psi-\alpha)}{2}  \frac{1}{ x_{\mu} } = 0.
\end{align*}
As before we multiply by $(z-x_{\mu})^{-1}$ and we sum over $\mu=1,\dots, m$.
In this way we get:
\begin{align*}
 \frac{1}{n m} \sum_{\mu=1}^m \sum_{\nu (\neq \mu) =1}^m \frac{1}{x_\mu-x_\nu} \frac{1}{z-x_{\mu}} =  \frac{\alpha {\cal S}(z)^2}{2} + {\cal O}(1/n).
\end{align*}
Next, we add an arbitrary parameter $\gamma > 0 $ in the calculation, before taking the limit $\gamma \downarrow 0$,  as done in \cite{livan2018introduction}.
In this way we can compute:
\begin{align*} 
\frac{1}{m}\sum_{\mu=1}^m \Big( \frac{(\psi-\alpha)}{2} + \frac{\gamma}{2} \Big) \frac{1}{(z-x_{\mu})} \frac{1}{x_{\mu}} = \Big( \frac{(\psi-\alpha)}{2} + \frac{\gamma}{2} \Big) \Big( \frac{Q - {\cal S}(z)}{z} \Big),
\end{align*}
with the constant
\begin{align*}
Q \equiv \frac{1}{m} \sum_{\mu=1}^m \frac{1}{x_\mu} = \int {\rm d} x \ \frac{\nu(x)}{x}.
\end{align*}
Next we have:
\begin{align*}
\frac{\lambda}{2}  \frac{1}{m} \sum_{\mu=1}^m \frac{1}{1+\lambda x_{\mu}/2}  \frac{1}{z-x_{\mu}} 
= \frac{\lambda}{2} \frac{1}{\lambda z/2 +1} (Y - {\cal S}(z)),
\end{align*}
with
\begin{align*}
Y \equiv \frac{\lambda}{2} \int {\rm d} x \ \frac{\mu(x)}{1+\lambda x/2} = {\cal S}(-2/\lambda).
\end{align*}
Altogether the equation for ${\cal S}(z)$ reads:
\begin{align*}
{\cal S}(z) - \frac{\lambda}{\lambda z/2 +1} (Y - {\cal S}(z)) + (\psi-\alpha+ \gamma ) \frac{Q - {\cal S}(z)}{z} + \alpha {\cal S}(z)^2 = 0,
\end{align*}
from which we have (after taking the limit $\gamma \downarrow 0$), for $z > 0$:
\begin{align*}
&{\cal S}(z) \\ 
&= \frac{1}{2\alpha} \Big[ - 1+ \frac{(\psi-\alpha)}{z} - \frac{\lambda}{1+\lambda z/2} 
+  \frac{1}{z}\sqrt{\Big( z- \psi-\alpha + \frac{z\lambda}{1+\lambda z/2}  \Big)^2 - 4 \alpha z \Big((\psi-\alpha) Q- \frac{z\lambda}{1+\lambda z/2} Y \Big)} \Big]
\end{align*}
The behavior at $z\to\infty$ fixes the constant $Y$:
\begin{align*}
Y = \frac{1}{2} ((\psi-\alpha)Q-1).
\end{align*}
From this one can deduce the final density which reads:
\begin{align}\label{eq:result_annealed_fx}
\frac{{\rm d}\nu_{\alpha,\psi}^\star}{{\rm d}x}  = \frac{1}{2 \alpha \pi x} \sqrt{-\Big( x- \psi-\alpha + \frac{x\lambda}{1+\lambda x/2}  \Big)^2 + 4 \alpha x \Big((\psi-\alpha) Q- \frac{x\lambda}{1+\lambda x/2}  \frac{1}{2} ((\psi-\alpha)Q-1) \Big)},
\end{align}
where the value of $Q$ should be determined by imposing the normalization of $\nu^\star_{\alpha,\psi}$.
\section{Other technicalities}\label{sec_app:misc}

\subsection{The free entropy of \texorpdfstring{\cite{kabashima2016phase}}{kab16} in the Gaussian setting}\label{subsec_app:kabashima_sol_gaussian}

In this symmetric Gaussian setting, with prior of variance $1$ and noise $\Delta > 0$ in the channel, 
the prediction of \cite{kabashima2016phase} for the quenched free entropy is:
\begin{align}\label{eq:kab_free_entropy}
	\Phi_{\mathrm{sec.}} &= \underset{q,\hat{q}}{\text{extr}} \Big\{\frac{1}{2} q \hat{q} - \frac{\hat{q}}{2} - \frac{1}{2} \ln (1-\hat{q}) - \frac{\alpha}{4}(1+\ln 2 \pi) - \frac{\alpha}{4} \ln [\Delta+(1-q^2)] \Big\}.
\end{align}
We denote it $\Phi_\mathrm{sec.}$, since we saw in Section~\ref{sec:plefka} that it corresponds to a second-order truncation of the PGY expansion for the factorization problem.
The extremum in eq.~\eqref{eq:kab_free_entropy} is given by the set of equations:
\begin{align*}
   \hat{q} = - \frac{\alpha q}{\Delta + (1-q^2)} 
   \hspace{0.5cm} ; \hspace{0.5cm}
   q = - \frac{\hat{q}}{1-\hat{q}}. 
\end{align*}
In the paramagnetic phase (which corresponds to the denoising solution) we moreover have $q = \hat{q} = 0$ and thus:
\begin{align*}
	\Phi_\mathrm{Kab.}(q=0) &= - \frac{\alpha}{4}(1+\ln 2 \pi) - \frac{\alpha}{4} \ln(1+\Delta).
\end{align*}
Coming back to the factorization problem,
the equation on $q$ is $\hat{q} = -q / (1-q)$ and (if $q \neq 0$):
\begin{align*}
	\alpha(1 - q) = \Delta + 1-q^2,
\end{align*}
which easily solves into:
\begin{align}\label{eq:sol_q_kabashima}
	q &= \frac{\alpha \pm \sqrt{(2-\alpha)^2 + 4 \Delta}}{ 2}
\end{align}
Computing $\Phi_\mathrm{sec.}$ can then easily be done: one must consider three possible solutions: 
the paramagnetic one, and the two solutions of eq.~\eqref{eq:sol_q_kabashima}. 
One then checks if said solutions are physical (i.e.\ if $q \in [0,1)$). 
To finish, one must compute the free entropies of all physical solutions, and the actual 
prediction for the free entropy is given by the largest of those.

\subsection{Absence of contribution of diagonal terms to the MMSE}\label{subsec:mmse_diagonal}

Recall that we consider the denoising model:
\begin{align}\label{eq_app:denoising_model}
  \bY &= \frac{1}{\sqrt{n}} \bX^\star (\bX^\star)^\intercal + \sqrt{\Delta} \bZ.
\end{align}
We let $\bS^\star \equiv \bX^\star (\bX^\star)^\intercal / \sqrt{n}$, 
and we define the associated asymptotic free energy:
\begin{align}
  \Phi_{\bY,n} &\equiv \frac{1}{nm} \EE \ln \int \mathcal{D}\bX e^{-\frac{1}{4\Delta} \sum_{\mu,\nu} \Big(Y_{\mu \nu} - \frac{1}{\sqrt{n}} \sum_{i=1}^n X_{\mu i} X_{\nu i}\Big)^2}.
\end{align}
And the asymptotic MMSE is:
\begin{align}\label{eq_app:mmse_denoising_full}
  \mathrm{MMSE} &\equiv \frac{1}{m^2} \EE \sum_{\mu,\nu} \Big(S^\star_{\mu \nu} - \langle S_{\mu \nu} \rangle\Big)^2.
\end{align}
In this paragraph, we also consider
a definition of the MMSE ``not taking into account the diagonal'':
\begin{align}\label{eq_app:mmse_denoising_no_diagonal}
  \mathrm{MMSE} &\equiv \frac{1}{m^2} \EE \sum_{\mu \neq \nu} \Big(S^\star_{\mu \nu} - \langle S_{\mu \nu} \rangle\Big)^2,
\end{align}
In the limit $n \to \infty$, we now show that eqs.~\eqref{eq_app:mmse_denoising_full} and \eqref{eq_app:mmse_denoising_no_diagonal} are equivalent.
Equivalently, we will show:
\begin{align}\label{eq_app:mmse_diagonal_vanishing}
 \mathrm{MMSE}_\mathrm{diag} \equiv \frac{1}{m^2} \EE \Bigg\{ \sum_{\mu=1}^m \Big(S^\star_{\mu \mu} - \langle S_{\mu \mu} \rangle\Big)^2 \Bigg\} \to_{n \to \infty} 0.
\end{align}
\begin{proof}[Proof of eq.~\eqref{eq_app:mmse_diagonal_vanishing} --]
	We will make use of the \emph{Nishimori identity}, a consequence of Bayes-optimality.
Physically speaking, the Nishimori identity shows that the planted solution $\bS^\star$ behaves like a replica of the system 
under the posterior distribution, and we refer the reader to \cite{barbier2019optimal} for its elementary proof:
\begin{proposition}[Nishimori identity] \label{prop:nishimori}
    \noindent
	Let $(X,Y) \in \mathbb{R}^{n_1} \times \mathbb{R}^{n_2}$ be a couple of random variables. Let $k \geq 1$ and let $X^{(1)}, \dots, X^{(k)}$ be $k$ i.i.d.
	samples (given $Y$) from the conditional distribution $\bbP(X=\cdot\, | Y)$. Let us denote 
	$\langle - \rangle$ the expectation operator w.r.t.\ $\bbP(X= \cdot\, | Y)$ and $\mathbb{E}$ the expectation w.r.t.\ $(X,Y)$. Then, for all continuous bounded function $g$ we have
	\begin{align*}
	\mathbb{E} \langle g(Y,X^{(1)}, \dots, X^{(k)}) \rangle
	=
	\mathbb{E} \langle g(Y,X^{(1)}, \dots, X^{(k-1)}, X) \rangle\,.	
	\end{align*}
\end{proposition}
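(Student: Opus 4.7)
The plan is to prove the Nishimori identity by conditioning on $Y$ and observing that the joint conditional law of the two $(k+1)$-tuples $(Y, X^{(1)}, \dots, X^{(k)})$ and $(Y, X^{(1)}, \dots, X^{(k-1)}, X)$ coincides. The identity then follows from the tower property of conditional expectation. The key observation is essentially tautological: since $X^{(1)}, \dots, X^{(k)}$ are i.i.d.\ samples from $\mathbb{P}(X=\cdot\,|Y)$ given $Y$, and $X$ itself, conditionally on $Y$, also has distribution $\mathbb{P}(X=\cdot\,|Y)$, the tuple $(X^{(1)}, \dots, X^{(k-1)}, X)$ is, conditionally on $Y$, a tuple of $k$ i.i.d.\ samples from the same conditional distribution.

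More concretely, first I would fix a continuous bounded test function $g$ and write
\begin{align*}
\mathbb{E}\langle g(Y, X^{(1)}, \dots, X^{(k)})\rangle &= \mathbb{E}\Bigl[\mathbb{E}\bigl[g(Y, X^{(1)}, \dots, X^{(k)}) \,\bigm|\, Y\bigr]\Bigr] \\
&= \mathbb{E}\Bigl[\int g(Y, x_1, \dots, x_k)\, \prod_{j=1}^{k}\mathbb{P}(X\in \rd x_j \,|\, Y)\Bigr].
\end{align*}
Similarly, on the right-hand side I would compute
\begin{align*}
\mathbb{E}\langle g(Y, X^{(1)}, \dots, X^{(k-1)}, X)\rangle &= \mathbb{E}\Bigl[\mathbb{E}\bigl[g(Y, X^{(1)}, \dots, X^{(k-1)}, X)\,\bigm|\, Y\bigr]\Bigr].
\end{align*}
The conditional expectation on the right factorizes because, given $Y$, the replicas $X^{(1)}, \dots, X^{(k-1)}$ are (by construction) independent of $X$ with their marginal being $\mathbb{P}(X=\cdot\,|Y)$, and the conditional law of $X$ given $Y$ is by definition $\mathbb{P}(X=\cdot\,|Y)$. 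Hence this conditional expectation equals
\[
\int g(Y, x_1, \dots, x_{k-1}, x)\,\mathbb{P}(X\in \rd x_1 \,|\, Y)\cdots \mathbb{P}(X\in \rd x_{k-1}\,|\, Y)\,\mathbb{P}(X\in \rd x \,|\, Y),
\]
which is manifestly the same integral as the one obtained from the left-hand side after a harmless relabeling of the $k$-th dummy variable.

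The main step requiring care is the justification of the conditional independence of $X^{(1)}, \dots, X^{(k-1)}$ from $X$ given $Y$; this follows directly from the definition of the replicated Gibbs measure, where the replicas are drawn independently from the posterior. I expect the only potential obstacle to be a measure-theoretic one: ensuring that $\mathbb{P}(X=\cdot\,|Y)$ admits a regular version (a regular conditional distribution), so that the above integrals against $\prod_{j}\mathbb{P}(X\in \rd x_j \,|\, Y)$ are well-defined. Under the standing assumption that $X$ and $Y$ take values in Polish spaces (here $\mathbb{R}^{n_1}$ and $\mathbb{R}^{n_2}$), such a regular conditional distribution exists, and $g$ being continuous and bounded ensures all integrals converge absolutely, allowing Fubini's theorem to be applied without issue. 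Thus both sides reduce to the same iterated integral, completing the proof.
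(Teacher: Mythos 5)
Your proof is correct and is exactly the standard elementary argument that the paper refers to (via its citation) rather than reproducing: condition on $Y$, observe that given $Y$ the tuples $(X^{(1)},\dots,X^{(k)})$ and $(X^{(1)},\dots,X^{(k-1)},X)$ have the same conditional law (since $X$ is conditionally independent of the replicas given $Y$ and shares the same conditional distribution $\mathbb{P}(X=\cdot\mid Y)$), and conclude by the tower property. Your remarks on regular conditional distributions and Fubini are appropriate but not strictly necessary given that $X,Y$ take values in Euclidean space and $g$ is continuous and bounded.
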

By Proposition~\ref{prop:nishimori}, we have
\begin{align}
 \mathrm{MMSE}_\mathrm{diag} &= \frac{1}{m^2} \EE \Bigg\{ \sum_{\mu=1}^m \big(S^\star_{\mu \mu}\big)^2 - \langle S_{\mu \mu} \rangle^2 \Bigg\}.
\end{align}
Using the simple variance bound $\EE X^2 \geq (\EE X)^2$, and again the Nishimori identity, we have:
\begin{align*}
 \mathrm{MMSE}_\mathrm{diag} &\leq \frac{1}{m^2} \sum_{\mu=1}^m \Big\{\EE\big[\big(S^\star_{\mu \mu}\big)^2\big] - \big(\EE \langle S_{\mu \mu} \rangle \big)^2 \Big\}, \\
 &\leq \frac{1}{m^2} \sum_{\mu=1}^m \Big\{\EE\big[\big(S^\star_{\mu \mu}\big)^2\big] - \big(\EE S^\star_{\mu \mu} \big)^2 \Big\}, \\
 &\leq \frac{1}{m^2 n} \sum_{\mu=1}^m \Big\{\sum_{i,j}\EE \big[(X^\star_{\mu i})^2 (X^\star_{\mu j})^2\big] - \sum_{i,j} \EE \big[(X^\star_{\mu i})^2\big] \EE \big[(X^\star_{\mu j})^2\big] \Big\}, \\
 &\leq \frac{1}{m^2 n} \sum_{\mu=1}^m \{n^2 + 2n - n^2\} \leq \frac{2}{m},
\end{align*}
since $\EE[(X^\star_{\mu i})^2] = 1$. This ends the proof of eq.~\eqref{eq_app:mmse_diagonal_vanishing}.
\end{proof}

\subsection{Proof of the I-MMSE theorem for denoising}\label{subsec_app:I-MMSE}

We shall derive eq.~\eqref{eq:I_mmse_denoising}, making use several times of the Nishimori identity (Proposition~\ref{prop:nishimori}).
We start with:
\begin{align}
  \frac{\partial \Psi[\rho_\bY,\rho_\bS]}{\partial \Delta^{-1}} &= \frac{\alpha \Delta}{4} + \frac{1}{4 n m} \sum_{\mu,\nu} \Big\{-2 m \Big[\EE (S^\star_{\mu \nu})^2 - \EE (\langle S_{\mu \nu} \rangle^2)\Big] 
  + \sqrt{\Delta m} \EE [Z_{\mu \nu} \langle S_{\mu \nu} \rangle]\Big\}.
\end{align}
Moreover we have (taking into account both the $\mu \leq \nu$ and $\mu > \nu$ terms:)
\begin{align}
 \EE [Z_{\mu \nu} \langle S_{\mu \nu} \rangle] &= \sqrt{\frac{m}{\Delta}} \Big[\EE (S^\star_{\mu \nu})^2 - \EE (\langle S_{\mu \nu} \rangle^2)\Big].
\end{align}
So in the end we reach:
\begin{align}
  \frac{\partial \Psi[\rho_\bY,\rho_\bS]}{\partial \Delta^{-1}} &= \frac{\alpha \Delta}{4} - \frac{1}{4 n} \sum_{\mu,\nu} \Big[\EE (S^\star_{\mu \nu})^2 - \EE (\langle S_{\mu \nu} \rangle^2)\Big] = 
  \frac{\alpha \Delta}{4} - \frac{\alpha}{4} \mathrm{MMSE}(\Delta),
\end{align}
which ends the proof.

\end{document}